\newcommand{\Order}{\mathcal{O}}
\newcommand{\z}{z}
\newcommand{\PHC}{\mathrm{PHC}}
\renewcommand{\Vec}[1]{\mbox{\boldmath $#1$}}
\newcommand{\visit}{{\rm visit}}
\newcommand{\PP}{$\mathcal{P}$}
\newcommand{\NP}{$\mathcal{NP}$}
\newtheorem{dfn}{Definition}[section]
\newtheorem{thm}[dfn]{Theorem}
\newtheorem{lem}[dfn]{Lemma}
\newtheorem{corollary}[dfn]{Corollary}
\newtheorem{prop}[dfn]{Proposition}
\newtheorem{alg}{Algorithm}
\newtheorem{problem}{Problem}
\title{The Parity Hamiltonian Cycle Problem}
\author[1]{Hiroshi Nishiyama}
\author[2]{Yusuke Kobayashi}
\author[1]{Yukiko Yamauchi}
\author[1]{\\Shuji Kijima}
\author[1]{Masafumi Yamashita}
\affil[1]{
  Graduate School of Information Science and Electrical Engineering, 

  Kyushu University, Fukuoka, 819-0395, Japan

{\ttfamily \{hiroshi.nishiyama, yamauchi, kijima, mak\}@inf.kyushu-u.ac.jp}
}
\affil[2]{
Faculty of Engineering, Information and Systems, 

University of Tsukuba, Tsukuba, Ibaraki, 305-8573, Japan

{\ttfamily  kobayashi@sk.tsukuba.ac.jp}
}
\begin{document}
\maketitle

\begin{abstract}
 Motivated by a relaxed notion of the celebrated {\em Hamiltonian cycle}, 
  this paper investigates its variant, 
   {\em parity Hamiltonian cycle} ({\em PHC}\/): 
 A PHC of a graph is a closed walk 
  which visits every vertex an odd number of times,
  where we remark that the walk may use an edge more than once. 
 First, we give a complete characterization of the graphs which have PHCs, and 
  give a linear time algorithm to find a PHC,  
   in which every edge appears at most four times, in fact. 
 In contrast, 
  we show that finding a PHC is {\NP}-hard 
  if a closed walk is allowed to use each edge at most $z$ times for each $z=1,2,3$ ($\PHC_z$ for short), 
  even when a given graph is two-edge connected.  
 We then further investigate the $\PHC_3$ problem, and
  show that the problem is in {\PP} when an input graph is four-edge connected. 
 Finally,
  we are concerned with three (or two)-edge connected graphs, and
 show that the $\PHC_3$ is in {\PP} 
  for any $C_{\geq 5}$-free or $P_6$-free graphs.
 Note that the Hamiltonian cycle problem 
  is known to be {\NP}-hard for those graph classes.

\smallskip
\noindent{\bf Keywords:} Hamiltonian cycle problem, $T$-join, graph factor
\end{abstract}

\section{Introduction}
 It is said that
  the graph theory has its origin in the seven bridges of
  K\"{o}nigsberg settled by Leonhard Euler~\cite{BLW}.
 An {\em Eulerian} cycle, named after him in modern terminology,
  is a cycle which uses every edge exactly once, and  
  now it is well-known that 
  a connected undirected graph has an Eulerian cycle
  if and only if every vertex has an even degree.
 A {\em Hamiltonian} cycle (HC), a similar but completely different
notion, is a cycle which visits every vertex exactly once.
 In contrast to the clear characterization of an Eulerian graph,
   the question if a given graph has a Hamiltonian cycle
   is a celebrated NP-complete problem due to Karp~\cite{Karp}.
 The HC problem is widely interested in computer science or mathematics,
  and has been approached with several variant or related problems.
 The traveling salesman problem (TSP) in a graph 
  is a problem to minimize the length of a walk 
  which visits every vertex at least once, instead of exactly once. 
 Another example may be a two-factor (in cubic graphs), 
  which is a collection of cycles covering every vertex exactly once, 
  meaning that the connectivity of the HC is relaxed~(cf.~\cite{Hartvigsen,HL11,BSvS,BIT13}).

 It could be a natural idea for the HC problem
  to modify the condition on the visiting number keeping the connectivity condition.
 This paper proposes 
  the {\em parity Hamiltonian cycle} ({\em PHC}\/), which is
  a variant of the Hamiltonian cycle:
  a PHC is a closed walk 
  which visits every vertex an odd number of times (see Section~\ref{sec:prelim},  for more rigorous description). 
 Note that 
  a closed walk is allowed to use an edge more than once. 
 The PHC problem is to decide if a given graph has a PHC. 
 We remark that 
  another version of the problem which is to find a closed walk visiting each vertex an even number of times is trivial: 
  find a spanning tree and trace it twice.

 \begin{table}[t]
	\begin{center}
		\caption{Time complexity of the PHC problem.}
		\label{tab:PHC-complexity}
		{\renewcommand\arraystretch{1.1}
		 \tabcolsep = 2mm
		\begin{tabular}{|c|c@{~}l|} \hline
			Each edge is used at most $\z$ times& Complexity &\\ \hline\hline
			$\z \geq 4$ & {\PP} &(Thm.~\ref{thm:PHC4_P}) \\ \hline
			$\z = 3$ & {\NP}-complete &(Thm.~\ref{thm:PHC3_NPC}) $\Rightarrow$ see Table~\ref{tab:PHC3-complexity-edge-connectivity}\\ \hline 
			$\z = 2$ & {\NP}-complete &(Thm.~\ref{thm:PHC2_NPC})\\ \hline 
			$\z = 1$ & {\NP}-complete &(Thm.~\ref{thm:PHC1_NPC})\\ \hline
		\end{tabular}
		}
	\end{center}
\end{table}
 \begin{table}[t]
	\begin{center}
		\caption{Time complexity of the $\PHC_3$ problem.}
		\label{tab:PHC3-complexity-edge-connectivity}
		{
		 \tabcolsep = 2mm
		\begin{tabular}{|l|c|c|} \hline
			Edge connectivity & Complexity  & Refinement by graph classes \\ \hline\hline
			4-edge connected & \multicolumn{1}{c}{\PP~(Thm.~\ref{thm:PHC3-4EC})} &  \\ \hline
			3-edge connected & unknown  & See Fig.~\ref{fig:hierarchy_of_classes} \\ \cline{1-2}
			2-edge connected & {\NP}-complete (Thm.~\ref{thm:PHC3_NPC})& (Section~\ref{sec:PHC3})   \\ \hline
			1-edge connected & {\NP}-complete& Reduced to 2-edge connected\\
&   (from 2-edge connected case)& (Prop.~\ref{prop:nec_PHC_bridge},~\ref{prop:suf_PHC_bridge})\\ \hline
		\end{tabular}
		}
	\end{center}
 \end{table}

 It may not be trivial if the PHC problem is in {\NP}, 
  since the length of a PHC is unbounded in the problem. 
 This paper firstly shows in Section~\ref{sec:easy4} that
  the PHC problem is in {\PP}, in fact. 
 Precisely, 
  we give a complete characterization of the graphs which have PHCs. 
 Furthermore, we show that 
  if a graph has a PHC then we can find a $\PHC_4$ in linear time, 
  where $\PHC_z$ for a positive integer~$z$ denotes a PHC which uses each edge at most $z$ times. 
 In contrast, Section~\ref{sec:hard123} shows that the ${\rm PHC}_z$ problem is {\NP}-complete 
  for each $z=1,2,3$ (see Table~\ref{tab:PHC-complexity})\footnote{ 
 Notice that 
   those hardness results are independent, 
   e.g., ``$z=3$ is hard'' does not imply ``$z=2$ is hard,'' and vice versa. 
}. 
 We then further investigate the $\PHC_3$ problem. 
 In precise, the $\PHC_3$ problem is {\NP}-complete for {\em two}-edge connected graphs, 
  while Section~\ref{sec:4-edge-connected} shows that it is solved in polynomial time for {\em four}-edge connected graphs. 
 The complexity of the $\PHC_3$ for {\em three}-edge connected graphs 
  remains unsettled (see Table~\ref{tab:PHC3-complexity-edge-connectivity}).

 As an approach to the ${\rm PHC}_3$ problem for three-edge connected graphs, 
  we in Section~\ref{sec:PHC3} utilize the celebrated {\em ear-decomposition}, 
  which is actually a well-known characterization of {\em two}-edge connected graphs. 
 Then, Section~\ref{sec:PHC3} shows that the $\PHC_3$ problem is in \PP\/ 
  for any two-edge connected $C_{\geq 5}$-free or $P_6$-free graphs 
  (see Section~\ref{sec:pre-graph} for the graph classes). 
 The classes of $C_{\geq 5}$-free or $P_6$-free contain some important graph classes 
  such as chordal, chordal bipartite and cograph. 
 We remark that it is known that the Hamiltonian cycle is {\NP}-complete 
  for $C_{\geq 4}$-free graphs, as well as $P_5$-free graphs~(cf.~\cite{BLS87}). 

 In precise, 
  we in Section~\ref{sec:PHC3} introduce a stronger notion of {\em all-roundness} 
  (and {\em bipartite all-roundness}) of a graph, 
  which is a sufficient condition that a graph has a PHC. 
 Catlin~\cite{Catlin} presented a similar notion of {\em collapsible} 
  in the context of spanning Eulerian subgraphs, and 
  the all-roundness is an extended notion of the collapsible. 
 Then, we show that 
  any two-edge connected $C_{\geq 5}$-free or $P_6$-free graphs are all-round or bipartite all-round. 
 We conjecture that 
  any two-edge connected $C_{\geq 7}$-free graphs are all-round, 
  while it seems not true for $C_{\geq 8}$-free nor $P_7$-free.

 \begin{figure}[tbp]
    \begin{center}
        \includegraphics[width=90mm]{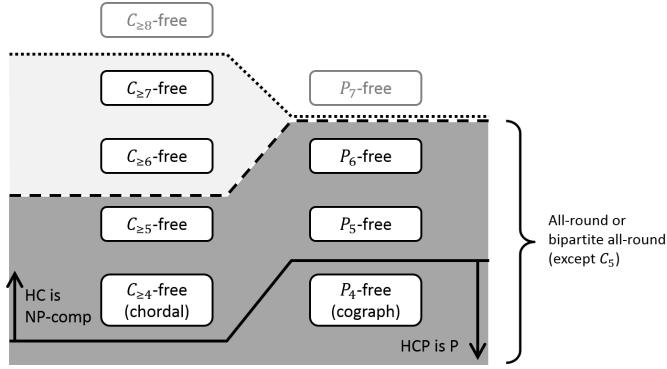}
     \end{center}
        \caption{$\PHC_3$ is in \PP\/ for $C_{\geq 5}$-free graphs and $P_6$-free graphs. 
 \label{fig:hierarchy_of_classes}}
 \end{figure}
 Section~\ref{sec:discussion} is for a miscellaneous discussion. 
 Section~\ref{sec:brdge} 
  extends the results for two-edge connected graphs in Section~\ref{sec:PHC3} to (one-edge) connected graphs. 
 In Section~\ref{sec:dense}, we remark that a dense graph is also all-round 
  using some techniques in Section~\ref{sec:PHC3}. 
 Before closing the paper, 
  Section~\ref{sec:connection} briefly discusses the connection 
  between the PHC and other problems, such as Hamiltonian cycle or Eulerian cycle, 
  regarding a generalized problem described in Section~\ref{sec:PHC3}.


 \paragraph{Related works}
 Here, we refer to the work by Brigham et al.~\cite{Brigham}, 
  which investigated a similar (or, essentially the same) problem. 
 Brigham et al.~\cite{Brigham}
  showed that any connected undirected graph has a parity Hamiltonian path or cycle 
  (see Section~\ref{sec:Brigham}). 
 Their proof was constructive, and 
  they gave a linear time algorithm based on the depth first search.
 As far as we know, 
  it is the uniquely known result on the problem\footnote{
    Very recently, we have investigated a PHC on directed graphs in~\cite{Nishiyama2} after this paper.  
}. 
 To be precise, 
  we remark that their argument does not imply that the PHC problem is in \PP: 
  for the purpose, we need an argument when a graph does not contain a parity Hamiltonian {\em cycle}. 

 Notice that
  the condition that a HC visits each vertex $1 \in \mathbb{R}$ times
  is replaced by $1 \in \mbox{GF(2)}$ times in the PHC.
 Modification of the field is found in some graph problems, 
  such as group-labeled graphs or nowhere-zero flows~\cite{JND12,Kochol02}.
 It was shown that the extension complexity of the TSP is
exponential~\cite{Yannakakis,FMP12,FMP15},
  while it is an interesting question if the PHC has an efficient
(extended) formulation over GF(2).

\section{Preliminary} \label{sec:prelim}
 This section introduces graph terminology of the paper. 
 First, we define the {\em parity Hamiltonian cycle} in Section~\ref{sec:pre-phc},  
  with an almost minimal introduction of a graph terminology. 
 Then, we in Section~\ref{sec:pre-graph} give some other terminology, 
  such as $T$-join, ear decomposition, graph classes, 
  which are commonly used in the graph theory. 
\subsection{Parity Hamiltonian cycle}\label{sec:pre-phc}
\subsubsection{Definition}
 An \emph{undirected simple graph} (simply we say ``{\em graph}'') $G = (V, E)$ is given by 
  a vertex set $V$ (or we use $V(G)$) and 
  an edge set $E \subseteq {V \choose 2}$ (or $E(G)$). 
 Let $\delta_G(v)$ denote the set of incident edges to $v$, and  
 let $d_G(v)$ denote the degree of a vertex $v$ in $G$, i.e., $d_G(v)=|\delta_G(v)|$. 
 We simply use $\delta(v)$ and $d(v)$ without a confusion.

 A {\em walk} is an alternating sequence of vertices and edges 
    $v_0 e_1 v_1 \cdots v_{\ell-1} e_{\ell} v_{\ell}$ 
   with an appropriate $\ell \in \mathbb{Z}_{\geq 0}$,
    such that $e_i = \{v_{i-1}, v_i\} \in E$ for each $i$ $(1 \leq i \leq \ell)$.
 Note that each vertex or edge may appear more than once in a walk. 
 A walk is {\em closed} if $v_\ell = v_0$.
A graph $G$ is {\em connected} 
 if there exists a walk from $u$ to $v$ for any pair of vertices $u, v \in V$.

 For a closed walk $w=v_0 e_1 \cdots e_{\ell} v_{\ell}$, 
  the {\em visit number} of $v \in V$, denoted by $\visit(v)$, 
  is the number that $v$ appears in the walk except for $v_0$ (since $v_0=v_\ell$). 
 A {\em parity Hamiltonian cycle} ({\em PHC} for short) of a graph $G$ is a closed walk
  in which $\visit(v) \equiv 1 \pmod{2}$ holds for each $v \in V$. 
 Remark again that an edge may appear more than once in a PHC $w$.
 Clearly, a graph must be connected to have a PHC, and this paper is basically concerned with connected graphs.

 An {\em edge count vector} $\Vec{x} \in \mathbb{Z}_{\geq 0}^E$ of a closed walk $w$ 
  is an integer vector where $x_e$ for $e \in E$ counts the number of occurrence of $e$ in $w$. 
 Remark that 
\begin{eqnarray}
 \visit(v) = \frac{1}{2}\sum_{e \in \delta(v)} x_e 
\label{eq:visit}
\end{eqnarray}
  holds for any closed walk. 
 Thus, we see that a PHC is a closed walk whose edge count vector $\Vec{x} \in \mathbb{Z}_{\geq 0}^E$ satisfies 
  the {\em parity condition} 
\begin{eqnarray}
 \sum_{e \in \delta(v)} x_e 
 \equiv 2 \pmod{4}
\label{eq:PHC}
\end{eqnarray}
  for each $v \in V$. 

\subsubsection{PHC as an Eulerian cycle of a multigraph}
 As given an arbitrary integer vector $\Vec{x} \in \mathbb{Z}_{\geq 0}^E$, 
  the parity condition \eqref{eq:PHC} is a necessary condition that $\Vec{x}$ is an edge count vector of a PHC. 
 In fact, 
  the following easy but important observation provides an if-and-only-if condition. 
\begin{prop}\label{prop:PHCcondition}
 Let $G=(V,E)$ be an undirected simple graph and 
 let $\Vec{x} \in \mathbb{Z}_{\geq 0}^E$ be an {\em arbitrary} integer vector. 
 Let $F = \{ e \in E \mid x_e > 0\}$, then,   
  $\Vec{x}$ is an edge count vector of a PHC in $G$ 
 if and only if $\Vec{x}$ satisfies \eqref{eq:PHC} and the subgraph $H=(V,F)$ of $G$ is connected. 
\end{prop}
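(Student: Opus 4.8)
My plan is to exploit the correspondence suggested by the section heading, between closed walks and Eulerian cycles of an auxiliary multigraph. Given $\Vec{x} \in \mathbb{Z}_{\geq 0}^E$, let $G_{\Vec{x}}$ denote the multigraph on vertex set $V$ obtained by taking $x_e$ parallel copies of each edge $e \in E$. A closed walk whose edge count vector is $\Vec{x}$ and which traverses each copy exactly once is precisely an Eulerian cycle of $G_{\Vec{x}}$, and its visit numbers are exactly those computed from $\Vec{x}$ via \eqref{eq:visit}. The whole statement then reduces to Euler's classical theorem that a multigraph has an Eulerian cycle if and only if it is connected (on its non-isolated vertices) and every vertex has even degree.

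For the only-if direction I would argue directly from the definitions. Suppose $\Vec{x}$ is the edge count vector of a PHC $w$. By definition $\visit(v) \equiv 1 \pmod 2$ for every $v$, so by \eqref{eq:visit} we have $\frac{1}{2}\sum_{e\in\delta(v)} x_e \equiv 1 \pmod 2$, i.e.\ $\sum_{e\in\delta(v)} x_e \equiv 2 \pmod 4$, which is exactly the parity condition \eqref{eq:PHC}. Moreover $\visit(v)\ge 1$ for every $v$, so every vertex occurs on the single closed walk $w$; since consecutive vertices on $w$ are joined by edges of $F$, the subgraph $H=(V,F)$ is connected.

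For the if direction I would assume $\Vec{x}$ satisfies \eqref{eq:PHC} and that $H=(V,F)$ is connected, and then build the PHC. The degree of $v$ in $G_{\Vec{x}}$ equals $\sum_{e\in\delta(v)} x_e$, which by \eqref{eq:PHC} is congruent to $2 \pmod 4$; in particular it is even and at least $2$, so $G_{\Vec{x}}$ has all even degrees and no isolated vertex. Since $G_{\Vec{x}}$ and $H$ induce the same reachability relation, connectivity of $H$ gives connectivity of $G_{\Vec{x}}$. Euler's theorem then furnishes an Eulerian cycle of $G_{\Vec{x}}$; its edge count vector is $\Vec{x}$ by construction, and by \eqref{eq:visit} its visit numbers satisfy $\visit(v)=\frac{1}{2}\sum_{e\in\delta(v)} x_e \equiv 1 \pmod 2$, so it is a PHC.

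The one point requiring care---which I expect to be the only real obstacle---is matching the hypotheses of Euler's theorem precisely. The parity condition \eqref{eq:PHC} alone is not sufficient (a disjoint union of even-degree components satisfies it), so the connectivity of $H$ is genuinely needed; conversely, the degree-at-least-$2$ consequence of \eqref{eq:PHC} is exactly what lets me dispose of the usual ``isolated vertices'' caveat in the statement of Euler's theorem. Everything else is bookkeeping through \eqref{eq:visit}.
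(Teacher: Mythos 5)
Your proof is correct and follows essentially the same route as the paper: both directions reduce to Euler's theorem applied to the multigraph $(G,\Vec{x})$, using \eqref{eq:visit} to translate between the parity condition \eqref{eq:PHC} and odd visit numbers. The only difference is that you spell out the `only if' direction and the even-degree/connectivity bookkeeping that the paper dismisses as immediate, which is fine.
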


 As a preliminary of the proof of Proposition~\ref{prop:PHCcondition}, 
  we introduce an {\em Eulerian cycle} of a {\em multigraph}. 
 For a simple graph $G=(V,E)$ and 
  any nonnegative integer vector $\Vec{x} \in \mathbb{Z}_{\geq 0}^E$, 
  let ${\cal E}_{\Vec{x}}$ be a {\em multiset} 
   such that $e \in E$ appears $x_e$ times in ${\cal E}_{\Vec{x}}$. 
 Then, let $(G,\Vec{x})$ represent a {\em multigraph} with a vertex set $V$ and a multiedge set ${\cal E}_{\Vec{x}}$. 
 Note that $(G,\Vec{1})=G$ where $\Vec{1} \in \mathbb{Z}_{\geq 0}^E$ denotes the all one vector. 
 We say $(G,\Vec{x})$ is connected if a simple graph $H=(V,F)$ is connected where $F=\{e \in E \mid x_e >0\}$. 
 An {\em Eulerian cycle} of $(G,\Vec{x})$ is a closed walk 
  which uses each element of the multiset ${\cal E}_{\Vec{x}}$ exactly once. 
 It is celebrated fact due to Euler~\cite{Euler} that 
 $(G,\Vec{x})$ has an Eulerian cycle if and only if 
 $(G,\Vec{x})$ is connected and $\Vec{x}$ satisfies the {\em Eulerian condition} 
\begin{eqnarray}
 \sum_{e \in \delta(v)} x_e 
 \equiv 0 \pmod{2}
\label{eq:Euler}
\end{eqnarray}
 holds for any $v \in V$. 

\begin{proof}[Proof of Proposition~\ref{prop:PHCcondition}]
 The `only if' part is easy from the definition. 
 We prove the `if' part. 
 Note that $\Vec{x}$ satisfies \eqref{eq:Euler} since $\Vec{x}$ satisfies \eqref{eq:PHC}. 
 Since $H$ is connected by the hypothesis, the multigraph $(G,\Vec{x})$ has an Eulerian cycle $w$. 
 Considering \eqref{eq:visit}, 
 it is not hard to see that $w$ is a PHC 
  since $\Vec{x}$ satisfies~\eqref{eq:PHC}. 
\end{proof}

 For convenience, 
  we say $\Vec{x} \in \mathbb{Z}_{\geq 0}^E$ {\em admits a PHC} in $G$
  if $\Vec{x}$ is an edge count vector of a PHC in~$G$. 
 In summery, Proposition~\ref{prop:PHCcondition} implies the following.  
\begin{corollary}\label{cor:PHCcondition}
 Let $G=(V,E)$ be an undirected simple graph and 
 let $\Vec{x} \in \mathbb{Z}_{\geq 0}^E$ be an {\em arbitrary} integer vector. 
 Then,   
  $\Vec{x}$ admits a PHC in $G$ 
 if and only if 
  $(G,\Vec{x})$ is connected and 
  $\Vec{x}$ satisfies \eqref{eq:PHC}. 
\end{corollary}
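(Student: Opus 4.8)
The plan is to observe that this Corollary is nothing more than a restatement of Proposition~\ref{prop:PHCcondition} in the terminology introduced immediately before it, so that the entire proof consists of unwinding two definitions. First I would recall the definition just given: the phrase ``$\Vec{x}$ admits a PHC in $G$'' means precisely that $\Vec{x}$ is an edge count vector of a PHC in $G$, which is exactly the predicate appearing on the left-hand side of the biconditional in Proposition~\ref{prop:PHCcondition}. So the left-hand sides of the two statements already agree without any work.

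Next I would substitute the connectivity definition. The phrase ``$(G,\Vec{x})$ is connected'' was defined to mean that the simple graph $H=(V,F)$ with $F=\{e \in E \mid x_e>0\}$ is connected, which is verbatim the connectivity condition stated in Proposition~\ref{prop:PHCcondition}. Conjoining this with the parity condition \eqref{eq:PHC}, the right-hand side of the Corollary's biconditional becomes literally identical to the right-hand side of Proposition~\ref{prop:PHCcondition}. Since both sides of the two biconditionals coincide after this purely notational substitution, the Corollary follows immediately from the Proposition.

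Because this is a pure terminological translation, there is no genuine mathematical obstacle. The only point worth checking is that the edge set $F=\{e \in E \mid x_e>0\}$ used in the definition of ``$(G,\Vec{x})$ is connected'' is the same set $F$ that appears in the hypothesis of Proposition~\ref{prop:PHCcondition}, which it is by construction; hence no additional argument is required.
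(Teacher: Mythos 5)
Your proposal is correct and matches the paper's treatment exactly: the paper states the Corollary with the remark ``In summary, Proposition~\ref{prop:PHCcondition} implies the following,'' offering no further argument, precisely because the definitions of ``admits a PHC'' and of connectivity of $(G,\Vec{x})$ make the Corollary a verbatim restatement of Proposition~\ref{prop:PHCcondition}. Your careful check that the set $F=\{e \in E \mid x_e>0\}$ coincides in both statements is the only detail worth verifying, and you handled it.
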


\subsubsection{PHC with an edge constraint}
 As we repeatedly remarked, 
  a PHC may use an edge more than once. 
 For convenience, 
  let ${\rm PHC}_{\z}$ for $z\in \mathbb{Z}_{>0}$ denote a PHC using each edge at most $\z$ times.

\subsection{Other graph terminology}\label{sec:pre-graph}
 This subsection introduces some other graph terminology which we will use in this paper. 
\subsubsection{Fundamental notations}
 A {\em simple path} is a walk $w=v_0e_1v_1e_2 \cdots e_{\ell}v_{\ell}$ 
  which visits every vertex (and hence every edge) at most once, 
  where $\ell \geq 0$ is the {\em length of the path $w$}. 
 Similarly, 
  a {\em simple cycle} is a closed walk $w=v_0e_1v_1e_2 \cdots e_{\ell}v_0$ 
  which visits every vertex at most once, 
  where $\ell \geq 0$ is the {\em length of the cycle~$w$}. 
 An {\em odd cycle} is a simple cycle of odd length. 

 Let $G=(V,E)$ be a graph. 
 For an {\em edge} subset $F \subseteq E$, 
  let $G-F$ denote a graph $H=(V,E \setminus F)$. 
 For a vertex subset $S \subseteq V$, 
  let $G - S$ denote the subgraph induced by $V \setminus S$, i.e., 
  $G - S$ is given by 
   deleting from~$G$ all vertices of $S$ and all edges incident to $S$. 
 For convenience, 
  we simply use $G-e$ for $e \in E$ instead of $G-\{e\}$, and 
  $G-v$ for $v \in V$ as well.  
 For a pair of graphs $G$ and $H$, 
  let $G+H = (V(G) \cup V(H), E(G) \cup E(H))$. 

\subsubsection{$T$-join}
 Let $G=(V,E)$ be a graph and let $T$ be a subset of $V$ such that $|T|$ is even.
 Then, $J \subseteq E$ is a {\em $T$-join}
  if the graph $H = (V,J)$ satisfies
 \begin{equation}
 d_H(v) \equiv \left\{
 \begin{array}{ll}
    1 \pmod{2}\ & \mbox{if } v \in T, \\
    0 \pmod{2}\ & \mbox{otherwise,} 
 \end{array}
 \right. 
 \end{equation}
 for any $v \in V$~\cite{Schrijver}.
 Notice that a graph $H'=(T,J)$ may not be connected, in general. 
 A $T$-join is a generalized notion of a matching,
  meaning that $J$ is a matching when all edges in $J$ are disjoint.
\begin{thm}[cf.~\cite{KV}]\label{thm:T-join}
 For any connected simple graph $G=(V,E)$ and 
  for any $T \subseteq V$ satisfying that $|T|$ is even, 
  $G$ contains a $T$-join. 
\end{thm}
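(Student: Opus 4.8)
The plan is to realize a $T$-join as the symmetric difference of $|T|/2$ paths, one per pair in an arbitrary pairing of the vertices of $T$. The guiding principle is that the map sending an edge set to its set of odd-degree vertices turns symmetric difference of edge sets into symmetric difference of vertex sets, so any overlaps between the paths cause no trouble.

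Concretely, I would first use that $|T|$ is even to partition $T$ into pairs $\{s_1,t_1\},\dots,\{s_k,t_k\}$ with $k=|T|/2$. Since $G$ is connected, for each $i$ there is a simple path $P_i$ from $s_i$ to $t_i$, and I set $J=P_1\triangle\cdots\triangle P_k$ (viewing each $P_i$ as its edge set and writing $\triangle$ for symmetric difference), i.e.\ $J$ consists of the edges lying in an odd number of the $P_i$.

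The key step is the parity identity $d_{A\triangle B}(v)\equiv d_A(v)+d_B(v)\pmod 2$ for all edge sets $A,B\subseteq E$ and all $v\in V$; it holds because an edge incident to $v$ contributes to $d_{A\triangle B}(v)$ exactly when it lies in an odd number of $A,B$, and $[\,n\text{ odd}\,]\equiv n\pmod 2$. Iterating gives $d_J(v)\equiv\sum_{i=1}^k d_{P_i}(v)\pmod 2$. Each $P_i$ has degree $1$ at $s_i,t_i$ and degree $0$ or $2$ elsewhere, so the right-hand side counts modulo $2$ the number of pairs having $v$ as an endpoint; since the pairs are disjoint and their union is exactly $T$, this is $1$ for $v\in T$ and $0$ otherwise. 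Hence $J$ is a $T$-join.

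I expect the only real subtlety to be the parity identity above: it is precisely what a naive \emph{union} of the paths would lack, since the paths may share edges and vertices so that degrees do not simply add. Once this identity (equivalently, that the odd-degree-vertex-set map is a homomorphism from $(2^E,\triangle)$ to $(2^V,\triangle)$) is in hand, the remainder is bookkeeping. An essentially equivalent presentation is induction on $|T|$: strip off one pair $\{s,t\}$, apply the statement to the smaller even set $T\setminus\{s,t\}$ to obtain a join $J'$, take a path $P$ from $s$ to $t$, and set $J=J'\triangle P$; the same identity shows the odd-degree vertex set of $J$ is $(T\setminus\{s,t\})\triangle\{s,t\}=T$.
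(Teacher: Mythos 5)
Your proof is correct: the parity identity $d_{A\triangle B}(v)\equiv d_A(v)+d_B(v)\pmod 2$ is exactly the right tool, each simple path has odd degree only at its two endpoints, and since your pairs partition $T$, the symmetric difference $J=P_1\triangle\cdots\triangle P_k$ has odd degree precisely on $T$. However, this is not the route the paper takes. The paper states the theorem with a citation to Korte--Vygen and supplies, in its appendix, a depth-first-search algorithm (with an ``exchange flag'' that is flipped at each first visit to a vertex of $T$, and edges toggled into $J$ by symmetric difference while the flag is TRUE); its correctness proof views the DFS as a closed walk in the graph with every edge doubled and tracks the parity of $|\delta(v)\cap J|$ via the flag inversions. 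The two arguments buy different things. Your pairing-of-$T$ construction is the classical, self-contained existence proof, but implemented naively it requires finding $|T|/2$ paths and costs roughly $O(|T|\cdot|V|)$ time; the paper's DFS construction is tailored to establish the $O(|V|+|E|)$ bound stated immediately after the theorem, which the paper actually needs for its linear-time $\mathrm{PHC}_4$ algorithm (Corollary~3.4). So your proof establishes the existence claim of Theorem~2.2 completely, but if you wanted to recover everything the paper uses, you would still need to argue the linear-time computability separately --- e.g., by replacing the arbitrary paths with paths in a single spanning tree, or by adopting a one-pass DFS scheme like the paper's.
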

  A $T$-join is found in $O(|V|+|E|)$ time (see Section~\ref{alg-Tjoin}). 

\subsubsection{Edge connectivity}
 A graph is {\em $k$-edge connected} for a positive integer $k$ 
  if the graph remains connected after removing arbitrary $k-1$ edges. 
 A {\em $k$-edge connected component} $H$ of $G$ 
  is a maximal induced subgraph of $G$ such that $H$ is $k$-edge connected. 

 The {\em ear decomposition} is a cerebrated characterization of two-edge connected graphs. 
 An {\em ear} $w = v_0e_1v_1e_2\cdots e_{\ell}v_{\ell}$ of a graph $G$ is 
  a simple path ($v_0=v_{\ell}$ may hold) of length at least one 
  where $v_0$ and $v_\ell$ 
  are in the same two-edge connected component of $G-w$ and 
  $d(v_i)=2$ for each $i=1,\ldots,\ell-1$.  
 A cycle graph, which consists of a simple cycle only, is two-edge connected. 
 It is not difficult to see that 
  any two-edge connected graph, except for a cycle graph, has an ear. 
 By the definition, 
  a graph deleting an ear $w$ except for $v_0$ and $v_{\ell}$ from a two-edge connected graph $G$ is again two-edge connected 
  unless $G$ is a cycle graph. 
 Recursively deleting ears from a two-edge connected graph $G$, we eventually obtain a cycle graph. 
 The sequence of ears in the operation is called {\em ear decomposition} of $G$.
 The following fact is well-known. 
\begin{thm}[cf.~\cite{Schrijver}]
 A graph $G$ is two-edge connected if and only if $G$ has an ear decomposition. 
\end{thm}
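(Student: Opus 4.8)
The plan is to prove the two implications separately, handling the forward direction by repeated ear \emph{deletion} and the reverse direction by repeated ear \emph{addition}, so that the two halves are essentially mirror images of each other.

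For the forward direction (if $G$ is two-edge connected then $G$ has an ear decomposition), I would argue by induction on $|E(G)|$. The base case is a cycle graph, which by definition already constitutes an ear decomposition (with the empty sequence of ears). For the inductive step, suppose $G$ is two-edge connected but not a cycle graph. The first subtask is to exhibit an ear: since $G$ is two-edge connected it has minimum degree at least two (a degree-one vertex would give a bridge), and since it is connected but not a single cycle it must contain a vertex $v_0$ of degree at least three. Starting from an edge incident to $v_0$ and extending greedily through degree-two vertices, I obtain a maximal path $w=v_0 e_1 v_1 \cdots e_\ell v_\ell$ whose internal vertices have degree exactly two and whose endpoints either have degree at least three or coincide; a short argument shows $v_0,v_\ell$ lie in the same two-edge connected component of $G-w$, so $w$ is an ear. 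Deleting $w$ (keeping $v_0,v_\ell$) yields a two-edge connected graph with fewer edges, to which the induction hypothesis applies; appending $w$ to its ear decomposition gives one for $G$.

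For the reverse direction (if $G$ has an ear decomposition then $G$ is two-edge connected), I would induct on the number of ears, building $G$ up from the initial cycle graph $G_0$ (which is two-edge connected) by successively forming $G_{i+1}=G_i+w_{i+1}$ where $w_{i+1}$ is an ear with endpoints $v_0,v_\ell\in V(G_i)$. The key lemma is that adding such an ear to a two-edge connected graph preserves two-edge connectivity. To verify it I take an arbitrary edge $e$ of $G_{i+1}$ and show $G_{i+1}-e$ is connected by a case split: if $e\in E(G_i)$, then $G_i-e$ is connected because $G_i$ is two-edge connected, and the internal vertices of $w$ remain attached through $v_0$ (and $v_\ell$); if $e$ is an edge of the ear, removing it splits $w$ into at most two subpaths, each of which is still joined to the intact, connected graph $G_i$ via one of the endpoints $v_0,v_\ell$ (in the degenerate case $v_0=v_\ell$ the ear minus $e$ is a single path attached at that common vertex). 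In either case $G_{i+1}-e$ is connected, so $G_{i+1}$ is two-edge connected, completing the induction.

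The step I expect to be the main obstacle is the existence of an ear in the forward direction, and specifically the verification that the greedily constructed maximal degree-two path has both endpoints in a common two-edge connected component of $G-w$; this is the point where the two-edge connectivity of $G$ is genuinely used. The two preservation arguments, under deletion and under addition, are then routine connectivity case analyses, and indeed the forward direction largely restates the informal discussion preceding the theorem.
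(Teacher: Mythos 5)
Your reverse direction is sound: building $G$ up from a cycle by ear additions and checking, for each edge $e$, that $G_{i+1}-e$ stays connected is exactly the routine argument needed, and it correctly uses only that the ear's endpoints lie in $V(G_i)$. The forward direction, however, has a genuine gap at precisely the step you flagged as needing only ``a short argument'' (the step the paper itself dismisses with ``it is not difficult to see''): it is \emph{not} true that a greedily grown maximal degree-two path is an ear. Concretely, let $G$ consist of two disjoint triangles on $\{a,c,d\}$ and $\{b,f,g\}$, the edge $\{c,f\}$, and a path $a\,m\,b$ through a new vertex $m$. This graph is two-edge connected (it is connected and every edge lies on a cycle) and is not a cycle graph. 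Starting your greedy construction at the degree-three vertex $a$ along the edge $\{a,m\}$ yields the maximal path $w=a\,m\,b$, whose unique internal vertex $m$ has degree two and whose endpoints $a,b$ have degree three; yet $G-w$ is the two triangles joined by the single bridge $\{c,f\}$, so $a$ and $b$ lie in \emph{different} two-edge connected components of $G-w$, and $w$ is not an ear in the paper's sense (by contrast, $a\,d\,c$ is an ear of this $G$). So a deletable ear cannot be obtained by taking an arbitrary maximal degree-two path; which of these paths are deletable depends on global structure, and no local argument at the path itself can close this.

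The standard way to close the gap is to run the induction in the opposite direction. Let $H$ be a maximal subgraph of $G$ obtainable from some cycle of $G$ by successive ear additions; your reverse-direction lemma shows every such $H$ is two-edge connected. If $H\neq G$, then either some edge of $G$ has both ends in $V(H)$ and can be added as an ear of length one, or, by connectivity of $G$, some edge $\{u,x\}$ has $u\in V(H)$ and $x\notin V(H)$; since $G$ minus the edge $\{u,x\}$ is still connected, a shortest path in that graph from $x$ back to $V(H)$, prefixed by the edge $\{u,x\}$, is a path whose endpoints lie in $H$ and whose internal vertices do not, i.e.\ an addable ear. Either case contradicts maximality, so $H=G$. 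Reversing this addition sequence yields exactly the deletion sequence demanded by the paper's definition: the last-added path has internal vertices of degree two in the current graph (nothing was attached to them later), and its endpoints lie in the graph that remains after deletion, which is two-edge connected and hence a single two-edge connected component. Your deletion-based induction then goes through, provided the ear it deletes is taken to be the last ear of such a sequence rather than an arbitrary greedy path.
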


\subsubsection{Graph classes}
 Let $P_n$ ($n \geq 2$) denote a graph consisting of a simple path with $n$ vertices. 
 Notice that the {\em length} of the path $P_n$ is $n-1$. 
 Let $C_n$ ($n \geq 3$) denote a cycle graph with $n$ vertices.  
 A graph is {\em $P_k$-free} (resp.\ {\em $C_k$-free}) if it does not contain $P_k$ (resp.\ $C_k$) as an induced subgraph. 
 A graph is {\em $C_{\geq k}$-free}\footnote{
  $C_{\geq k}$-free is often denoted by $C_{n+k}$-free~\cite{BLS87}. 
} if $G$ is $C_{k'}$-free for all $k' \geq k$. 
 Clearly, 
  any $P_k$-free graph is also $P_{k+1}$-free, as well as 
  any $C_{\geq k}$-free graph is also $C_{\geq k+1}$-free.  
 We can also observe that 
  any $P_k$-free graph is $C_{\geq k+1}$-free. 
 However, 
  any $C_{\geq k}$-free is not included in $P_{l}$-free for any $l$, 
  since a tree, clearly $C_{\geq 3}$-free, admits a path of any length. 

 Many important graph classes 
 are known to be characterized as $P_k$-free or  $C_{\geq k}$-free. 
 For instance, 
  {\em cographs} is equivalent to $P_4$-free, 
  {\em chordal} is equivalent to $C_{\geq 4}$-free, and 
  {\em chordal bipartite} is $C_{\geq 6}$-free bipartite (cf.~\cite{BLS87})\footnote{
  Here, we omit the definitions of cograph, chordal and chordal bipartite. 
  This paper requires the properties of $P_k$-free or $C_{\geq k}$-free, only. 
}.

\section{Computational Complexity of The ${\rm PHC}$ Problems}\label{sec:z}
 It may not be trivial if the PHC problem is in {\NP}, 
  since the length of a closed walk is unbounded in the problem. 
 Section~\ref{sec:easy4} completely characterizes the graphs which have PHCs, and 
  shows that the PHC problem is in \PP. 
 Furthermore, 
  if a graph has a PHC then we can find a $\PHC_4$ in linear time. 
 In contrast, Section~\ref{sec:hard123} shows that the ${\rm PHC}_z$ problems is {\NP}-hard for each $z=1,2,3$. 
 Section~\ref{sec:4-edge-connected} further investigates the ${\rm PHC}_3$ problem, and 
  shows that the problem is in \PP\/ for four-edge connected graphs. 
\subsection{The characterization of the graphs which have PHCs}\label{sec:easy4}
 To begin with, we give an if-and-only-if characterization of graphs which have PHCs. 
\begin{thm} \label{thm:suf_and_nec_PHC}
 A connected graph $G=(V,E)$ contains a $\PHC$ 
  if and only if the order $|V|$ is even or $G$ is non-bipartite. 
\end{thm}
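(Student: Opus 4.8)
The plan is to work entirely with edge count vectors and invoke Corollary~\ref{cor:PHCcondition}: a connected graph $G$ has a PHC if and only if there is some $\Vec{x}\in\mathbb{Z}_{\geq 0}^E$ whose support $F=\{e:x_e>0\}$ spans a connected subgraph and which satisfies the parity condition $\sum_{e\in\delta(v)}x_e\equiv 2\pmod 4$ for every $v\in V$. Thus both directions reduce to reasoning about such vectors, and I never need to exhibit the closed walk itself.

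For the `only if' direction I would argue by contraposition. Assume $G$ is bipartite with parts $A,B$ and suppose some $\Vec{x}$ admits a PHC. Since every edge has exactly one endpoint in $A$, summing the parity condition over $A$ gives $\sum_{v\in A}\sum_{e\in\delta(v)}x_e=\sum_{e\in E}x_e$, and likewise over $B$. Reducing modulo $4$, the two left-hand sides are $\equiv 2|A|$ and $\equiv 2|B|$, so $2|A|\equiv 2|B|\pmod 4$, forcing $|A|-|B|$ (hence $|V|=|A|+|B|$, having the same parity) to be even. Therefore a bipartite graph carrying a PHC has even order; equivalently, a bipartite graph of odd order admits no PHC, which is the contrapositive of necessity.

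For the `if' direction I would construct $\Vec{x}$ starting from a spanning tree. Take any spanning tree $T_0$ and set $x_e=2$ on its edges; the visit number of $v$ is then $d_{T_0}(v)$, which is odd exactly off the set $W$ of even-degree tree vertices. The key tool is that adding $2$ to every edge of an $S$-join $J$ flips the visit-parity at exactly $S$: it changes $\sum_{e\in\delta(v)}x_e$ by $2d_J(v)\equiv 2\pmod 4$ precisely when $d_J(v)$ is odd, i.e.\ when $v\in S$. Such an $S$-join exists for every even set $S$ by Theorem~\ref{thm:T-join}, and crucially the support still contains $T_0$, so it stays connected. When $|V|$ is even, a degree count shows $|W|$ is even, so a doubled $W$-join repairs every parity simultaneously.

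The remaining, and main, obstacle is the case $|V|$ odd, where $|W|$ is necessarily odd and no $T$-join can correct an odd number of vertices. Here I would use the hypothesis that $G$ is non-bipartite to fix an odd cycle $C$, and observe that adding $1$ to each of its edges flips the visit-parity at the odd-sized vertex set $V(C)$, since each cycle vertex sees a change of $2\equiv 2\pmod 4$. Combining the two operations I set $S=W\triangle V(C)$, which is even, then add a doubled $S$-join together with a single traversal of $C$; the net set of flipped vertices is $S\triangle V(C)=W$, so every visit number becomes odd while the support still contains $T_0$ and is connected. A short parity check of $2d_{T_0}(v)+2d_J(v)+d_C(v)\pmod 4$ (with $d_C(v)\in\{0,2\}$) confirms the condition at every vertex, and Corollary~\ref{cor:PHCcondition} then yields the PHC.
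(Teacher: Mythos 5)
Your proof is correct, and it reaches the result by a noticeably different construction than the paper's, even though it shares the core machinery: Corollary~\ref{cor:PHCcondition} to reduce everything to edge count vectors, Theorem~\ref{thm:T-join} for join existence, an odd cycle to handle odd order, and essentially the same bipartite counting argument for necessity (the paper phrases it with visit numbers, you with the sums $\sum_{e\in\delta(v)}x_e$; by \eqref{eq:visit} these differ only by a factor of two). The difference is in how sufficiency secures connectivity. The paper makes connectivity automatic by putting positive weight on \emph{every} edge: $x_e=2$ on a $V$-join (or on a $T$-join with $T=V\setminus V(C)$ in the odd case) and $x_e=4$ elsewhere, then $+1$ along the odd cycle; there is no spanning tree and no parity-defect set to track. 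You instead anchor connectivity on a doubled spanning tree $T_0$ and keep the rest sparse, repairing the set $W$ of even-degree tree vertices with a doubled $W$-join (even case) or a doubled $(W\triangle V(C))$-join plus a single traversal of $C$ (odd case); your case analysis of $2d_{T_0}(v)+2d_J(v)+d_C(v)\pmod 4$ checks out in all four membership patterns. Notably, your symmetric-difference device is exactly the one the paper saves for Theorem~\ref{thm:PHC3-4EC}, where two edge-disjoint spanning trees (one doubled for connectivity, the other hosting the join) yield the stronger $\PHC_3$ conclusion for four-edge connected graphs. As for what each buys: both constructions only bound edge multiplicities by $5$ (your worst case is an edge lying in $T_0$, the join, and $C$ simultaneously, giving $2+2+1$), so both need the same postprocessing as in Theorem~\ref{thm:suf_and_nec_PHC4} to reach a $\PHC_4$; your vector is supported on far fewer edges, while the paper's dense assignment is simpler to verify and carries over verbatim to the $S$-odd walks of Theorem~\ref{crl:suf_and_nec_arbitPCW}.
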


 \begin{proof}
 We show the `if' part in a constructive way.  
 First, we are concerned with the case that $|V|$ is even. 
 Let $J \subseteq E$ be a $V$-join of $G$, 
  which always exists by Theorem~\ref{thm:T-join}. 
 Let $\Vec{x} \in \mathbb{Z}_{\geq 0}^E$ be given by 
 \begin{equation}
 x_e = \left\{
 \begin{array}{ll}
 2\ & \mbox{ if } e \in J, \\
 4\ & \mbox{ otherwise.}
 \end{array}
 \right.
\label{assign1-even}
\end{equation}
 Then, 
  $\Vec{x}$ satisfies the parity condition \eqref{eq:PHC}, 
  that is $\sum_{e \in \delta(v)} x_e \equiv 2 \pmod{4}$, for each vertex $v \in V$ 
  since $J$ is a $V$-join of $G$. 
 Clearly, $(G,\Vec{x})$ is connected since $G$ is connected and $x_e \geq 1$ for any $e\in E$. 
 By Corollary~\ref{cor:PHCcondition}, 
  $\Vec{x}$ admits a PHC in $G$. 

\begin{figure}[tbp]
 \begin{center}
\begin{tabular}{c @{\hspace{2em}} c @{\hspace{2em}} c}
	\includegraphics[width=45mm]{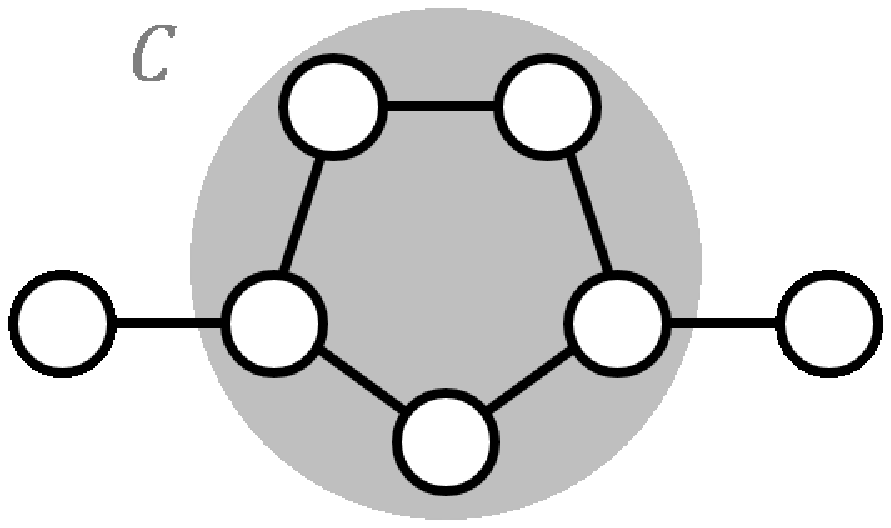} &
	\includegraphics[width=45mm]{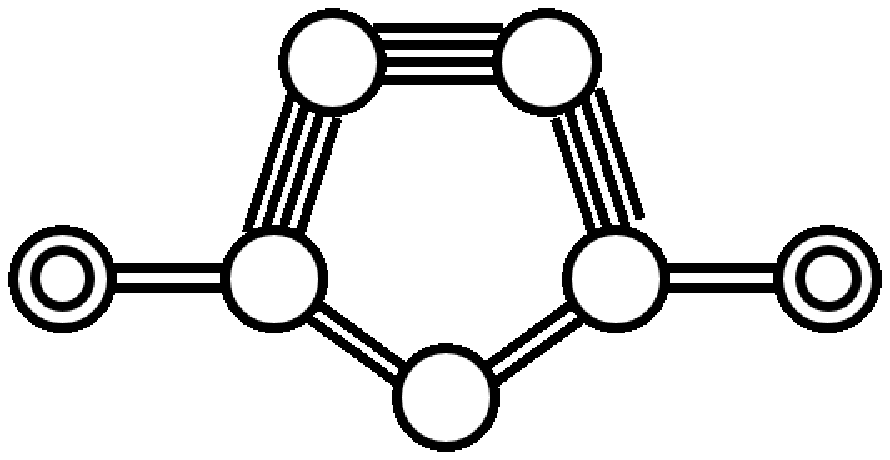} &
	\includegraphics[width=45mm]{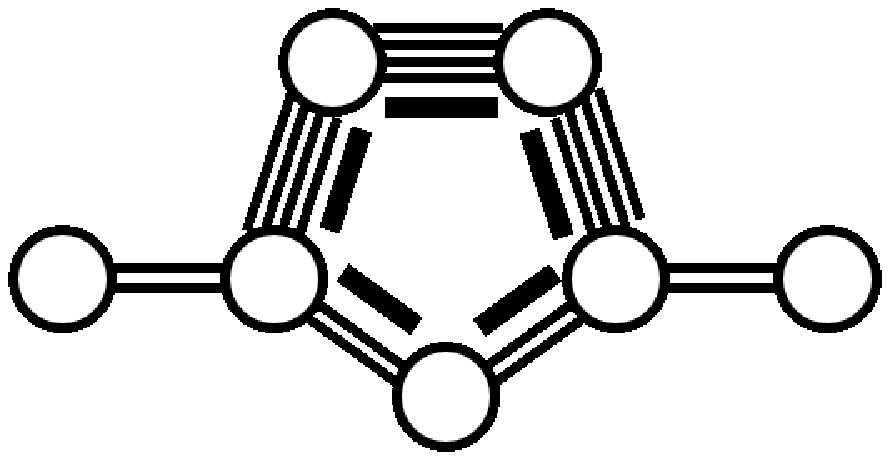} \\
	& & \\
	(a) & (b) & (c)
\end{tabular}
 \end{center}
 \caption{Example for the proof of Theorem~\ref{thm:suf_and_nec_PHC} when $|V|$ is odd and $G$ is non-bipartite. 
  (a) Given graph $G$, and an odd cycle $C$, 
  (b) $\Vec{x}'$ given by \eqref{assign1-tmp1} where double circles denote $T$, 
  (c) $\Vec{x}''$ given by~\eqref{assign1-tmp2}. 
 \label{fig:proof.3.1}}
\end{figure}
 Next, we are concerned with the case that $|V|$ is odd and $G$ is non-bipartite.
 As a preliminary step, 
  let $C$ be an arbitrary odd cycle of $G$. 
 Let $T = V \setminus V(C)$, where notice that $|T|$ is even.
 Using Theorem~\ref{thm:T-join}, 
  let $J \subseteq E$ be a $T$-join. 
 Let $\Vec{x}' \in \mathbb{Z}_{\geq 0}^E$ be given by 
 \begin{equation}
 x'_e = \left\{
 \begin{array}{ll}
 2\ & \mbox{ if } e \in J, \\
 4\ & \mbox{ otherwise.} 
 \end{array}
 \right.
 \label{assign1-tmp1}
 \end{equation}
 Then, $\Vec{x}'$ satisfies the parity condition exactly for each $v \in T$, i.e., 
\begin{eqnarray*}
  \sum_{e \in \delta(v)} x'_e \equiv \left\{
\begin{array}{ll}
 2 \pmod{4} & \mbox{ if } v \in T, \\
 0 \pmod{4} & \mbox{ otherwise, }
\end{array}\right.
\end{eqnarray*}
  hold since $J$ is a $T$-join of $G$.  
 Recalling that $V \setminus T = V(C)$, 
  we modify  $\Vec{x}'$ to $\Vec{x}''$ by adding $E(C)$, i.e., 
 \begin{equation}
 x''_e = \left\{
 \begin{array}{ll}
 x'_e+1 \ & \mbox{ if } e \in E(C), \\ 
 x'_e \ & \mbox{ otherwise. } 
 \end{array}
 \right.
 \label{assign1-tmp2}
 \end{equation}
 This modification increases the visit number of each vertex of $V \setminus T$ exactly by one, and hence  
  $\Vec{x}''$ satisfies the parity condition \eqref{eq:PHC} for each $v \in V$. 
 Clearly, the multigraph $(G,\Vec{x}'')$ is connected, and 
  $\Vec{x}''$ admits a PHC by Corollary~\ref{cor:PHCcondition}.  

 Finally, we show the `only if' part. 
 Suppose that $G=(U,V;E)$ is a bipartite graph with an odd order. 
 Without loss of generality, we may assume that $|U|$ is odd, and hence $|V|$ is even. 
 Assume for a contradiction that $G$ has a PHC. 
 Since $\visit(v)$ of a PHC is odd for each $v \in U \cup V$, 
  $\sum_{v \in U} {\rm visit}(v)$ is odd and 
  $\sum_{v \in V} {\rm visit}(v)$ is even, respectively.  
 On the other hand, 
 any closed walk of $G$ satisfies that
 $\sum_{u \in U} {\rm visit}(u) = \sum_{v \in V} {\rm visit}(v)$ 
 since $G$ is bipartite.  
 Contradiction. 
 \end{proof}

\subsubsection{The $\PHC_4$ problem}\label{sec:phc4}
 The proof of Theorem~\ref{thm:suf_and_nec_PHC} implies a PHC using every edge at most {\em five} times: 
  if $G$ is even order then $\Vec{x}$ given by \eqref{assign1-even} uses every edge at most four times, while 
  if $G$ is odd order and nonbipartite 
   then $\Vec{x}''$ given by  \eqref{assign1-tmp1} and \eqref{assign1-tmp2}
   uses every edge at most five times. 
 An enhancement of Theorem~\ref{thm:suf_and_nec_PHC} is given as follows.  
\begin{thm}\label{thm:suf_and_nec_PHC4}
 A connected graph $G=(V,E)$ contains a $\PHC_{4}$ 
  if and only if the order $|V|$ is even or $G$ is non-bipartite. 
\end{thm}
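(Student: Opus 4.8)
The plan is to reuse the constructive proof of Theorem~\ref{thm:suf_and_nec_PHC} almost verbatim and then repair only the single defect that can push an edge count above four. The `only if' direction requires no new work: a $\PHC_4$ is in particular a PHC, so if $G$ is bipartite of odd order then it has no PHC by Theorem~\ref{thm:suf_and_nec_PHC}, and hence no $\PHC_4$.

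For the `if' direction I would split into the two cases of the earlier proof. When $|V|$ is even, the vector $\Vec{x}$ defined in \eqref{assign1-even} already satisfies $x_e \in \{2,4\}$, so it uses every edge at most four times and is a $\PHC_4$ as it stands; nothing further is needed here.

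The only genuine work is the odd, non-bipartite case, where the vector $\Vec{x}''$ built from \eqref{assign1-tmp1} and \eqref{assign1-tmp2} can reach $x''_e = 5$, precisely on the edges $e \in E(C)$ with $x'_e = 4$ (the cycle edges outside the $T$-join $J$), while all other counts lie in $\{2,3,4\}$. The key observation is that decreasing an edge count by $4$ changes $\sum_{e \in \delta(v)} x_e$ by a multiple of $4$ at each of the two endpoints of that edge, and therefore preserves the parity condition \eqref{eq:PHC}; it also keeps the count positive, since $5-4=1$. I would thus define $\hat{\Vec{x}}$ by setting $\hat{x}_e = 1$ whenever $x''_e = 5$ and $\hat{x}_e = x''_e$ otherwise. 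Then every $\hat{x}_e$ lies in $\{1,2,3,4\}$, the parity condition still holds at every vertex, and since $x''_e \geq 2$ for all $e$ we have $\hat{x}_e \geq 1$ for all $e$; hence the support of $\hat{\Vec{x}}$ is all of $E$ and $(G,\hat{\Vec{x}})$ is connected. Corollary~\ref{cor:PHCcondition} then yields a PHC whose edge count vector is $\hat{\Vec{x}}$, and by construction it uses each edge at most four times, as required.

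I do not anticipate a real obstacle: the argument is essentially a one-line correction of the previous construction. The only mild point to verify is that performing the $-4$ adjustment simultaneously on all offending edges is harmless, which holds because each adjustment independently leaves every vertex sum unchanged modulo $4$ and no count drops below one, so neither the parity condition \eqref{eq:PHC} nor the connectivity of $(G,\hat{\Vec{x}})$ is affected.
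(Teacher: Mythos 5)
Your proposal is correct and follows essentially the same route as the paper: the paper's own proof modifies $\Vec{x}''$ by resetting every edge with $x''_e=5$ to $x'''_e=1$ (a decrease by $4$, exactly your adjustment), notes that this preserves the parity condition \eqref{eq:PHC} and keeps all counts positive so that $(G,\Vec{x}''')$ stays connected, and concludes via Corollary~\ref{cor:PHCcondition}. Your additional remarks (the even case needing no change, and the only-if direction following from Theorem~\ref{thm:suf_and_nec_PHC}) match what the paper leaves implicit.
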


\begin{proof}
 Notice that \eqref{assign1-tmp1} and \eqref{assign1-tmp2}
  imply that $x''_e \in \{2,3,4,5\}$. 
 Then, we modify $\Vec{x}''$ to $\Vec{x}''' \in \{1,2,3,4\}^E$ by setting
 \begin{eqnarray}
 x'''_e = 1
 \label{assign1-tmp4}
 \end{eqnarray}
  for each $e \in E$ satisfying $x''_e = 5$. 
 This modification preserves 
  the parity condition \eqref{eq:PHC} 
  for each $v \in V$. 
 Clearly, $x_e \geq 1$ for each $e \in E$, meaning that $(G,\Vec{x})$ is connected. 
 By Corollary~\ref{cor:PHCcondition}, we obtain the claim. 
\end{proof}

 We remark that the construction in the proofs of 
  Theorems~\ref{thm:suf_and_nec_PHC} and~\ref{thm:suf_and_nec_PHC4} 
   implies the following fact, 
  since a $T$-join is found in linear time by Theorem~\ref{thm:T-join}.  
\begin{corollary} \label{thm:PHC4_P}
  The ${\rm PHC}_z$ problem for any $z \geq 4$ is solved in linear time. 
\qed
\end{corollary}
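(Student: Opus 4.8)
The plan is to treat this corollary as a complexity bookkeeping layer on top of the constructive characterizations already proved. First I would pin down the decision part: for any $z \geq 4$, a connected graph $G=(V,E)$ has a $\PHC_z$ if and only if $|V|$ is even or $G$ is non-bipartite. The `if' direction is immediate from Theorem~\ref{thm:suf_and_nec_PHC4}, since a $\PHC_4$ uses each edge at most $4 \leq z$ times and is therefore a $\PHC_z$. For the `only if' direction, any $\PHC_z$ is in particular a PHC (simply forget the edge bound), so the `only if' part of Theorem~\ref{thm:suf_and_nec_PHC} forces $|V|$ to be even or $G$ to be non-bipartite. Hence one and the same predicate governs $\PHC_z$ for every $z \geq 4$.

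Next I would verify that the predicate ``$G$ is connected, and $|V|$ is even or $G$ is non-bipartite'' is testable in $O(|V|+|E|)$ time. Connectivity and bipartiteness are both decided by a single graph traversal (BFS or DFS), and the parity of $|V|$ is read off directly. This already settles the decision version in linear time.

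For the construction I would run the constructive proof of Theorem~\ref{thm:suf_and_nec_PHC4} and charge the cost of each step. In the even case we compute a $V$-join $J$ and set $\Vec{x}$ by \eqref{assign1-even}; in the odd non-bipartite case we extract an odd cycle $C$, compute a $(V \setminus V(C))$-join, assign $\Vec{x}''$ via \eqref{assign1-tmp1} and \eqref{assign1-tmp2}, and finally cap every entry equal to $5$ down to $1$ as in \eqref{assign1-tmp4}. A $T$-join is found in $O(|V|+|E|)$ time by Theorem~\ref{thm:T-join}, and the remaining vector manipulations touch each edge a constant number of times, costing $O(|E|)$. If one wants the closed walk itself rather than its edge count vector, that is an Eulerian-cycle computation on $(G,\Vec{x})$ whose multiedge total is $O(|E|)$ because each $x_e \leq 4$, hence again linear.

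The one step that is not completely immediate, and where I would be most careful, is extracting an odd cycle in linear time in the non-bipartite case. The standard device is to reuse the bipartiteness-testing traversal itself: a BFS that fails to $2$-colour $G$ exhibits an edge $\{u,v\}$ whose endpoints lie on the same level, and the two tree paths from $u$ and from $v$ up to their lowest common ancestor, together with the edge $\{u,v\}$, close up into a cycle of odd length; recovering this cycle from the BFS tree costs $O(|V|+|E|)$. Thus no cost is incurred beyond the traversal already performed, and the whole procedure runs in linear time.
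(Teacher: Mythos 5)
Your proposal is correct and follows essentially the same route as the paper: the corollary is obtained by running the constructive proofs of Theorems~\ref{thm:suf_and_nec_PHC} and~\ref{thm:suf_and_nec_PHC4} and observing that the $T$-join step is linear by Theorem~\ref{thm:T-join}. You additionally spell out the pieces the paper leaves implicit (linear-time bipartiteness/connectivity testing, odd-cycle extraction from a failed $2$-colouring, and the linear Eulerian-tour computation on $(G,\Vec{x})$ with $x_e\leq 4$), all of which are correct.
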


\if0
 The constructions in the proof of Theorem~\ref{thm:suf_and_nec_PHC} are summarized as follows. 
 Since a $T$-join is found in linear time by Theorem~\ref{thm:T-join}, 
  a ${\rm PHC}_4$ is found in linear time when it exists. 

\begin{alg}[When $G$ is an even order]
\label{alg:even}
{\normalfont \ \\ 
1. Find a $V$-join $J \subseteq E$. \\
2. Construct the muligraph  $(V,E,x)$ where $x$ is given by \eqref{assign1-even}. \\
3. Output an Eulerian cycle of the multigraph $(V,E,x)$. 
}\end{alg}

\begin{alg}[When $G$ is non-bipartite with an odd order]
\label{alg:odd}
{\normalfont \ \\ 
1. Find an odd cycle $C$, and set $T=V \setminus V(C)$. \\
2. Find a $T$-join $J \subseteq E$. \\
3. Construct the muligraph  $(V,E,x)$ where $x$ is given by \eqref{assign1-odd}. \\
4. Output an Eulerian cycle of the multigraph $(V,E,x)$. 
}\end{alg}
\fi

\subsubsection{A related topic: a maximization version of the PHC problem}\label{sec:Brigham}
 For a graph which does not have a PHC, 
  a reader may be interested in 
  a maximization version of the problem. 
 The following theorem answers the question. 
 \begin{thm}[cf.~\cite{Brigham}] \label{lem:can_visit_V-1}
 Every connected graph has a closed walk visiting all vertices 
  such that 
  the walk visits at least $|V|-1$ vertices an odd number of times and  
  uses each edge at most four times. 
 \end{thm}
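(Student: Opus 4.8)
The plan is to reduce the statement at once to the single troublesome case and dispose of everything else by the construction already in hand. By Theorem~\ref{thm:suf_and_nec_PHC4}, whenever $|V|$ is even or $G$ is non-bipartite the graph admits a $\PHC_4$, i.e.\ a closed walk that uses each edge at most four times and visits \emph{every} vertex an odd number of times; such a walk visits all $|V|$ vertices oddly, which is at least the required $|V|-1$, so the claim holds immediately. Thus the only case needing genuine work is when $G$ is bipartite and $|V|$ is odd --- precisely the case in which, by the `only if' part of Theorem~\ref{thm:suf_and_nec_PHC}, no PHC exists at all.

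For that case I would fix an arbitrary vertex $v_0 \in V$ and set $T = V \setminus \{v_0\}$. Since $|V|$ is odd, $|T|$ is even, so Theorem~\ref{thm:T-join} yields a $T$-join $J \subseteq E$. I would then define $\Vec{x} \in \mathbb{Z}_{\geq 0}^E$ by $x_e = 2$ for $e \in J$ and $x_e = 4$ otherwise --- the same $\{2,4\}$-assignment used in the proof of Theorem~\ref{thm:suf_and_nec_PHC}, but now hung on a $T$-join that spares $v_0$. The defining degree conditions of a $T$-join make $d_J(v)$ odd for $v \in T$ and even for $v_0$, so a direct computation modulo $4$ gives $\sum_{e \in \delta(v)} x_e \equiv 2 \pmod{4}$ for every $v \in T$ and $\sum_{e \in \delta(v_0)} x_e \equiv 0 \pmod{4}$; via \eqref{eq:visit} this means $\visit(v)$ is odd for all $|V|-1$ vertices of $T$ and even for $v_0$.

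It remains to turn $\Vec{x}$ into an actual closed walk with the stated properties. Because every entry of $\Vec{x}$ is even, the Eulerian condition \eqref{eq:Euler} holds trivially, and because $x_e \geq 2 > 0$ for all $e$ the multigraph $(G,\Vec{x})$ inherits connectivity from $G$; hence it has an Eulerian cycle, a closed walk realizing the edge counts $\Vec{x}$ (this is the multigraph Eulerian-cycle characterization recalled just before Corollary~\ref{cor:PHCcondition}). Each edge is used $x_e \leq 4$ times, and since $\visit(v) = \tfrac12\sum_{e \in \delta(v)} x_e \geq d(v) \geq 1$ for every vertex, the walk indeed visits all vertices, with $v_0$ visited an even (hence $\geq 2$) number of times and all others oddly.

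I expect the only real subtlety --- more a point to get right than a true obstacle --- to be the parity bookkeeping that pins the single even-visit vertex onto $v_0$: one must check that relaxing the parity requirement at exactly one vertex is both necessary (in a bipartite graph the two color classes are visited equally often in total, so all $|V|$ vertices cannot be visited oddly when $|V|$ is odd) and sufficient, and that the $\{2,4\}$-assignment keeps the edge bound at four with none of the trimming that Theorem~\ref{thm:suf_and_nec_PHC4} required.
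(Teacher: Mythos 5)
Your proposal is correct and follows essentially the same route as the paper: the paper derives this theorem from Theorem~\ref{thm:suf_and_nec_PHC4} together with the more general $S$-odd walk result (Theorem~\ref{crl:suf_and_nec_arbitPCW}) applied with $|S|=|V|-1$, and the even-$|S|$ case of that theorem is exactly your construction --- a $T$-join on $T=V\setminus\{v_0\}$ with the $\{2,4\}$-assignment followed by an Eulerian cycle of the resulting multigraph. You have simply inlined the specialization instead of stating the general $S$-odd walk lemma, which changes nothing of substance.
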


 Theorem~\ref{lem:can_visit_V-1} is suggested by Brigham et al.~\cite{Brigham}, 
  in which they gave a linear time algorithm based on the depth first search.
 We here show a slightly generalized claim using a $T$-join, in an approach different from~\cite{Brigham}. 
 Given a graph $G=(V,E)$ and $S \subseteq V$, 
  an {\em $S$-odd walk} is a closed walk of $G$ which 
  visits every vertex of $S$ an odd number of times and 
  visits every other vertex an even number of times. 
 Clearly, a $V$-odd walk is a PHC of $G$. 
\begin{thm} \label{crl:suf_and_nec_arbitPCW}
 For any graph $G$ and any $S \subseteq V(G)$, 
  $G$ contains an $S$-odd walk 
  if and only if
  $|S|$ is even {\em or}
  $G$ is non-bipartite. 
 Furthermore, we can find an $S$-odd walk which uses each edge at most {\em four} times.
\end{thm}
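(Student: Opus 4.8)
The plan is to generalize the edge-count-vector viewpoint of Corollary~\ref{cor:PHCcondition} from PHCs to $S$-odd walks, and then to construct an explicit vector by the same $T$-join-plus-odd-cycle recipe used in Theorems~\ref{thm:suf_and_nec_PHC} and~\ref{thm:suf_and_nec_PHC4}. Concretely, I would first record the following analogue of Corollary~\ref{cor:PHCcondition}: if a vector $\Vec{x}\in\mathbb{Z}_{\geq 0}^E$ satisfies the generalized parity condition $\sum_{e\in\delta(v)}x_e\equiv 2\pmod 4$ for $v\in S$ and $\sum_{e\in\delta(v)}x_e\equiv 0\pmod 4$ for $v\notin S$, and $(G,\Vec{x})$ is connected, then $\Vec{x}$ admits an $S$-odd walk. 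This follows exactly as Proposition~\ref{prop:PHCcondition}: the condition forces the Eulerian condition~\eqref{eq:Euler}, so $(G,\Vec{x})$ has an Eulerian cycle $w$, and by~\eqref{eq:visit} the visit number $\visit(v)=\frac12\sum_{e\in\delta(v)}x_e$ is odd precisely when $v\in S$, so $w$ is an $S$-odd walk. This reduces the `if' part to exhibiting a suitable $\Vec{x}$ with entries in $\{1,2,3,4\}$.

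For the `if' part when $|S|$ is even, I would take an $S$-join $J\subseteq E$, which exists by Theorem~\ref{thm:T-join}, and set $x_e=2$ for $e\in J$ and $x_e=4$ otherwise. Since $d_J(v)$ is odd exactly for $v\in S$, a direct count gives $\sum_{e\in\delta(v)}x_e\equiv 2d_J(v)\pmod 4$, which is $2\pmod 4$ for $v\in S$ and $0\pmod 4$ otherwise; as every $x_e\geq 2$, the multigraph $(G,\Vec{x})$ is connected. This already uses each edge at most four times, so the generalized corollary finishes this case.

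The main idea lies in the case $|S|$ odd with $G$ non-bipartite. Here I would fix an odd cycle $C$ and set $T=S\,\triangle\,V(C)$; because $|S|$ and $|V(C)|$ are both odd, $|T|=|S|+|V(C)|-2|S\cap V(C)|$ is even, so a $T$-join $J$ exists. Starting from $\Vec{x}'$ defined by $2$ on $J$ and $4$ elsewhere (which makes the odd-visit set exactly $T$), I would add the odd cycle by incrementing $x'_e$ by one on $E(C)$; this flips the visit parity of every vertex of $V(C)$, so the new odd-visit set becomes $T\,\triangle\,V(C)=S$, as required. The resulting entries lie in $\{2,3,4,5\}$, and replacing each value $5$ by $1$ (as in the proof of Theorem~\ref{thm:suf_and_nec_PHC4}) preserves all residues modulo $4$ while keeping every entry positive, hence keeping $(G,\Vec{x})$ connected; the final vector lies in $\{1,2,3,4\}^E$. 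I expect verifying that this choice of $T$ is exactly the one whose parities land on $S$ after the cycle flip to be the only subtle bookkeeping step.

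Finally, for the `only if' part I would argue by the bipartite parity obstruction. Suppose $G=(U,W;E)$ is bipartite and $|S|$ is odd, and assume an $S$-odd walk exists. Every closed walk of a bipartite graph alternates between $U$ and $W$, so $\sum_{u\in U}\visit(u)=\sum_{w\in W}\visit(w)$ and hence $\sum_{v\in V}\visit(v)$ is even. On the other hand, an $S$-odd walk has $\visit(v)$ odd for $v\in S$ and even otherwise, so $\sum_{v\in V}\visit(v)\equiv|S|\equiv 1\pmod 2$, a contradiction. This covers the only case excluded by the stated condition, completing the equivalence.
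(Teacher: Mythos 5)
Your proposal is correct and follows essentially the same route as the paper: an $S$-join with values $2$/$4$ when $|S|$ is even, and, when $|S|$ is odd, a $T$-join for $T = S \,\triangle\, V(C)$ with an odd cycle $C$, adding one unit along $E(C)$ and reducing the resulting $5$'s to $1$'s (the paper simply writes these final values $1,2,3,4$ case by case), with the same bipartite parity obstruction for the converse. Your explicit statement of the $S$-odd analogue of Corollary~\ref{cor:PHCcondition} is left implicit in the paper, but the underlying Eulerian-cycle argument is identical.
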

 Theorem~\ref{crl:suf_and_nec_arbitPCW} (and Theorem~\ref{thm:suf_and_nec_PHC4}) implies Theorem~\ref{lem:can_visit_V-1}, 
   by arbitrarily letting $S \subset V$ satisfy $|S|=|V|-1$ 
   for a bipartite graph $G=(V,E)$ with an odd order. 
 We will use Theorem~\ref{crl:suf_and_nec_arbitPCW} 
  in the proof of Lemma~\ref{prop:bipartite_feasibility} in Section~\ref{sec:all-round}. 

\begin{proof}[Proof of Theorem~\ref{crl:suf_and_nec_arbitPCW}]
 The `only-if' part is (essentially) the same as Theorem~\ref{thm:suf_and_nec_PHC}. 
 The `if' part is also similar to Theorems~\ref{thm:suf_and_nec_PHC} and~\ref{thm:suf_and_nec_PHC4}, as follows. 
 When $|S|$ is even, 
  let $J$ be a $S$-join of~$G$. 
 Let $\Vec{x} \in \mathbb{Z}_{\geq 0}^E$ be given by 
 \begin{equation*}
 x_e = \left\{
 \begin{array}{ll}
 2\ & \mbox{ if } e \in J \\
 4\ & \mbox{ otherwise, }
 \end{array}
 \right.
 \end{equation*}
 then $(G,\Vec{x})$ has an Eulerian cycle, 
  which in fact an $S$-odd walk since $J$ is a $S$-join of $G$.  
 
 When $|S|$ is odd and $G$ is non-bipartite, 
  let $C$ be an odd cycle of $G$. 
 Let 
\begin{eqnarray*}
 T = \{ v \in V(G) \setminus V(C) \mid v \in S \} \cup \{ v \in V(C) \mid v \not\in S \},
\end{eqnarray*}
  and let $J'$ be a $T$-join of $G$. 
 Let $\Vec{x}' \in \mathbb{Z}_{\geq 0}^E$ be given by 
 \begin{equation*}
 x'_e = \left\{
 \begin{array}{ll}
 2\ & \mbox{ if } e \in J \mbox{ and } e \notin E(C), \\
 3\ & \mbox{ if } e \in J \mbox{ and } e \in E(C), \\
 4\ & \mbox{ if } e \notin J \mbox{ and } e \notin E(C), \\
 1\ & \mbox{ if } e \notin J \mbox{ and } e \in E(C), 
 \end{array}
 \right.
 \end{equation*}
 then $\visit(v)$ is odd for each vertex in $S$ since $J'$ is a $T$-join, 
 while $\visit(v)$ is even for others. 
\end{proof}

 \subsection{The ${\rm PHC}_z$ problems when $z=1,2,3$}\label{sec:hard123}
 In Section~\ref{sec:easy4}, 
   we have established that the $\PHC_z$ problem for any $z \geq 4$ is polynomial time solvable. 
 This section shows that 
  the $\PHC_z$ problem is {\NP}-complete for each $z \in \{1,2,3\}$. 
 Remark that the following 
  Theorems~\ref{thm:PHC1_NPC}, \ref{thm:PHC2_NPC}, and \ref{thm:PHC3_NPC} are independent, 
  e.g., the fact that $\PHC_3$ is {\NP}-complete does not imply the fact that $\PHC_2$ is {\NP}-complete, and vice versa.  
 \begin{thm}  \label{thm:PHC1_NPC}
 The ${\rm PHC}_1$ problem is {\NP}-complete.
 \end{thm}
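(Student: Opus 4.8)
The plan is to obtain membership in \NP\ directly from Corollary~\ref{cor:PHCcondition}, and to prove \NP-hardness by a reduction from the Hamiltonian cycle problem restricted to \emph{cubic} graphs. For membership, I would observe that a $\PHC_1$ uses each edge at most once, so its edge-count vector $\Vec{x}$ lies in $\{0,1\}^E$ and is equivalently described by the edge subset $F = \{e \in E \mid x_e = 1\}$. By Corollary~\ref{cor:PHCcondition}, such an $\Vec{x}$ admits a PHC if and only if $(G,\Vec{x})$ is connected and $\Vec{x}$ satisfies the parity condition~\eqref{eq:PHC}; in terms of $F$ this reads: $(V,F)$ is connected and $d_F(v) \equiv 2 \pmod{4}$ for every $v \in V$. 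Both properties are checkable in polynomial time, so $F$ is a polynomial-size certificate and $\PHC_1 \in \NP$.

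For hardness, I would reduce from the Hamiltonian cycle problem on cubic graphs, which is known to be \NP-complete. The reduction is the identity map: given a cubic graph $G$, I claim that $G$ has a $\PHC_1$ if and only if $G$ has a Hamiltonian cycle. The key point is that in a cubic graph $d_F(v) \leq 3$ for any edge subset $F$, so among the possible values $\{0,1,2,3\}$ the parity condition $d_F(v) \equiv 2 \pmod{4}$ forces $d_F(v) = 2$ for every vertex $v$. Hence, combined with connectivity, any $\PHC_1$ of $G$ corresponds to a connected spanning $2$-regular subgraph, i.e., a Hamiltonian cycle. Conversely, a Hamiltonian cycle yields an $F$ that is connected with all degrees equal to $2$, which satisfies~\eqref{eq:PHC} and therefore admits a $\PHC_1$ by Corollary~\ref{cor:PHCcondition}. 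Since the transformation is trivially polynomial, this establishes \NP-hardness, and together with membership, \NP-completeness.

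The conceptual crux---what I expect to be the only real step beyond routine bookkeeping---is the identification of a graph class in which the modular condition ``$\equiv 2 \pmod 4$'' collapses to the exact degree condition ``$=2$''; cubic graphs accomplish this with no auxiliary gadgetry, which is what makes the reduction so direct. Finally, I would note that choosing $G$ from the $3$-connected cubic graphs, for which the Hamiltonian cycle problem remains \NP-complete, strengthens the result: since such graphs are two-edge connected (indeed three-edge connected), this shows that $\PHC_1$ is \NP-complete even when the input graph is two-edge connected, matching the claim stated in the introduction.
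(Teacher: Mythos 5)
Your proposal is correct and takes essentially the same approach as the paper: the paper also reduces from the Hamiltonian cycle problem on (three-edge connected planar) cubic graphs, observing that on cubic graphs the condition $d_F(v) \equiv 2 \pmod{4}$ collapses to $d_F(v)=2$, so $\PHC_1$ coincides with HC. Your write-up merely spells out the membership certificate and this degree argument, which the paper leaves as ``not difficult to see.''
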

 \begin{proof}
 It is known that
  the HC problem is {\NP}-complete even
  when a given graph is three-edge connected planar cubic~\cite{Garey}.
 It is not difficult to see that
   the PHC problem with $\z=1$ for a cubic graph is exactly same as the HC problem.
 \end{proof}

 \begin{thm}  \label{thm:PHC2_NPC}
 The $\PHC_2$ problem is {\NP}-complete, even when a given graph is two-edge connected. 
 \end{thm}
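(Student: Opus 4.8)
The plan is to prove both that the $\PHC_2$ problem lies in \NP\ and that it is \NP-hard, the latter by a reduction from the Hamiltonian cycle problem. Membership in \NP\ is the easy half. By Corollary~\ref{cor:PHCcondition}, a $\PHC_2$ is completely described by its edge count vector $\Vec{x}$, and for $\PHC_2$ we have $\Vec{x}\in\{0,1,2\}^E$, a certificate of size $O(|E|)$. Verifying that $\Vec{x}$ satisfies the parity condition \eqref{eq:PHC} at every vertex and that the support $(G,\Vec{x})$ is connected takes linear time, so the $\PHC_2$ problem is in \NP.

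For hardness I would reduce from HC on cubic graphs, which is \NP-complete (the same source used in Theorem~\ref{thm:PHC1_NPC}, where $\PHC_1$ on cubic graphs coincides with HC). The difficulty specific to $z=2$ is that, unlike the $\PHC_1$ case, edges may now be used twice, and this creates many spurious solutions: for instance, doubling every edge already satisfies \eqref{eq:PHC} whenever every degree is odd, so a raw cubic graph is always a ``yes'' instance. Hence the reduction cannot be the identity; it must introduce gadgets, and in particular it must place \emph{even-degree} vertices so that blanket doubling violates \eqref{eq:PHC}. A convenient viewpoint is the decomposition $S=\{e:x_e=1\}$ and $D=\{e:x_e=2\}$: feasibility of $\Vec{x}$ amounts to $S$ being an even subgraph, $D$ being disjoint from $S$ with $|\delta(v)\cap D|\equiv 1+\tfrac12\deg_S(v)\pmod 2$ at every $v$ (so $D$ is a $T$-join for $T=\{v:\deg_S(v)\equiv 0\pmod 4\}$), and $S\cup D$ connected and spanning. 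The basic building block I would use is the degree-$2$ ``transmission'' vertex: such a vertex forces its two incident edges to carry values summing to $2$, i.e.\ complementary values in $\{0,1,2\}$. Chaining these along a path propagates a chosen value, and the global connectivity requirement forbids the all-doubled shortcut on that path (it breaks the path into locally doubled dead-ends that disconnect the support).

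Correctness would be argued in the two standard directions. For ($\Leftarrow$), given a Hamiltonian cycle of the cubic instance, I would set the edges of $G'$ along the traversed route to value $1$ and fill the remaining gadget edges so as to meet \eqref{eq:PHC}, obtaining a feasible $\Vec{x}\in\{0,1,2\}^E$ with connected support. For ($\Rightarrow$), given any $\PHC_2$ of $G'$, I would read off at each vertex-gadget which ports are ``active'' and show, using the transmission constraints together with connectivity, that the active ports trace a single spanning cycle of the original graph, hence a Hamiltonian cycle. Finally I would verify that $G'$ is two-edge connected directly from the construction, by ensuring every edge lies on a small gadget cycle so that no bridge is created.

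The main obstacle is the ($\Rightarrow$) direction combined with the two-edge-connectivity requirement: I must design the gadgets so that the extra freedom of $z=2$ cannot be abused. Concretely, I need to rule out that some mixture of doubled edges and locally doubled components satisfies \eqref{eq:PHC} and keeps the support connected without actually encoding a Hamiltonian cycle. Pinning $\Vec{x}$ down to the intended cycle, while simultaneously guaranteeing the absence of bridges, is the delicate part; the degree-$2$ transmission vertices, together with a careful balance of even- and odd-degree gadget vertices (so that both the ``double everything'' and the ``double a single edge'' escapes fail either \eqref{eq:PHC} or connectivity), are the tools I expect to need.
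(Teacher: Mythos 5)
Your overall strategy---a reduction from HC on cubic graphs, degree-2 ``transmission'' vertices obtained by subdividing edges, and vertex gadgets designed to kill the spurious doubled solutions---is exactly the strategy of the paper's proof, and your preliminary observations are sound: membership in NP via the edge count vector is fine, your $S/D$ reformulation of the parity condition is correct, and you correctly note that a raw cubic graph is always a yes-instance, so gadgets are unavoidable. However, the proposal stops precisely where the actual proof begins: you never exhibit the vertex gadget, and you explicitly defer the $(\Rightarrow)$ direction (``the delicate part,'' ``the tools I expect to need''). That forcing argument is the entire mathematical content of the theorem, so this is a genuine gap rather than an omitted routine verification.

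Concretely, what is missing is the following construction and counting argument. The paper subdivides each edge $e=\{u,v\}$ of the cubic graph $G$ into a path of length three (your transmission vertices) and attaches a cycle of length four at each original vertex $v$ (Figure~\ref{fig:k2-gadget}); note this also makes $H$ two-edge connected for free. The attached 4-cycle does two jobs at once. First, its three new vertices have degree 2, so any $\PHC_2$ visits each of them exactly once, which forces all four cycle edges to carry value 1; hence the gadget contributes exactly $1+1=2$ to the parity sum at $v$. Second, since $d_H(v)=5$, the parity condition \eqref{eq:PHC} at $v$ then reduces to $x_a+x_b+x_c\equiv 0 \pmod 4$, where $a,b,c$ are the first edges of the three subdivided paths at $v$. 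Connectivity rules out the assignment $(0,2,0)$ on a subdivided path (its doubled middle edge would be a separate component of the support), so each of $x_a,x_b,x_c$ lies in $\{1,2\}$, and the congruence then forces exactly two of them to equal 1 and one to equal 2. This is the step that turns ``parity plus connectivity'' into ``exactly two active ports per original vertex,'' so that the value-1 paths form a connected 2-regular spanning subgraph of $G$, i.e., a Hamiltonian cycle. Without a gadget achieving this forcing---and your ``careful balance of even- and odd-degree gadget vertices'' is never instantiated---neither the $(\Rightarrow)$ direction nor the two-edge connectivity of the constructed graph can be verified, so the reduction remains a plan rather than a proof.
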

 \begin{proof}
 We reduce the HC problem for cubic graphs to the $\PHC_2$ problem. 
 Let $G$ be a cubic graph, which is an input of the HC problem.
 Then we construct a graph $H$ as an input of the $\PHC_2$ problem, as follows 
 (see also Figure~\ref{fig:k2-gadget}):
\begin{itemize}
 \item 
  Subdivide every edge $e=\{v,u\} \in E(G)$ into a path of length three, i.e., 
  remove~$e$ and add vertices $v_e$, $u_e$ and edges $\{v,v_e\}$, $\{v_e,u_e\}$, $\{u_e,u\}$. 
 \item For each vertex $v \in V(G)$, 
 attach a cycle of length four, 
 i.e., add vertices $w_{v1}$, $w_{v2}$, $w_{v3}$ and 
  edges $\{v,w_{v1}\}$, $\{w_{v1},w_{v2}\}$, $\{w_{v2},w_{v3}\}$, $\{w_{v3},v\}$. 
 \end{itemize}
 For convenience, 
  let $V$ denote the set of original vertices, i.e., $V=V(G)$, 
  let $V_{\rm s}$ denote the set of vertices $u_e$, $v_e$ added by subdivision, 
    i.e., $|V_{\rm s}|=2|E(G)|$, and 
  let $V_{\rm c}$ denote the set of vertices $w_{v1}$, $w_{v2}$, $w_{v3}$ in attached cycles, 
    i.e., $|V_{\rm s}|=3|V(G)|$, and hence  
   $V(H) = V \cup V_{\rm s} \cup V_{\rm c}$. 
 Then, we show that
 $G$ has a HC if and only if $H$ has a ${\rm PHC}_2$.

 If $G$ has a HC, we claim that a ${\rm PHC}_2$ is in $H$. 
 Suppose that $C \subset E(G)$ is a HC of~$G$. 
 For a path $v\{v,v_e\}v_e\{v_e,u_e\}u_e\{u_e,u\}u$, 
   set $x_{\{v,v_e\}} = x_{\{v_e,u_e\}} = x_{\{u_e,u\}} = 1$ if $e \in C$, 
   otherwise set $x_{\{v,v_e\}} = x_{\{u_e,u\}} = 2$ and $x_{\{v_e,u_e\}} = 0$. 
 For a cycle attached to $v \in V(G)$, 
   set $x_{\{v,w_{v1}\}} = x_{\{w_{v1},w_{v2}\}} = x_{\{w_{v2},w_{v3}\}} = x_{\{w_{v3},v\}} = 1$ 
  (see Figure~\ref{fig:k2-gadget}, right). 
 It is not difficult to see that $\Vec{x}$ indicates a connected closed walk in~$H$, 
  since  
   $\sum_{e' \in \delta_H(v')} x_{e'}$ is even for each $v' \in V(H)$, 
   and HC $C$ is connected in $G$. 
 It is also not difficult to see that 
  every vertex is visited an odd number of times;
  the visit number is three for each vertex in $V$ 
  and one for each vertex in $V_{\rm s} \cup V_{\rm c}$. 
 Hence $\Vec{x}$ admits a $\PHC_2$ of $H$. 

 For the converse, 
  assuming that $H$ has a $\PHC_2$, 
  we claim that $G$ has a HC. 
 Let $\Vec{x}$ be the edge count vector of the $\PHC_2$ of $H$.
 Notice that $d_H(v') = 2$ for $v' \in V_{\rm s} \cup V_{\rm c}$, and 
  it implies that any $\PHC_2$ in $H$ visits every vertex of $V_{\rm s} \cup V_{\rm c}$ exactly once  
 since any $\PHC_2$ is allowed to use every edge at most twice. 
 Then it is not difficult to observe that
 $x_e=1$ for every edge of the attached cycles, that is,
 $x_{\{v,w1\}} = x_{\{w1,w2\}} = x_{\{w2,w3\}} = x_{\{w3,v\}} = 1$.
 Furthermore, there are three possible assignments of 
   $(x_{\{v,v_e\}}, x_{\{v_e,u_e\}}, x_{\{u_e,u\}})$, 
  that is $(1,1,1)$, $(2,0,2)$ or $(0,2,0)$, 
  where $(0,2,0)$ is inadequate because a $\PHC_2$ must be connected. 
 Now, noting that $d_H(v)=5$ holds for each $v \in V$ since $G$ is cubic, 
 let $a,b,c,\{v,w_{v1}\},\{v,w_{v3}\}$ be the edges incident to $v$.
 Then, any assignments
  $x_a$, $x_b$, $x_c$, $x_{\{v,w_{v1}\}}$, $x_{\{v,w_{v3}\}}$ of a $\PHC_2$ 
  must satisfy 
 \begin{equation*}
 x_a + x_b + x_c + x_{\{v,w_{v1}\}} + x_{\{v,w_{v3}\}} \equiv 2 \pmod{4}
 \end{equation*}
  by the parity condition on $v$.  
 As we saw, $x_{\{v,w_{v1}\}} = x_{\{v,w_{v3}\}}=1$ holds, which implies
 \begin{equation*}
 x_a + x_b + x_c \equiv 0 \pmod{4}.
 \end{equation*}
 Since each value of $x_a$, $x_b$, $x_c$ is at most two
  and none of them is equal to zero, 
  exactly two of $x_a$, $x_b$, $x_c$ must be assigned to one and 
  the other must be assigned to two; 
   exactly two of the three subdivided paths incident to $v$
   are assigned to $(1,1,1)$'s
  and the remaining one is assigned to $(2,0,2)$. 
 Now, 
  let $C' \subset E(G)$ be 
   the set of edges corresponding to $(1,1,1)$ paths in a $\PHC_2$ in $H$. 
 From the connectedness of $\PHC_2$ in $H$ it is clear that $C'$ is a HC in $G$.  
 \end{proof}
\begin{figure}[tbp]
 \begin{center}
	\includegraphics[width=105mm]{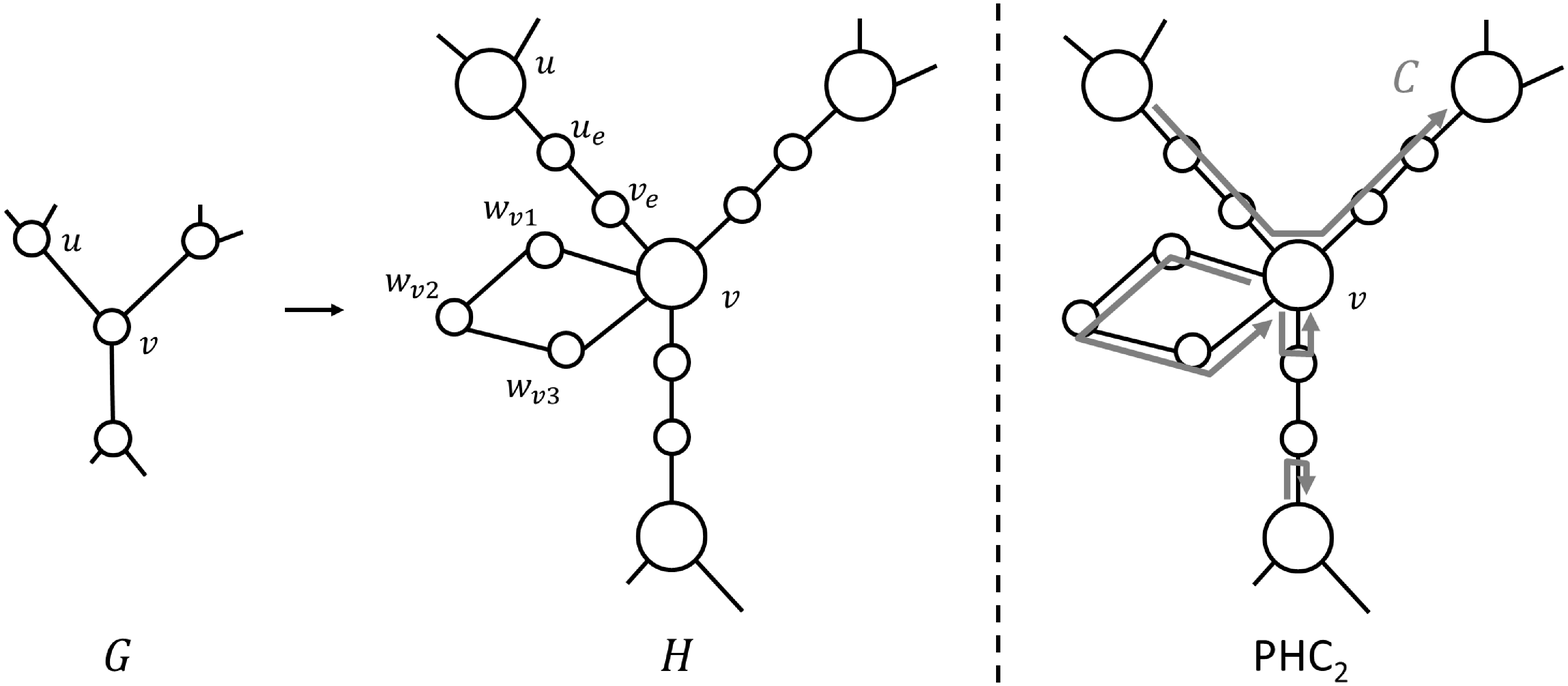}
 \end{center}
 \caption{A ${\rm PHC}_2$ around vertex $v$. \label{fig:k2-gadget}}
\end{figure}
 In a similar way to Theorem \ref{thm:PHC2_NPC}, but more complicated,
  we can show the hardness of the $\PHC_3$ problem.  
\begin{thm} \label{thm:PHC3_NPC}
 The $\PHC_3$ problem is {\NP}-complete, even when a given graph is two-edge connected. 
\end{thm}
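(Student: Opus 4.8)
The plan is to mimic the reduction of Theorem~\ref{thm:PHC2_NPC}, again reducing the Hamiltonian cycle problem for (three-edge connected) cubic graphs to $\PHC_3$, but with a more careful analysis of the gadgets in order to absorb the extra freedom that $z=3$ grants. Membership in \NP{} is immediate, since a $\PHC_3$ is certified by an edge count vector $\Vec{x}\in\{0,1,2,3\}^E$, whose parity condition~\eqref{eq:PHC} and connectivity (Corollary~\ref{cor:PHCcondition}) are checkable in polynomial time. For hardness, starting from a cubic graph $G$, I would build the same auxiliary graph $H$: subdivide each edge $e=\{v,u\}$ into a path $v\,v_e\,u_e\,u$ of length three, and attach a four-cycle $v\,w_{v1}\,w_{v2}\,w_{v3}\,v$ to every original vertex $v$. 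As before $H$ is two-edge connected (subdivision preserves the absence of bridges, and each four-cycle hangs on $v$ by two edges), every original vertex has degree five, and every added vertex has degree two. I would then prove that $H$ has a $\PHC_3$ if and only if $G$ has a Hamiltonian cycle.

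The forward direction is essentially unchanged: given a Hamiltonian cycle $C$ of $G$, set each subdivided path to $(1,1,1)$ if its edge lies in $C$ and to $(2,0,2)$ otherwise, and set all four-cycle edges to $1$. This $\Vec{x}$ uses every edge at most twice, satisfies~\eqref{eq:PHC} at every vertex, and is connected, so by Corollary~\ref{cor:PHCcondition} it admits a $\PHC_3$ (in fact a $\PHC_2$).

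The reverse direction is where the extra work lies, and this is the step I expect to be the main obstacle. First I would record that for a degree-two vertex the two incident-edge values, each in $\{0,1,2,3\}$ and summing to $2\pmod{4}$, form one of the pairs $(0,2),(1,1),(2,0),(3,3)$. Propagating this along a subdivided path forces the path to be one of $(0,2,0),(1,1,1),(2,0,2),(3,3,3)$, and propagating it around a four-cycle forces the cycle to be $(1,1,1,1)$, $(3,3,3,3)$, or a disconnected ``two-zero'' pattern. The key difference from the $z=2$ case is precisely the new ``all-threes'' configurations. Discarding the globally disconnected options ($(0,2,0)$ on a path, and the two-zero cycles) through the connectivity requirement of Corollary~\ref{cor:PHCcondition}, I would observe that an attached four-cycle always contributes $x_{\{v,w_{v1}\}}+x_{\{v,w_{v3}\}}\equiv 2\pmod{4}$ to the parity sum at its base vertex $v$, whether it is all-ones or all-threes. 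Hence the condition~\eqref{eq:PHC} at $v$ reduces to the three path-contributions (their first-edge values, each in $\{1,2,3\}$) summing to $0\pmod{4}$; a short check shows the only admissible multisets are $\{1,1,2\}$ and $\{2,3,3\}$, so exactly one incident path is ``off'' (a $(2,0,2)$ path) and the other two are ``on'' (either $(1,1,1)$ or $(3,3,3)$).

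Finally I would let $C'$ be the set of original edges whose subdivided path is ``on''. The per-vertex count above makes $C'$ a two-factor of $G$, and since ``off'' paths and the attached cycles create only local pendants, the connectivity of the $\PHC_3$ forces the ``on'' paths to connect all of $V(G)$, so $C'$ is a single Hamiltonian cycle. The crux throughout is that the two genuinely new $z=3$ configurations, the all-threes path and the all-threes cycle, contribute the same residues modulo $4$ as their $z=2$ analogues, so they do not generate spurious solutions; carrying out this case check for each gadget is exactly the ``more complicated'' part of the argument.
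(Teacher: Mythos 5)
Your proposal is correct and follows essentially the same approach as the paper's proof: the identical gadget construction reused from the $\PHC_2$ reduction, the same classification of subdivided-path assignments $(1,1,1),(2,0,2),(3,3,3)$ (with $(0,2,0)$ excluded by connectivity) and attached-cycle assignments all-ones/all-threes, the same parity count forcing $\{1,1,2\}$ or $\{2,3,3\}$ at each original vertex, and the same extraction of a Hamiltonian cycle from the ``on'' paths via connectedness. If anything, your treatment of the four-cycle gadget is slightly more careful than the paper's: you exclude the alternating $(0,2,0,2)$-type assignments by connectivity, which is the correct reason, whereas the paper attributes their exclusion to the odd-visit condition alone, a condition those patterns in fact satisfy.
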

\begin{proof}
 We reduce an instance $G$ of the HC problem for cubic graphs to 
  an instance $H$ of the $\PHC_3$ problem 
  exactly in the same way as the proof of Theorem \ref{thm:PHC2_NPC}  
 (see Fig.~\ref{fig:k2-gadget}).
 We here reuse the notations in the proof of Theorem \ref{thm:PHC2_NPC}. 
 If $G$ has an HC, 
   we obtain a ${\rm PHC}_3$ of $H$ by Theorem \ref{thm:PHC2_NPC}, 
  where we remark that a ${\rm PHC}_2$ is a ${\rm PHC}_3$.

 The opposite direction is similar to Theorem~\ref{thm:PHC2_NPC}. 
 Let $\Vec{x}$ be the edge count vector of the $\PHC_3$ of $H$.
 Since $\PHC_3$ visits the vertices in $V_{\rm c}$ an odd number of times,
  the assignment of the edges $\{v,w_{v1}\}, \{w_{v1},w_{v2}\}, \{w_{v2},w_{v3}\}, \{w_{v3},v\}$ are valued to either all ones or all threes. 
 Meanwhile, the possible assignments of $ (x_{\{v,v_e\}}, x_{\{v_e,u_e\}}, x_{\{u_e,u\}})$ are
 \begin{equation} \label{eq:subdivided_path_condition}
 (1,1,1), 
 (2,0,2), 
 (3,3,3), 
 (0,2,0),
 \end{equation}
 where $(0,2,0)$ is inadequate since any $\PHC_3$ is connected.

 Let $a, b, c \in E(H)$ denote three edges incident to $v$ other than $\{v, w_{v1}\}$ and $\{v, w_{v3}\}$.
 Then, any assignments $x_a, x_b, x_c, x_{\{v, w_{v1}\}}, x_{\{v, w_{v3}\}}$ must satisfy
 \begin{equation*}
 x_a + x_b + x_c + x_{\{v, w_{v1}\}} + x_{\{v, w_{v3}\}} \equiv 2 \pmod{4}
 \end{equation*}
 by the parity condition of $v$.
 As we saw, $x_{\{v, w_{v1}\}} = x_{\{v, w_{v3}\}} = 1$ or $3$, both of which implies
 \begin{equation*}
 x_a + x_b + x_c \equiv 0 \pmod{4}.
 \end{equation*}
 Since each of $x_a, x_b, x_c$ is at most three
  and none of them are equal to zero by \eqref{eq:subdivided_path_condition},
  the possible assignments are either
 \begin{equation*} 
 (x_a, x_b, x_c) =  (1, 1, 2) \mbox{ or } (3, 3, 2).
 \end{equation*}
 In the former assignment,
  the assignment for the three subdivided paths incident to $v$ consists of two $(1,1,1)$s and one $(2,0,2)$,
  and two $(3,3,3)$s and one $(2,0,2)$ in the latter case.
 
 Let $C^\prime \in E(G)$ be the set of edges which are corresponding to paths in $H$
  assigned as $(1,1,1)$ and $(3,3,3)$.
 Since $\PHC_3$ is connected,
  $C^\prime$ clearly suggests an HC of $G$. 
 \end{proof}

\subsection{The $\PHC_3$ problem for four-edge connected graphs}\label{sec:4-edge-connected}
 The $\PHC_3$ problem is \NP-complete for {\em two}-edge connected graphs, as we have shown in Theorem~\ref{thm:PHC3_NPC}. 
 This subsection establishes the following. 
 \begin{thm}  \label{thm:PHC3-4EC}
 A {\em four}-edge connected graph $G=(V,E)$ contains a $\PHC_3$
  if and only if the order $|V|$ is even or $G$ is non-bipartite. 
 \end{thm}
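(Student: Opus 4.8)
The plan is to prove the nontrivial `if' direction by exhibiting, in each case, an explicit edge-count vector $\Vec{x}\in\{0,1,2,3\}^E$ that satisfies the parity condition \eqref{eq:PHC} and whose support is a connected spanning subgraph; Corollary~\ref{cor:PHCcondition} then certifies that $\Vec{x}$ admits a $\PHC$, which is a $\PHC_3$ because every coordinate is at most $3$. The `only if' direction is immediate: a $\PHC_3$ is in particular a $\PHC$, so the necessity already established in Theorem~\ref{thm:suf_and_nec_PHC} (bipartiteness with odd order) applies verbatim.

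The single structural input I would use is Catlin's collapsibility \cite{Catlin}: every four-edge-connected graph is collapsible, which yields that for every even-cardinality set $S\subseteq V$ there is a connected spanning subgraph $H_S=(V,F_S)$ whose set of odd-degree vertices is exactly $S$. Concretely, collapsibility produces a subgraph $\Gamma$ with a prescribed (even) odd-degree set and connected complement $G-E(\Gamma)$; applying it to the set $O(G)\triangle S$ and taking $H_S=G-E(\Gamma)$ gives the stated form, since $d_{H_S}(v)=d_G(v)-d_\Gamma(v)$ is odd exactly for $v\in S$. This is the only place four-edge-connectivity is invoked, and it is the crux of the whole argument.

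Given this, the constructions are short. When $|V|$ is even I would take $S=V$, obtaining a connected spanning $H=(V,F)$ in which every degree is odd, and set $x_e=2$ for $e\in F$ and $x_e=0$ otherwise; then $\sum_{e\in\delta(v)}x_e=2\,d_H(v)\equiv 2\pmod 4$ for every $v$, and the support $F$ is connected and spanning (this in fact yields a $\PHC_2$). When $|V|$ is odd and $G$ is non-bipartite I would fix an arbitrary odd cycle $C$ and take $S=V\setminus V(C)$, which has even cardinality; with the corresponding connected spanning $H=(V,F)$ I would set $x_e=2\,[e\in F]+[e\in E(C)]$, so that $x_e\in\{0,1,2,3\}$. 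A direct check of \eqref{eq:PHC} splits into two cases: for $v\in V(C)$ the cycle contributes $2$ while $d_H(v)$ is even, and for $v\notin V(C)$ the cycle contributes $0$ while $d_H(v)$ is odd, so in both cases $\sum_{e\in\delta(v)}x_e\equiv 2\pmod 4$. Connectivity of the support is free because it contains $F$.

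The remaining verifications—membership of every coordinate in $\{0,1,2,3\}$, the two parity computations modulo $4$, and connectivity of the support—are mechanical. The genuine obstacle, and the reason the statement fails for lower connectivity (for instance $P_4$ has a $\PHC_4$ but no $\PHC_3$), is securing a connected spanning subgraph with a prescribed odd-degree parity pattern while keeping all edge multiplicities at most three; collapsibility of four-edge-connected graphs is exactly the tool that overcomes it.
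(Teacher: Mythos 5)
Your proposal is correct, and its architecture coincides with the paper's: in both proofs one doubles a connected spanning subgraph whose odd-degree vertex set is prescribed (all of $V$ when $|V|$ is even; $V\setminus V(C)$ for an odd cycle $C$ otherwise), adds a single traversal of $C$ in the odd-order case, and closes with Corollary~\ref{cor:PHCcondition}; your `only if' direction also matches the paper's, as both defer to the necessity argument of Theorem~\ref{thm:suf_and_nec_PHC}. The genuine difference is the key lemma that supplies the prescribed-parity subgraph. The paper invokes Theorem~\ref{lem:4-edge-connected-spanning-tree} (Nash-Williams/Tutte~\cite{Nash-Williams,Gusfield}) to get two edge-disjoint spanning trees $\tau,\tau'$, doubles $\tau$ for connectivity, and repairs the parities by doubling an $S$-join (resp.\ $S'$-join) chosen inside $\tau'$; edge-disjointness of the two trees is exactly what caps the multiplicities at $2$ (even case) and $3$ (odd case). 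You instead cite Catlin's collapsibility of four-edge-connected graphs~\cite{Catlin} as a black box, and your translation between the two formulations of collapsibility (applying it to the symmetric difference of $S$ with the odd-degree set of $G$, then passing to the complementary subgraph) is handled correctly, as are the two parity computations. Substantively the routes are equivalent, since Catlin's theorem for four-edge-connected graphs is itself derived from the two-trees theorem by precisely the paper's ``$T$-join inside the second tree'' trick. What the paper's version buys is self-containedness and an explicit algorithm: the trees are computable in $\Order(|E|^2)$ time by Roskind--Tarjan~\cite{Roskind-Tarjan} and a join in a tree in linear time, which yields the polynomial-time construction the paper records immediately after the theorem (with your black-box citation, polynomial-time constructibility of the collapsibility certificate would need a separate remark). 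What your version buys is brevity, a clean interface with the paper's own notion of all-roundness (which Section~\ref{sec:PHC3} presents as an extension of collapsibility), and the free generalization that the characterization holds verbatim for every collapsible graph, four-edge connectivity being used only to place $G$ in that class.
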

 To prove Theorem~\ref{thm:PHC3-4EC}, 
  we use the following celebrated theorem. 
 \begin{thm}[\cite{Nash-Williams,Gusfield}]  \label{lem:4-edge-connected-spanning-tree} 
 Every four-edge connected graph has two edge disjoint spanning trees. 
 \end{thm}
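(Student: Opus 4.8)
The plan is to deduce this from the Nash--Williams--Tutte tree-packing theorem, which states that a graph $G$ contains $k$ edge-disjoint spanning trees if and only if, for every partition $\mathcal{P} = \{V_1, \ldots, V_r\}$ of $V$ into $r \geq 1$ nonempty classes, the number $e(\mathcal{P})$ of edges joining two distinct classes satisfies $e(\mathcal{P}) \geq k(r-1)$. Taking $k = 2$, it then suffices to verify this partition inequality for every partition of the vertex set of a four-edge connected graph $G$. This verification is the short counting argument that constitutes Gusfield's contribution, and it is the only part I would actually carry out, since the tree-packing theorem itself is the deep external ingredient that I would invoke as a black box.

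First I would dispose of the trivial partition $r = 1$, where $e(\mathcal{P}) = 0 = k(r-1)$ holds vacuously, and then fix an arbitrary partition $\mathcal{P} = \{V_1, \ldots, V_r\}$ with $r \geq 2$. For each class $V_i$, the set $V_i$ is a nonempty proper subset of $V$, so four-edge connectivity in the paper's sense (deleting any three edges leaves $G$ connected) forces the cut around $V_i$ to contain at least four edges; writing $d(V_i)$ for the number of edges with exactly one endpoint in $V_i$, this is $d(V_i) \geq 4$.

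Next I would relate these cut sizes to $e(\mathcal{P})$. Every edge counted in $e(\mathcal{P})$ joins two distinct classes and hence is charged to exactly two of the quantities $d(V_1), \ldots, d(V_r)$, once for each class it meets, whereas every edge lying inside a single class contributes to none of them. Summing therefore gives $\sum_{i=1}^{r} d(V_i) = 2\,e(\mathcal{P})$, and combining with $d(V_i) \geq 4$ yields $2\,e(\mathcal{P}) \geq 4r$, that is $e(\mathcal{P}) \geq 2r \geq 2(r-1) = k(r-1)$. Since $\mathcal{P}$ was arbitrary, the Nash--Williams--Tutte condition holds for $k = 2$, and the theorem follows.

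The only real obstacle is conceptual rather than computational: one needs the tree-packing theorem available, and proving it from scratch (via matroid union or a direct exchange argument) is substantially harder than the two-spanning-tree corollary we are after. Granting that theorem, the estimate is tight exactly where expected, since the bound $e(\mathcal{P}) \geq 2r$ beats the required $2(r-1)$ with a margin of $2$, reflecting that four-edge connectivity is comfortably sufficient (indeed $2k$-edge connectivity suffices for $k$ edge-disjoint spanning trees by the identical count). I would double-check only the edge-counting bookkeeping, namely that each crossing edge is charged to precisely two classes, since the identity $\sum_{i=1}^{r} d(V_i) = 2\,e(\mathcal{P})$ is where an off-by-a-factor error would most easily creep in.
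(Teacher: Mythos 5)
Your proposal is correct: the partition count $\sum_{i=1}^{r} d(V_i) = 2\,e(\mathcal{P}) \geq 4r$ does give $e(\mathcal{P}) \geq 2r \geq 2(r-1)$, which is exactly the Nash--Williams--Tutte condition for $k=2$. The paper itself states this theorem without proof, citing Nash-Williams and Gusfield, and your argument is precisely the standard derivation found in those references (Gusfield's note proves the general statement that $2k$-edge connectivity yields $k$ edge-disjoint spanning trees by this same counting), so there is nothing to add.
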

\begin{proof}[Proof of Theorem~\ref{thm:PHC3-4EC}]
 The `only-if' part follows that of Theorem~\ref{thm:suf_and_nec_PHC}. 
 We show the `if' part, in a constructive way. 
 Suppose that $G$ is four-edge connected. 
 Then, 
  let $\tau$ and $\tau'$ be a pair of edge disjoint spanning trees of $G$, 
  implied by Theorem~\ref{thm:PHC3-4EC}. 
 Intuitively, 
  we construct a {\em connected} closed walk on $\tau$, 
  and control the parity condition using edges in $\tau'$, then 
  we obtain a ${\rm PHC}_3$. 

 Let $\Vec{x} \in \mathbb{Z}_{\geq 0}^{E}$ be given by 
 \begin{equation}
 x_e = \left\{
 \begin{array}{ll}
 2 \ & \mbox{ if } e \in E(\tau), \\ 
 0 \ & \mbox{ otherwise. } 
 \end{array}
 \right.
 \label{eq:160503a}
 \end{equation}
 Then, $(G,\Vec{x})$ is connected, and has an Eulerian cycle, say $w$. 
 Let $S$ be the set of vertices with even degree in $\tau$, 
  i.e., $S$ is the entire set of vertices each of which $w$ visits an even number of times. 
 We also remark that $|V \setminus S|$ is even, by the shake-hands-theorem. 
 In the following, we consider two cases whether $|V|$ is even or odd. 

 If $|V|$ is even, then $|S|$ is even. 
 Let $J \subseteq E(\tau')$ be an $S$-join in the tree $\tau'$. 
 Then, 
  let $\Vec{x}' \in \mathbb{Z}_{\geq 0}^E$ be defined by 
 \begin{equation}
 x'_e = \left\{
 \begin{array}{ll}
 x_e+2 \ & \mbox{ if } e \in J, \\ 
 x_e \ & \mbox{ otherwise. } 
 \label{eq:160503b}
 \end{array}
 \right.
 \end{equation}
 It is easy to see that $\Vec{x}'$ 
  satisfies the parity condition~\eqref{eq:PHC} for each vertex of $V$ 
 since $J$ is a $S$-join. 
 Clearly $(G,\Vec{x}')$ is connected, and $\Vec{x}'$ admits a PHC by Corollary~\ref{cor:PHCcondition}. 
 Notice that $J \subseteq E(\tau')$ is disjoint with $E(\tau)$, 
  meaning that \eqref{eq:160503a} and \eqref{eq:160503b} imply that $x'_e \leq 2$ for each  $e \in E$. 
 We obtain the claim in the case.  
 
 If $|V|$ is odd, then $|S|$ is odd. 
 By the hypothesis, 
  $G$ is non-bipartite and hence $G$ contains an odd cycle, say $C$. 
 Then, 
  let $\Vec{x}'' \in \mathbb{Z}_{\geq 0}^E$ be defined by 
 \begin{equation}
 x''_e = \left\{
 \begin{array}{ll}
 x_e+1 \ & \mbox{ if } e \in E(C), \\ 
 x_e \ & \mbox{ otherwise, } 
 \end{array}
 \right.
 \label{eq:160503c}
 \end{equation}
 i.e., increase the value of $x_e$ for every $e \in E(C)$ by one. 
 We can observe from the construction  
  that $(G,\Vec{x}'')$ again has an Eulerian cycle, say $w'$. 
 Let $S'=S \oplus V(C)$, then 
  $S'$ is the entire set of vertices each of which $w'$ visits an even number of times. 
 Since $|S|$ is odd and $|V(C)|$ is odd, 
  $|S'|$ is even. 
 Let $J \subseteq E(\tau')$ be a $S'$-join of $\tau'$, and  
   let $\Vec{x}''' \in \mathbb{Z}_{\geq 0}^E$ be defined by 
 \begin{equation}
 x'''_e = \left\{
 \begin{array}{ll}
 x''_e+2 \ & \mbox{ if } e \in J, \\ 
 x''_e \ & \mbox{ otherwise. } 
 \end{array}
 \right.
 \label{eq:160503d}
 \end{equation}
 Now, it is easy to see that $\Vec{x}'''$ 
  satisfies the parity condition~\eqref{eq:PHC} for each vertex of $V$ 
 since $J$ is a $S'$-join. 
 Clearly $(G,\Vec{x}''')$ is connected, and $\Vec{x}'''$  admits a PHC. 
 Considering the fact that $J \subseteq E(\tau')$ and $E(\tau)$ are disjoint, 
   we observe from \eqref{eq:160503a}, \eqref{eq:160503c} and \eqref{eq:160503d} that 
 \begin{equation}
 x'''_e = \left\{
 \begin{array}{ll}
 3 \ & \mbox{ if } e \in (E(\tau) \cup J) \cap E(C), \\ 
 2 \ & \mbox{ if } e \in (E(\tau) \cup J) \setminus E(C), \\ 
 1 \ & \mbox{ if } e \in E(C) \setminus (E(\tau) \cup J), \\ 
 0 \ & \mbox{ otherwise, } 
 \end{array}
 \right.
 \end{equation}
 hold, meaning that $x'''_e \leq 3$ for each $e \in E$.  
 We obtain the claim. 
\end{proof}
 
 It is known (cf.~\cite{Schrijver}) that 
  edge disjoint spanning trees $\tau$ and $\tau'$ in four-edge connected graph
  are found in polynomial time 
  (e.g., $\Order(|E(G)|^2)$ time, due to Roskind, Tarjan~\cite{Roskind-Tarjan}). 
 Thus, the proof of Theorem~\ref{thm:PHC3-4EC} also implies 
  a polynomial time (e.g., $\Order(|E(G)|^2)$) algorithm 
  to find a  $\PHC_3$ in a four-edge connected graph $G$.

\section{The $\PHC_3$ Problem for Two-Edge Connected Graphs} \label{sec:PHC3}
 The $\PHC_3$ problem is \NP-complete for {\em two}-edge connected graphs (Theorem~\ref{thm:PHC3_NPC}), 
 while it is solved in polynomial time for any {\em four}-edge connected graph (Theorem~\ref{thm:PHC3-4EC}). 
 As an approach to {\em three}-edge connected graphs, 
  this section further investigates the $\PHC_3$ problem using the celebrated ear decomposition. 
 This section establishes the following theorem. 
\begin{thm}\label{thm:PHC3-forC5P6free}
 Suppose that a two-edge connected graph $G$ is $P_6$-free or $C_{\geq 5}$-free. 
 Then, $G=(V,E)$ contains a ${\rm PHC}_3$ if and only if the order $|V|$ is even or $G$ is non-bipartite. 
\end{thm}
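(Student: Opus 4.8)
The only-if direction needs no new work: if $|V|$ is odd and $G$ is bipartite, then by the only-if part of Theorem~\ref{thm:suf_and_nec_PHC} the graph $G$ has no PHC whatsoever, hence in particular no $\PHC_3$. Thus the entire content lies in the if-direction, and my plan is to factor it through a self-contained structural property, show that the property suffices, and finally establish the property for the two graph classes by an induction along the ear decomposition.

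First I would isolate the flexibility used in the four-edge connected argument (Theorem~\ref{thm:PHC3-4EC}) --- there a doubled spanning tree supplies connectivity while an $S$-join in a second, edge-disjoint tree repairs the parity --- into a property that does not presuppose two edge-disjoint spanning trees. Following Catlin's collapsibility~\cite{Catlin}, but now working modulo $4$ and insisting on connectivity within the $\{0,1,2,3\}$ edge budget, I would call $G=(V,E)$ \emph{all-round} if for every demand $\Vec{p}\in\{0,1,2,3\}^V$ with $\sum_{v\in V}p(v)$ even there is an $\Vec{x}\in\{0,1,2,3\}^E$ whose support is connected and spanning and with $\sum_{e\in\delta(v)}x_e\equiv p(v)\pmod 4$ for all $v$; and \emph{bipartite all-round} the same but with demands restricted to those obeying the bipartite balance $\sum_{u\in U}p(u)\equiv\sum_{w\in W}p(w)\pmod 4$, where $V=U\cup W$ is the bipartition. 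All-roundness then suffices: the uniform demand $p\equiv 2$ has even total $2|V|$, so Corollary~\ref{cor:PHCcondition} yields a $\PHC_3$; in the bipartite case this same demand respects the balance $2|U|\equiv 2|W|\pmod 4$ exactly when $|V|$ is even. Hence the if-direction reduces to the claim that every two-edge connected $P_6$-free or $C_{\geq 5}$-free graph is all-round when it is non-bipartite and bipartite all-round when it is bipartite.

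This last claim I would prove by induction along an ear decomposition, which exists by the characterization of two-edge connected graphs in Section~\ref{sec:pre-graph}. The base case is a single cycle; in these classes the only induced cycles are short ($C_3$ in the non-bipartite case, $C_4$ in the bipartite case, since longer induced cycles are excluded), so all-roundness resp. bipartite all-roundness can be checked by hand. For the inductive step, write $G=G'+w$ where $w$ is an ear attached to an all-round (resp. bipartite all-round) graph $G'$: the internal, degree-two vertices of $w$ force a local recurrence $x_{e_i}+x_{e_{i+1}}\equiv 2\pmod 4$ on the ear edges, which I solve within $\{0,1,2,3\}$ and whose residual demand on the two endpoints is realized by $G'$ via the induction hypothesis, keeping the support connected and spanning. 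The forbidden-subgraph hypotheses are exactly what keeps this finite: a long ear would create an induced $P_6$ or an induced $C_{\geq 5}$, so only a bounded list of ear shapes and attachment patterns can occur, each handled explicitly; and when the new ear is odd it may turn a bipartite $G'$ non-bipartite, so the induction must also pass from the bipartite all-round to the all-round property at such steps.

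The main obstacle is this inductive step. Concretely, the work is (i) enumerating, under $P_6$-freeness and $C_{\geq 5}$-freeness, the finitely many ear configurations and verifying that each preserves all-roundness while respecting the tight per-edge budget $\{0,1,2,3\}$; and (ii) managing the bipartite$\leftrightarrow$non-bipartite transition together with the spanning-and-connected requirement, which is where the mod-$4$ bookkeeping is most delicate. I expect the base-case verifications and the reductions in the previous two paragraphs to be routine by comparison.
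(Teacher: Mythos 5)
Your reduction of the theorem to an ``all-round'' property is indeed the paper's strategy (its Definitions~\ref{dfn:all-roundness} and~\ref{dfn:bipartite_all-roundness} match yours), but the structural claim you reduce to is false, and the induction you propose to prove it cannot work as described. First, $C_5$ refutes your claim that every two-edge connected $P_6$-free non-bipartite graph is all-round: the paper proves $C_5$ is \emph{not} all-round (Proposition~\ref{prop:C5_counterexample}). Concretely, the demand $(2,2,2,2,0)$ around the cycle has even sum, yet solving the five congruences $x_{i-1}+x_i\equiv f(v_i)\pmod 4$ gives exactly two mod-4 $f$-factors, $(0,2,0,2,2)$ and $(2,0,2,0,0)$, both with disconnected support. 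The theorem survives only because $C_5$ trivially has a $\PHC_3$ (traverse it once); accordingly the paper's Theorem~\ref{thm:p6_univ} explicitly excludes $C_5$, and any proof along your lines must carry this exception. In particular your base case, which asserts the starting cycle is $C_3$ or $C_4$, is wrong for $P_6$-free graphs: $C_5$ and $C_6$ occur, and worse, the base cycle of an ear decomposition need not be an induced cycle of $G$ at all, so it can be arbitrarily long.

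Second, and more fatally, forbidden-induced-subgraph hypotheses do not bound the lengths of ears. The internal vertices of an ear have degree two only in the intermediate graph at the moment the ear is added; later ears may attach to them, so a long ear whose interior eventually acquires chords creates no induced $P_6$ and no long induced cycle in $G$. For the same reason the intermediate graphs of an ear decomposition are not induced subgraphs of $G$ and need not inherit $P_6$-freeness or $C_{\geq 5}$-freeness, so your ``bounded list of ear shapes and attachment patterns'' never materializes and the induction has nothing to run on. The paper avoids exactly this trap: instead of a global induction it proves a local ``connecting lemma'' (Lemma~\ref{lem:univ_spanning_subgraph}) showing that a disconnected mod-4 $f$-factor can be rewired inside any all-round or bipartite all-round subgraph, deduces that $G$ is all-round (or bipartite all-round) as soon as \emph{every edge} lies in some such subgraph (Lemma~\ref{crl:univ_all_edge}), and then verifies this edge-local condition: for $C_{\geq 5}$-free graphs every edge lies in a $C_3$ or $C_4$; for $P_6$-free graphs a case analysis on the cycles through a given edge, using chords forced by $P_6$-freeness, reduces everything to finitely many explicit small graphs (e.g.\ $C_7$, $C_8$, $C_9$ with two chords), whose all-roundness is established by the ear lemmas (Lemmas~\ref{lem:less7_univ}, \ref{lem:bip_less7_bip_all-round}, \ref{lem:bip_less3_all-round}) applied only to those small graphs, never to $G$ itself.
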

 Notice that $C_{\geq 5}$-free contains some important graph classes 
  such as chordal (equivalent to $C_{\geq 4}$-free), 
  chordal bipartite (equivalent to $C_{\geq 5}$-free bipartite), 
  and cograph (equivalent to $P_4$-free). 
 We also remark that the Hamiltonian cycle problem is {\NP}-complete 
  for $C_{\geq 4}$-free graphs, as well as $P_5$-free graphs~(cf.~\cite{BLS87}). 

\subsection{Preliminary}\label{sec:all-round}
 For the purpose, we introduce the notion of the {\em all-roundness} of a graph in Section~\ref{sec:all-round}. 
\subsubsection{Generalized problem}
 Section~\ref{sec:PHC3} is actually concerned with the following problem, slightly generalizing the $\PHC_3$ problem. 
\begin{problem}\label{prob:gen}
 Given a graph $G=(V,E)$ and a map $f \colon V \to \{0,1,2,3\}$, 
  find $\Vec{x} \in \{0,1,2,3\}^E$ satisfying the conditions that 
\begin{align} 
 & \sum_{e \in \delta(v)} x_e \equiv f(v) \pmod{4} && \mbox{ for any $v \in V$, } \label{eq:modulo_constraint} \\
 & \mbox{$(G, \Vec{x})$ is connected. } \label{eq:connectivity_constraint} 
\end{align}
\end{problem}
 Clearly, 
  $\PHC_3$ is given by setting 
  $f(v)=2$ for any $v \in V$ 
 (recall Corollary~\ref{cor:PHCcondition}). 
 For convenience, we call $\Vec{x} \in \{0,1,2,3\}^E$ 
  a {\em mod-4 $f$-factor} of $G$ 
  if $\Vec{x}$ satisfies \eqref{eq:modulo_constraint}. 
 A mod-4 $f$-factor $\Vec{x} \in \{0,1,2,3\}^E$ is {\em connected} if it satisfies \eqref{eq:connectivity_constraint},  
  i.e.,  Problem~\ref{prob:gen} is a problem to find a {\em connected mod-4 $f$-factor}. 
 We remark the following two facts. 
\begin{prop}\label{prop:gen-handshake}
 A graph $G=(V,E)$ has a mod-4 $f$-factor 
   only when the map $f$ satisfies that 
\begin{eqnarray}
 \sum_{v \in V} f(v) \mbox{ is even. }
\label{eq:fiseven}
\end{eqnarray} 
\end{prop}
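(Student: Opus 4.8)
The plan is to sum the defining congruences of a mod-4 $f$-factor over all vertices and then invoke a handshaking-type double count, exactly the mechanism anticipated by equation~\eqref{eq:visit}. Concretely, suppose $\Vec{x} \in \{0,1,2,3\}^E$ is a mod-4 $f$-factor, so that $\sum_{e \in \delta(v)} x_e \equiv f(v) \pmod{4}$ holds for every $v \in V$. First I would add these congruences across all $v \in V$ to obtain
\[
 \sum_{v \in V} \sum_{e \in \delta(v)} x_e \equiv \sum_{v \in V} f(v) \pmod{4}.
\]

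Next I would evaluate the left-hand side by interchanging the order of summation. Since $G$ is simple, each edge $e = \{u,w\} \in E$ has exactly two endpoints and is therefore counted once in $\delta(u)$ and once in $\delta(w)$; hence it contributes $x_e$ twice, giving $\sum_{v \in V}\sum_{e \in \delta(v)} x_e = 2\sum_{e \in E} x_e$. Substituting back yields $\sum_{v \in V} f(v) \equiv 2\sum_{e \in E} x_e \pmod{4}$, and in particular $\sum_{v \in V} f(v)$ is congruent modulo $4$ to an even number, so it is even. Equivalently, one may reduce everything modulo $2$ from the outset: the per-vertex congruence gives $\sum_{e \in \delta(v)} x_e \equiv f(v) \pmod{2}$, and summing over $v$ immediately produces $0 \equiv \sum_{v \in V} f(v) \pmod{2}$.

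I expect no genuine obstacle here: the statement is merely a parity necessary condition, and the argument is a one-line double count — the same reasoning underlying the shake-hands theorem already invoked elsewhere in the excerpt. The only point deserving a trace of care is the summation interchange together with the observation that every edge has exactly two distinct endpoints, so that the total is genuinely $2\sum_{e \in E} x_e$ with no diagonal contribution; this is immediate because $E \subseteq \binom{V}{2}$ rules out loops.
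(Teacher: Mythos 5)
Your proof is correct and follows exactly the paper's own argument: sum the per-vertex congruences $\sum_{e \in \delta(v)} x_e \equiv f(v) \pmod{4}$ over $V$, interchange the summation to get $2\sum_{e \in E} x_e$ by the handshaking count, and conclude that $\sum_{v \in V} f(v)$ is congruent modulo $4$ to an even number and hence even. The additional mod-$2$ shortcut you mention is a harmless simplification of the same idea, so there is nothing to flag.
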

\begin{proof}
 Summing up \eqref{eq:modulo_constraint} over $V$, we obtain 
\begin{equation}
 \sum_{v \in V} f(v) 
 \equiv
 \sum_{v \in V} \sum_{e \in \delta(v)} x_e \pmod{4}.
\label{eq160504a}
\end{equation}
 It is not difficult to see that 
\begin{equation}
 \sum_{v \in V} \sum_{e \in \delta(v)} x_e 
 =
 2\sum_{e \in E} x_e 
\label{eq160504b}
\end{equation}
 holds, in an analogy with the handshaking lemma, that is 
  $\sum_{v \in V} \sum_{e \in \delta(v)}1 = \sum_{e \in E} 2$. 
 By \eqref{eq160504a} and \eqref{eq160504b}, 
  we obtain that 
\begin{equation*}
 \sum_{v \in V} f(v) 
 \equiv
 2\sum_{e \in E} x_e \pmod{4}
\end{equation*}
 which implies the claim. 
\end{proof}

 We will later show in Lemma~\ref{lem:mod-4_f-factor}
  that \eqref{eq:fiseven} is also sufficient for any connected non-bipartite graph 
  to have a mod-4 $f$-factor. 
 For bipartite graphs, 
  we need an extra necessary condition on $f$. 
\begin{prop} \label{prop:bipartite_feasibility}
 A bipartite graph $G=(U,V;E)$ has a mod-4 $f$-factor 
   only when the map $f$ satisfies that 
\begin{equation} \label{eq:bipartite_feasibility}
 \sum_{v\in U} f(v) \equiv \sum_{v\in V} f(v) \pmod{4}. 
\end{equation}
\end{prop}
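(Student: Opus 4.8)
The plan is to mimic the handshaking argument of Proposition~\ref{prop:gen-handshake}, but to exploit bipartiteness by summing the modulo constraint over each side of the bipartition separately rather than over all of $V$. Assume $\Vec{x} \in \{0,1,2,3\}^E$ is a mod-4 $f$-factor, so that $\sum_{e \in \delta(w)} x_e \equiv f(w) \pmod 4$ for every vertex $w$.

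First I would sum this congruence over the side $U$, obtaining $\sum_{u \in U} f(u) \equiv \sum_{u \in U}\sum_{e \in \delta(u)} x_e \pmod 4$, and likewise over $V$. The key observation is that, because $G$ is bipartite, every edge $e = \{u,v\} \in E$ has exactly one endpoint in $U$ and one in $V$; hence in the double sum $\sum_{u \in U}\sum_{e \in \delta(u)} x_e$ each edge variable $x_e$ is counted exactly once (at its unique $U$-endpoint), so this double sum equals $\sum_{e \in E} x_e$. By the same token $\sum_{v \in V}\sum_{e \in \delta(v)} x_e = \sum_{e \in E} x_e$ as well.

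Combining, both $\sum_{u \in U} f(u)$ and $\sum_{v \in V} f(v)$ are congruent mod~$4$ to the common quantity $\sum_{e \in E} x_e$, which immediately yields $\sum_{u \in U} f(u) \equiv \sum_{v \in V} f(v) \pmod 4$, as claimed. The argument is purely a double-counting identity, so there is no real obstacle; the only point that deserves care is that bipartiteness upgrades the handshaking identity $\sum_{w}\sum_{e \in \delta(w)} x_e = 2\sum_{e} x_e$ of Proposition~\ref{prop:gen-handshake} into the side-wise identity $\sum_{u \in U}\sum_{e \in \delta(u)} x_e = \sum_{e} x_e$, with the factor of two disappearing precisely because no edge lies within a single side.
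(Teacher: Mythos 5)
Your proposal is correct and is essentially identical to the paper's proof: both sum the modulo constraint \eqref{eq:modulo_constraint} over $U$ and over $V$ separately and then use the fact that, by bipartiteness, $\sum_{u \in U}\sum_{e \in \delta(u)} x_e = \sum_{v \in V}\sum_{e \in \delta(v)} x_e$ (each side counting every edge exactly once). The only cosmetic difference is that you explicitly identify both double sums with $\sum_{e \in E} x_e$, whereas the paper simply asserts their equality.
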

\begin{proof}
Since $G=(U,V;E)$ is bipartite, 
\begin{equation}
  \sum_{v \in U}\sum_{e \in \delta(v)} x_e
= \sum_{v \in V}\sum_{e \in \delta(v)} x_e 
\label{tmp160504d}
\end{equation}
 is required. 
 Summing up \eqref{eq:modulo_constraint} over $U$ and $V$, respectively, we obtain 
\begin{eqnarray*}
 \sum_{v\in U} f(v) & \equiv & \sum_{v \in U}\sum_{e \in \delta(v)} x_e \pmod{4}, \hspace{1em} \mbox{and}  \\
 \sum_{v\in V} f(v) & \equiv & \sum_{v \in V}\sum_{e \in \delta(v)} x_e \pmod{4}  
\end{eqnarray*}
 hold, which with \eqref{tmp160504d} implies the claim. 
 \end{proof}
 Notice that the condition \eqref{eq:bipartite_feasibility} implies \eqref{eq:fiseven}. 
 We will show in  Lemma~\ref{lem:mod-4_f-factor} 
  that \eqref{eq:bipartite_feasibility} is also sufficient for any connected bipartite graph. 

\subsubsection{All-roundness}
 Then, we introduce the following two notions.  
\begin{dfn}
\label{dfn:all-roundness}
  A graph $G$ is {\em all-round}
   if $G$ has a connected mod-4 $f$-factor for any map $f$ satisfying \eqref{eq:fiseven}.  
\end{dfn}
\begin{dfn}
\label{dfn:bipartite_all-roundness}
 A bipartite graph $G$ is {\em bipartite all-round} 
 if $G$ has a connected mod-4 $f$-factor for any map $f$ satisfying~\eqref{eq:bipartite_feasibility}.
\end{dfn}	
 It is not difficult to see that 
  Definitions~\ref{dfn:all-roundness} and \ref{dfn:bipartite_all-roundness} are 
  (too strong) sufficient condition that a graph contains a $\PHC_3$. 
 In the rest of Section~\ref{sec:PHC3}, we will show the following theorems, which immediately implies Theorem~\ref{thm:PHC3-forC5P6free}.  
\begin{thm}  \label{thm:c5_univ}
 Every two-edge connected $C_{\geq 5}$-free graph is either all-round or bipartite all-round. 
\end{thm}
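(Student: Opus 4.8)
The plan is to prove the two halves of the statement separately: if $G$ is non-bipartite I will show it is all-round, and if $G$ is bipartite I will show it is bipartite all-round. In both cases I proceed by induction along an ear decomposition of $G$, arranged so that every intermediate graph stays in the relevant class. For the non-bipartite case I start the decomposition from a \emph{triangle}: a shortest odd cycle is induced, so non-bipartiteness together with $C_{\geq 5}$-freeness forces a $C_3$, and every graph obtained while rebuilding $G$ by adding ears contains this triangle, hence stays non-bipartite. For the bipartite case I start from a $C_4$ (the girth is even and, being $C_{\geq 5}$-free, equals $4$), and every intermediate graph, being a subgraph of the bipartite $G$, is automatically bipartite. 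Thus I never cross between the two notions within one induction. The base cases are finite computations: the unsigned incidence operator over $\mathbb{Z}_4$ has kernel $\{(0,0,0),(2,2,2)\}$ on the triangle and $\langle(1,-1,1,-1)\rangle$ on $C_4$, and in each case the image is exactly the set of feasible maps (those satisfying \eqref{eq:fiseven}, resp.\ \eqref{eq:bipartite_feasibility}); the kernel freedom then suffices to push the support to at most one zero edge, keeping it connected and spanning.

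The inductive step is an \emph{ear-addition lemma}: if $H$ is all-round (resp.\ bipartite all-round) and $G'=H+P$ is obtained by attaching an ear $P=u\,q_1\cdots q_{\ell-1}\,v$ of length $\ell\le 3$ with new internal vertices, then $G'$ is all-round (resp.\ bipartite all-round). Given a feasible target $f$ on $V(G')$, I parametrize the admissible assignments on $P$ by a single free value $s=x_{a_1}\in\{0,1,2,3\}$; the internal constraints \eqref{eq:modulo_constraint} at $q_1,\dots,q_{\ell-1}$ then determine the remaining ear edges, and hence the ear's contribution to the degrees of $u$ and $v$. Subtracting this contribution yields a target $f'$ on $V(H)$, and a direct computation shows $f'$ again satisfies the corresponding feasibility condition—in the bipartite case the two part-sums of $f'$ differ by exactly $\sum_{U}f-\sum_{V}f\equiv 0\pmod 4$, \emph{independently of} $s$. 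Invoking the inductive hypothesis on $H$ produces a connected mod-$4$ $f'$-factor, and I then spend the free parameter $s$ to make every edge of $P$ positive (possible since at most $\ell\le 3$ values of $s$ are forbidden out of four), so that the ear becomes a path attached at the two vertices $u,v$ of the connected support of $H$; this covers the new internal vertices and preserves connectivity. Combining the two pieces gives the required connected mod-$4$ $f$-factor of $G'$.

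What remains, and what I expect to be the crux, is the \emph{structural lemma}: a two-edge connected $C_{\geq 5}$-free graph admits an ear decomposition that starts from a shortest cycle and uses only ears of length at most $3$. The naive hope—that every ear in an arbitrary decomposition is short—fails, because $C_{\geq 5}$-freeness is \emph{not} preserved under ear (in particular edge) deletion: deleting a chord of a non-induced $5$-cycle exposes an induced $C_5$. I therefore build the decomposition greedily and in a careful order. Grow a subgraph $H$ from the base cycle; whenever $G$ has an edge inside $V(H)$ that is missing from $H$, add it first as a length-$1$ ear, so that at the moment I must leave $V(H)$ the current $H$ is an \emph{induced} subgraph of $G$. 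I then take a shortest ear $P$ whose internal vertices lie outside $V(H)$ and argue $|P|\le 3$. The key point is that minimality of $P$ forces its internal vertices to have no neighbours in $V(H)$ beyond the forced endpoint-edges, and forces $P$ to be chordless; combining $P$ with a shortest $u$–$v$ path $Q$ inside the two-edge connected induced $H$ then produces a \emph{chordless} cycle of length $\ell+|Q|$, which would be an induced $C_{\geq 5}$ if $\ell\ge 4$.

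Verifying these no-chord claims carefully—especially ruling out chords from the two ear vertices $p_1,p_{\ell-1}$ adjacent to the endpoints into $Q$, and confirming that "add all chords first" indeed keeps the frontier $H$ induced—is the delicate part of the whole argument. Once the short-ear decomposition is in hand, the three ingredients assemble directly: the base cases ($C_3$ all-round, $C_4$ bipartite all-round), the ear-addition lemma applied along the decomposition (staying non-bipartite, resp.\ bipartite, throughout), and the structural lemma together yield that every two-edge connected $C_{\geq 5}$-free graph is all-round or bipartite all-round, which is the theorem.
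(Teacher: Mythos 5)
Your strategy is genuinely different from the paper's proof of this theorem (the paper never uses ear decompositions here; see below), but as written it has a real gap: your structural lemma is false. The paper's ears may be \emph{closed} ($v_0=v_\ell$), and in two-edge connected graphs that are not two-vertex connected they cannot be avoided. Concretely, let $G$ consist of a triangle $abc$ and a square $a\,x\,y\,z\,a$ sharing only the vertex $a$. This $G$ is two-edge connected, non-bipartite, and $C_{\geq 5}$-free, yet in any ear decomposition starting from the shortest cycle (the triangle) the only possible first ear is the closed ear $a\,x\,y\,z\,a$ of length four. Your argument that a long ear forces an induced $C_{\geq 5}$ silently assumes the ear is open: only when $u\neq v$ does the path $Q$ exist and the cycle $P\cup Q$ have length $\ell+|Q|\geq \ell+1$. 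For a closed ear the relevant induced cycle is $P$ itself, so $C_{\geq 5}$-freeness only yields $\ell\leq 4$, not $\ell\leq 3$. This in turn breaks your ear-addition lemma: with $\ell=4$ each of the four ear edges forbids one value of the free parameter $s$, and for suitable values of $f$ on the internal vertices all four values of $s$ are forbidden, so you cannot always make every ear edge positive.

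Both defects are repairable, and the repair is exactly the device the paper uses in its $P_6$-free analysis (Lemmas~\ref{lem:less7_univ}, \ref{lem:bip_less7_bip_all-round}, \ref{lem:bip_less3_all-round}): it suffices that at most \emph{one} ear edge be zero, because then the ear's support splits into at most two sub-paths, each hanging off an endpoint that lies in the connected spanning support obtained from $H$; by pigeonhole such an $s$ exists for every ear of length up to seven, in particular length four. With that refinement and the corrected structural statement (open ears of length at most three, closed ears of length at most four), your two parallel inductions go through without ever crossing between all-roundness and bipartite all-roundness; alternatively you could keep ears of length at most three but then you must permit the bipartite-to-non-bipartite crossing (start my example at the square and attach the triangle as a closed ear of length three), which is precisely what Lemma~\ref{lem:bip_less3_all-round} provides. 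You should also know that the paper's own proof of this particular theorem is far shorter and avoids ear decompositions entirely: every edge of a two-edge connected $C_{\geq 5}$-free graph lies in a $C_3$ or an induced $C_4$, so one takes an arbitrary, possibly disconnected, mod-4 $f$-factor (Lemma~\ref{lem:mod-4_f-factor}) and repeatedly merges components of its support by re-solving the factor problem inside one of these small all-round or bipartite all-round subgraphs straddling two components (Lemmas~\ref{lem:univ_spanning_subgraph} and \ref{crl:univ_all_edge}); this local-patching argument is completely insensitive to cut vertices, which is exactly where your ear-based route stumbles.
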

\begin{thm} \label{thm:p6_univ}
 Every two-edge connected $P_6$-free graph,
  except for $C_5$,
  is either all-round or bipartite all-round. 
\end{thm}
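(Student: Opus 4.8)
The plan is to prove Theorem~\ref{thm:p6_univ} by induction along an ear decomposition, splitting into the non-bipartite case (where the goal is \emph{all-roundness}, Definition~\ref{dfn:all-roundness}) and the bipartite case (where the goal is \emph{bipartite all-roundness}, Definition~\ref{dfn:bipartite_all-roundness}). Since a bipartite graph can never be all-round (it fails for any $f$ violating \eqref{eq:bipartite_feasibility}), this dichotomy is forced, and Propositions~\ref{prop:gen-handshake} and~\ref{prop:bipartite_feasibility} already supply the necessity of the feasibility conditions; the whole content is therefore to produce, for every feasible demand $f$, a mod-4 $f$-factor that is moreover \emph{connected}. Existence of a possibly disconnected factor is routine (cf.\ the forthcoming Lemma~\ref{lem:mod-4_f-factor}), so connectivity is the crux throughout.

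The engine is an \emph{ear-addition lemma}: if $G'$ is all-round (resp.\ bipartite all-round) and $G = G' + P$, where $P$ is an ear attached at endpoints $u,w\in V(G')$, then $G$ inherits the property provided $P$ is short. Fix a feasible $f$ on $G$. Each internal vertex of $P$ has degree two, so along $P$ the constraint \eqref{eq:modulo_constraint} becomes a recurrence: once the value $a$ on the first ear-edge is chosen, every ear-edge value is determined modulo $4$, and in particular the contribution $b$ at $w$ is an affine function of $a$. One transfers the internal demands onto the endpoints, obtaining a reduced demand $f'$ on $G'$ with $f'(u)=f(u)-a$ and $f'(w)=f(w)-b$. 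A short parity computation gives $a+b\equiv \sum_{\text{internal}} f \pmod 2$, whence $\sum_{V(G')} f'$ has the same parity as $\sum_{V(G)} f$; the analogous mod-$4$ bookkeeping handles \eqref{eq:bipartite_feasibility} in the bipartite case. Thus $f'$ is feasible for $G'$, the inductive hypothesis yields a connected mod-4 $f'$-factor on $G'$, and we extend it over $P$. Connectivity of $G$ then reduces to avoiding two consecutive zero edge-values on $P$ (the only way to isolate an internal vertex), which we arrange through the free choice of $a$; chords and length-one ears are trivial, since adding edges set to $0$ never destroys a connected factor.

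$P_6$-freeness enters exactly to keep the ear short. An ear is an induced path on its internal vertices, so an ear with four or more internal vertices forces an induced $P_6$ (after treating the cases where its endpoints coincide or are adjacent); hence in a $P_6$-free graph every ear has at most three internal vertices. With so few internal vertices, each zero-demand internal vertex forbids only one residue of $a$ (and consecutive zero-demand vertices forbid the same one), so among the four choices of $a$ a connectivity-preserving one always survives — precisely the step that collapses for long ears. It then remains to secure the base of the induction: I would verify directly that $C_3$, $C_4$, and $C_6$ (the only $P_6$-free cycles besides $C_5$) are all-round or bipartite all-round, exhibit an explicit $f$ (e.g.\ one forcing two consecutive zeros) showing that $C_5$ is \emph{not}, and argue that every two-edge connected $P_6$-free graph other than $C_5$ admits an ear decomposition whose base cycle is one of the good ones — for instance, a chord on a $5$-cycle lets one start from a triangle and treat the rest as ears.

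The main obstacle is the connectivity bookkeeping inside the ear-addition lemma: guaranteeing, for \emph{every} feasible $f$, an assignment on the ear with no isolated internal vertex and all values in $\{0,1,2,3\}$. This is the single point from which both the $C_5$ exception and the length restriction originate, and it demands a careful finite case analysis of short ears together with their interaction with the bipartite feasibility condition \eqref{eq:bipartite_feasibility}. A secondary subtlety is that an internal ear vertex may later become an endpoint of a subsequent ear and thereby gain degree, so one must control ear lengths at the moment of addition (or choose the ear decomposition, e.g.\ a breadth-first one, so that the $P_6$-free bound genuinely applies) rather than reading lengths off the final graph.
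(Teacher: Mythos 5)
Your toolkit is the paper's own: the base cases $C_3$, $C_4$, $C_6$, the $C_5$ obstruction, and an ear lemma that shifts a residue $a$ along the ear and transfers the demand to the endpoints are exactly Propositions~\ref{lem:k3_univ}--\ref{prop:C5_counterexample} and Lemmas~\ref{lem:less7_univ}--\ref{lem:bip_less3_all-round}. But your global strategy --- induction along an ear decomposition of the whole graph --- has a gap that you flag as a ``secondary subtlety'' and never close, and it is in fact the central difficulty. An ear decomposition builds $G$ through intermediate subgraphs $G_0\subset G_1\subset\cdots\subset G$, and the ears used are ears of the $G_i$, not of $G$: their internal vertices have degree two only at the moment of insertion and typically acquire further edges later. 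Hence $P_6$-freeness of $G$ bounds neither the length of these intermediate ears (your induced-path argument applies only to ears of the final graph, whose internal vertices have degree two in $G$) nor the base cycle, which a priori could be long or a $C_5$. What your induction needs is a theorem: every two-edge connected $P_6$-free graph other than $C_5$ admits an ear decomposition whose base is all-round or bipartite all-round and whose ears are all short enough for the relevant ear lemma at the moment they are added. You assert this (``a breadth-first one'') but do not prove it, and proving it is essentially the whole content of the result. The paper's architecture exists precisely to avoid this: it never decomposes $G$ globally, but proves the local statement that every edge lies in some small all-round or bipartite all-round subgraph (Lemma~\ref{prop:thm:p6_allround}, by a case analysis on a cycle through the edge and its chords, ear decompositions being used only inside those explicit small graphs), and then globalizes by taking a possibly disconnected factor (Lemma~\ref{lem:mod-4_f-factor}) and repeatedly applying the connecting lemma (Lemmas~\ref{lem:univ_spanning_subgraph} and~\ref{crl:univ_all_edge}).

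Three quantitative points inside your ear lemma are also wrong, and they matter. First, connectivity across the ear requires that \emph{at most one} ear edge receive value $0$; ``no two consecutive zeros'' is insufficient, since two non-adjacent zero edges cut out a middle segment of the ear as a separate component even though no single vertex is isolated. Second, the number of admissible residues $a$ depends on bipartiteness: if the graph below the ear is bipartite and the ear destroys bipartiteness, the transferred demand changes $\sum_{v\in U}f(v)-\sum_{v\in V}f(v)$ by $2a \pmod 4$, so only $a\in\{0,2\}$ preserve \eqref{eq:bipartite_feasibility}, and the pigeonhole then tolerates only ears of length at most three --- this is exactly why the paper needs the separate Lemma~\ref{lem:bip_less3_all-round}; your uniform ``four choices of $a$'' fails in this case. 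Third, your claim that every ear has at most three internal vertices is false for closed ears: a triangle and a $C_5$ sharing one vertex is two-edge connected and $P_6$-free, its only cycles are that $C_3$ and that $C_5$, so any decomposition avoiding the forbidden base $C_5$ must start from the triangle and attach the $C_5$ as a closed ear of length five. The paper's length-seven tolerance (Lemma~\ref{lem:less7_univ}) happens to absorb this example, but it shows your length bookkeeping cannot be taken at face value; and the configuration combining all three issues (bipartite intermediate graph, bipartiteness-destroying ear, length at least four) is one no ear lemma can absorb, so ruling it out would require precisely the case analysis you were hoping to bypass.
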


\subsubsection{All-roundness of small graphs}\label{sec:base}
As a preliminary step of the proof, we remark the following facts, which are confirmed by brute force search. 
\begin{prop}  \label{lem:k3_univ}
 $C_3$ (i.e., $K_3$) is all-round (See Figure~\ref{fig:K3}). 
\end{prop}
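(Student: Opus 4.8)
The plan is to make the ``brute force search'' transparent by solving the defining congruences of a mod-4 $f$-factor explicitly over $\mathbb{Z}/4\mathbb{Z}$. Write $V(C_3)=\{v_1,v_2,v_3\}$ and let $e_i$ denote the edge opposite to $v_i$, so that $\delta(v_1)=\{e_2,e_3\}$, and similarly for the others. Condition \eqref{eq:modulo_constraint} then reads $x_{e_2}+x_{e_3}\equiv f(v_1)$, $x_{e_1}+x_{e_3}\equiv f(v_2)$, and $x_{e_1}+x_{e_2}\equiv f(v_3)$ modulo $4$.

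First I would sum the three congruences. Writing $S:=x_{e_1}+x_{e_2}+x_{e_3}$, this yields $2S\equiv f(v_1)+f(v_2)+f(v_3)\pmod 4$. Since $2S$ is even modulo $4$, a solution can exist only when $\sum_{v} f(v)$ is even, which is exactly \eqref{eq:fiseven}; conversely, under \eqref{eq:fiseven} the residue $S \bmod 2$ is forced, leaving exactly two admissible values of $S$ in $\{0,1,2,3\}$, say $S_0$ and $S_0+2$. Subtracting the $v_i$-congruence from the total gives $x_{e_i}\equiv S-f(v_i)\pmod 4$, so each admissible $S$ determines a unique candidate $\Vec{x}\in\{0,1,2,3\}^{E}$, and a direct check shows these candidates do satisfy \eqref{eq:modulo_constraint}. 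I therefore obtain precisely two mod-4 $f$-factors $\Vec{x}^{(0)}$ and $\Vec{x}^{(1)}$, related by $x^{(1)}_{e}\equiv x^{(0)}_{e}+2\pmod 4$ on every edge.

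It remains to guarantee connectivity, i.e.\ condition \eqref{eq:connectivity_constraint}. In $C_3$ the support $F=\{e:x_e>0\}$ induces a connected spanning subgraph if and only if at most one of the three edges is absent. The edges absent from $\Vec{x}^{(0)}$ are exactly those with $x^{(0)}_e=0$, whereas the edges absent from $\Vec{x}^{(1)}$ are exactly those with $x^{(0)}_e=2$, since adding $2$ modulo $4$ interchanges the values $0$ and $2$ while keeping $1$ and $3$ nonzero. These two edge sets are disjoint, and $C_3$ has only three edges, so they cannot both have size at least two. Hence at least one of $\Vec{x}^{(0)},\Vec{x}^{(1)}$ has at most one absent edge and is therefore connected, which furnishes the required connected mod-4 $f$-factor for the given $f$.

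The computation is entirely finite and contains no genuine obstacle; the only mild subtlety is organising the two solution branches so that the disjointness of their zero-sets is visible. This is what makes the counting argument (two-plus-two zero edges being impossible among only three edges) go through in a single line rather than as a thirty-two-case enumeration over all $f$ satisfying \eqref{eq:fiseven}.
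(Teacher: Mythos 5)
Your proof is correct, and it takes a genuinely different route from the paper: the paper disposes of this proposition purely by brute force search, exhibiting (in Figure~\ref{fig:K3}) a connected mod-4 $f$-factor for each admissible $f$, with no algebraic argument at all. You instead solve the defining system in closed form over $\mathbb{Z}/4\mathbb{Z}$: summing the three vertex congruences forces $2S\equiv\sum_v f(v)\pmod 4$, which recovers \eqref{eq:fiseven} as the exact solvability condition; each admissible $S$ then determines the unique candidate $x_{e_i}\equiv S-f(v_i)\pmod 4$ (and the direct re-substitution check you mention is genuinely needed, since subtracting congruences is not reversible when $2$ is a zero divisor); finally, the two solution branches differ by $+2$ on every edge, so their zero-sets are disjoint and cannot both contain two of the three edges, whence one branch is connected. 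What your argument buys over the enumeration is structural insight: it shows there are exactly two mod-4 $f$-factors per admissible $f$, and it isolates why all-roundness holds --- three edges are too few for both branches to be disconnected. Indeed, the identical two-branch analysis on $C_5$ fails precisely because two disjoint zero-sets of size two fit inside five edges, which is consistent with (and explains) Proposition~\ref{prop:C5_counterexample}. What the paper's brute force buys, in turn, is immediacy: for a single fixed graph on three edges, exhaustive checking is instant and requires no case for the non-invertibility of $2$ modulo $4$.
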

 \begin{figure}[tbp]
    \begin{center}
        \includegraphics[width=90mm]{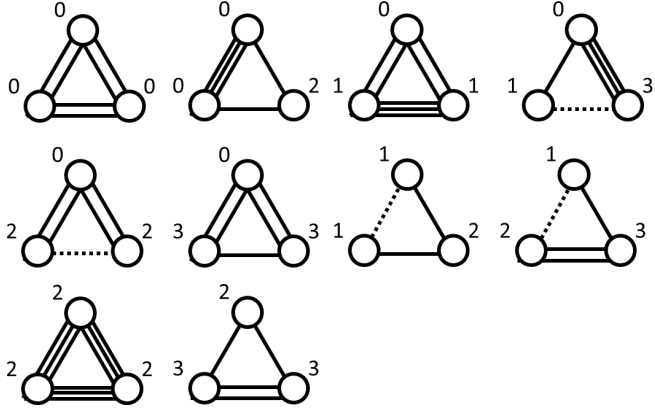}
    \end{center}
        \caption{$K_3$ is all-round. \label{fig:K3}}
 \end{figure}
 
 
 \begin{prop}  \label{lem:c4_c6_univ}
 $C_4$ and $C_6$ are bipartite all-round, respectively.
 \end{prop}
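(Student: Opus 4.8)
The plan is to treat both cases uniformly by writing $C_n$ (for $n \in \{4,6\}$) as a single cycle with edges $e_1,\dots,e_n$ in cyclic order and unknowns $x_1,\dots,x_n \in \{0,1,2,3\}$, where $x_i$ is the value on $e_i$. Labelling the vertices $v_1,\dots,v_n$ so that $v_i$ is incident to $e_{i-1}$ and $e_i$ (indices mod $n$), the degree condition \eqref{eq:modulo_constraint} becomes the cyclic system $x_{i-1}+x_i \equiv f(v_i) \pmod 4$. First I would solve this system over $\mathbb{Z}/4\mathbb{Z}$: fixing $x_1 = t$ as a free parameter and propagating around the cycle yields $x_i \equiv (\text{an affine expression in the } f(v_j)) \pm t \pmod 4$, where the coefficient of $t$ is alternately $+1$ and $-1$. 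The only constraint for consistency is the wrap-around equation at $v_1$, and a direct check shows it reduces to $\sum_{i\ \mathrm{odd}} f(v_i) \equiv \sum_{i\ \mathrm{even}} f(v_i) \pmod 4$, which is exactly the bipartite feasibility condition \eqref{eq:bipartite_feasibility}. Hence for every admissible $f$ the system is solvable for each of the four choices $t \in \{0,1,2,3\}$, and it remains only to select $t$ so that the support is connected.

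The key observation about connectivity is that, since $C_n$ is a single cycle, the support $F = \{e_i : x_i > 0\}$ induces a connected spanning subgraph if and only if at most one $x_i$ equals $0$ (removing one edge leaves a spanning path $P_n$, while removing two or more edges always disconnects a cycle). So I would count, as $t$ ranges over $\mathbb{Z}/4\mathbb{Z}$, how many coordinates vanish. Because each $x_i$ is an affine function of $t$ whose $t$-coefficient is $\pm 1$ (a unit mod $4$), the equation $x_i \equiv 0$ has exactly one solution $t \in \{0,1,2,3\}$; thus the $n$ vanishing conditions determine $n$ elements, with multiplicity, of the four-element set $\mathbb{Z}/4\mathbb{Z}$. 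By averaging, some residue $t^\ast$ is the vanishing value of at most $\lfloor n/4 \rfloor$ coordinates, which equals $1$ both for $n=4$ and $n=6$. Choosing $t = t^\ast$ makes at most one $x_i$ zero, so $F$ is connected and spanning, and by Corollary~\ref{cor:PHCcondition} (equivalently, by \eqref{eq:connectivity_constraint}) the vector $\Vec{x} \in \{0,1,2,3\}^E$ is a connected mod-4 $f$-factor. Since $f$ was an arbitrary map satisfying \eqref{eq:bipartite_feasibility}, this gives bipartite all-roundness of $C_4$ and $C_6$ as in Definition~\ref{dfn:bipartite_all-roundness}.

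The only delicate point is the connectivity bookkeeping, and I expect it to be the main obstacle: the averaging bound $\lfloor n/4\rfloor$ is exactly $1$ for $n = 4,6$ but becomes $2$ for $n = 8$, so the argument is genuinely tight and must be checked to really yield multiplicity at most one (for $n=6$ this uses that the minimum of four nonnegative integers summing to $6$ is at most $1$). This threshold is consistent with the paper's later remark that $C_{\ge 8}$-free graphs need not be all-round, which serves as a sanity check that no case has been overlooked. A brute-force verification over the finitely many admissible $f$ would of course also settle the proposition, but the parametrized solve-and-count argument is what I would actually write, since it simultaneously explains why $C_4$ and $C_6$ succeed and why the method stops precisely at $C_6$.
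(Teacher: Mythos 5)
Your argument is correct, but it takes a genuinely different route from the paper's: for this proposition (as for the all-roundness of $C_3$ and the failure of $C_5$) the paper gives no analytical proof at all, stating only that these facts ``are confirmed by brute force search.'' Your proof replaces the finite check by a structural one: on an even cycle, propagating $x_{i-1}+x_i\equiv f(v_i)\pmod 4$ from a free value $x_1=t$ gives $x_i\equiv a_i+(-1)^{i+1}t\pmod 4$, the wrap-around equation is $t$-free exactly because the cycle is even and reduces precisely to \eqref{eq:bipartite_feasibility}, and since the coefficient of $t$ is a unit mod $4$, each coordinate vanishes for exactly one of the four choices of $t$; with $n\in\{4,6\}$ coordinates and four residues, pigeonhole yields a $t^\ast$ for which at most one coordinate is zero, and a cycle minus at most one edge is connected and spanning. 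The bookkeeping you flag as delicate does hold up, including the ``if and only if'' for connectivity (deleting two or more edges from $C_n$ always leaves two separated paths or an isolated vertex). What your approach buys: it explains why \eqref{eq:bipartite_feasibility} is exactly the solvability condition on even cycles, why a connected representative can be selected precisely when $n\le 7$, and hence why the method stops between $C_6$ and $C_8$ --- consistent with the paper's remark that the all-roundness phenomenon seems to fail for $C_{\geq 8}$-free graphs; it also avoids enumerating the $4^n$ admissible maps $f$. What the paper's brute force buys: it is immediate to state and machine-checkable, and since these small graphs serve only as base cases fed into Lemma~\ref{crl:univ_all_edge} and the ear lemmas, a mechanical verification suffices for the paper's purposes. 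One point worth making explicit if you write this up: your parametrization produces a family of solutions of \eqref{eq:modulo_constraint}, not necessarily all of them, but that is harmless since you only need to exhibit one connected mod-4 $f$-factor.
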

 
 The following fact may be counterintuitive.
 \begin{prop} \label{prop:C5_counterexample}
 $C_5$ is {\em not} all-round.
 \end{prop}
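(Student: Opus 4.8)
The plan is to exhibit a single map $f\colon V(C_5)\to\{0,1,2,3\}$ satisfying \eqref{eq:fiseven} for which $C_5$ admits \emph{no} connected mod-4 $f$-factor. Label the vertices $v_1,\dots,v_5$ cyclically and write $x_i$ for the count on the edge $\{v_i,v_{i+1}\}$ (indices mod $5$). Since every vertex of $C_5$ has degree two, the constraint \eqref{eq:modulo_constraint} at $v_i$ becomes simply $x_{i-1}+x_i\equiv f(v_i)\pmod 4$, so the whole system is the cyclic linear system $\Vec{x}\mapsto(x_{i-1}+x_i)_i$ over $\mathbb{Z}/4$.

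First I would pin down the solution structure. The homogeneous system $x_{i-1}+x_i\equiv 0$ forces $x_i\equiv(-1)^i x_5$ around the cycle; because $C_5$ is an \emph{odd} cycle this closes up to $2x_5\equiv 0\pmod4$, so the kernel is exactly $\{\Vec{0},(2,2,2,2,2)\}$. Consequently, whenever $f$ is feasible (which, as in Proposition~\ref{prop:gen-handshake}, is exactly when $\sum_v f(v)$ is even) there are \emph{precisely two} mod-4 $f$-factors, and they differ by the all-twos vector; note that adding $2$ modulo $4$ interchanges $0\leftrightarrow 2$ and $1\leftrightarrow 3$ coordinatewise.

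Next I would record the connectivity criterion for $C_5$: a spanning subgraph of a $5$-cycle is connected iff it omits at most one edge, i.e.\ $\Vec{x}$ is connected iff at most one coordinate $x_i$ equals $0$. Combining this with the two-element structure, a counterexample needs an $f$ both of whose factors carry two or more zeros. Taking $f=(0,2,2,2,2)$ (so $\sum_v f(v)=8$ is even) I compute the two factors to be $(0,2,0,2,0)$ and $(2,0,2,0,2)$; the first has zeros on edges $\{v_1,v_2\},\{v_3,v_4\},\{v_5,v_1\}$ and the second on $\{v_2,v_3\},\{v_4,v_5\}$, so each leaves a disconnected support. Hence neither mod-4 $f$-factor is connected, proving that $C_5$ is not all-round.

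The only real subtlety is the choice of $f$: one must use the $0\leftrightarrow 2$ swap to guarantee that killing the zeros of one factor does not accidentally make the complementary factor connected, which is why $f$ is chosen so that one factor already has three zeros (forcing the other to have at least two). Everything else is a direct computation over $\mathbb{Z}/4$.
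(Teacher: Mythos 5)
Your proof is correct. The paper itself gives essentially no written argument for this proposition: like the other small-graph facts (Propositions~\ref{lem:k3_univ} and \ref{lem:c4_c6_univ}), it is declared to be ``confirmed by brute force search,'' and the text merely points to Figure~\ref{fig:c5_not_univ} exhibiting a map $f$ for which Problem~\ref{prob:gen} has no solution. Your route is the same in outline---exhibit one bad $f$ satisfying \eqref{eq:fiseven}---but you replace the exhaustive check by structure: since every vertex of $C_5$ has degree two, the constraints \eqref{eq:modulo_constraint} form a cyclic linear system over $\mathbb{Z}/4$, and because the cycle is odd its kernel is exactly $\{\Vec{0},(2,2,2,2,2)\}$. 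Hence every feasible $f$ has \emph{precisely two} mod-4 $f$-factors, differing by the all-twos shift, and for your choice $f=(0,2,2,2,2)$ the two factors $(0,2,0,2,0)$ and $(2,0,2,0,2)$ each have at least two zero coordinates, so neither support is connected (a spanning subgraph of $C_5$ is connected iff it omits at most one edge). This buys two things the paper's proof does not: a self-contained verification that a reader can check by hand without enumerating all of $\{0,1,2,3\}^5$, and an explanation of \emph{why} the odd cycle fails---the kernel is too small to repair disconnected supports---which also illuminates why the even cycles $C_4$ and $C_6$ (with their larger solution cosets and the restricted bipartite condition \eqref{eq:bipartite_feasibility}) can still be bipartite all-round. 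One cosmetic remark: your parenthetical that feasibility is ``exactly'' the even-sum condition cites Proposition~\ref{prop:gen-handshake}, which gives only necessity; sufficiency follows either from your own kernel/counting argument or from Lemma~\ref{lem:mod-4_f-factor}, but in any case it is not needed, since you verify your specific $f$ explicitly.
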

 Figure~\ref{fig:c5_not_univ} shows an example of $f$ 
  for which Problem~\ref{prob:gen} does not have a solution. 
 Notice that 
  $C_5$ clearly has a $\PHC_3$. 
  \begin{figure}[tbp]
    \begin{tabular}{cc}
        \begin{minipage}{0.45\textwidth}
            \begin{center}
                \includegraphics[width=33mm]{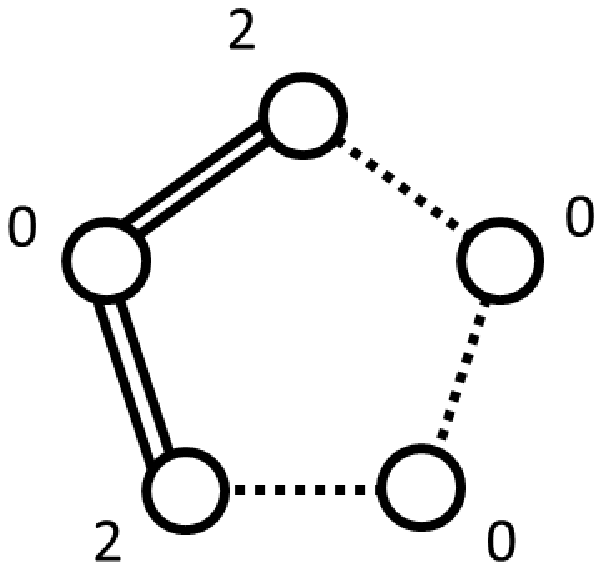}
            \end{center}
        \end{minipage}
        \begin{minipage}{0.45\textwidth}
            \begin{center}
                \includegraphics[width=33mm]{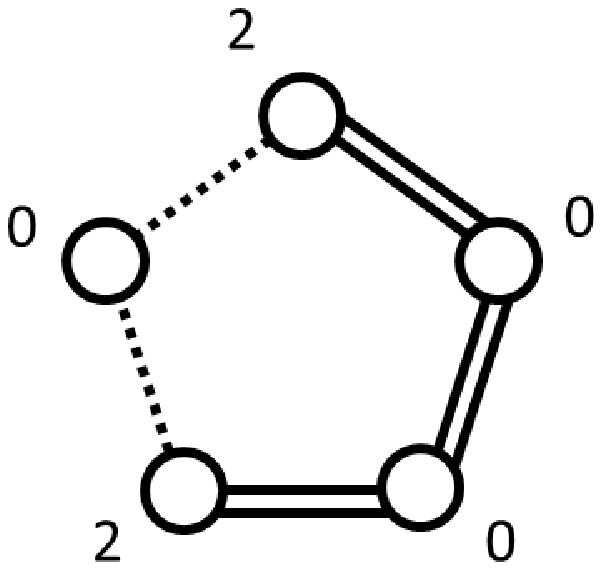}
            \end{center}
        \end{minipage}
    \end{tabular}
    \caption{A counterexample of the all-roundness ($C_5$). \label{fig:c5_not_univ}}
 \end{figure}

\subsection{All-roundness of $C_{\geq 5}$-free: Proof of Theorem~\ref{thm:c5_univ}}\label{sec:c5_univ}
 This section shows Theorem~\ref{thm:c5_univ}, 
  presenting some useful idea on a mod-4 $f$-factor of a graph, 
  and all-roundness or bipartite all-roundness. 
 To begin with, we show for any appropriate map $f$ that 
  it is easy to find a mod-$4$ $f$-factor, which may be disconnected. 
 \begin{lem} \label{lem:mod-4_f-factor}
 Any connected non-bipartite graph has a mod-$4$ $f$-factor for any map $f$ satisfying~\eqref{eq:fiseven}. 
 Any connected bipartite graph has a mod-$4$ $f$-factor for any map $f$ satisfying~\eqref{eq:bipartite_feasibility}. 
 \end{lem}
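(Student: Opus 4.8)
The plan is to work directly over $\mathbb{Z}/4\mathbb{Z}$: since \eqref{eq:modulo_constraint} depends only on each $x_e$ modulo $4$, and $\{0,1,2,3\}$ is a complete set of residues, it suffices to build an \emph{integer} vector $\Vec{x}\in\mathbb{Z}_{\geq 0}^E$ satisfying \eqref{eq:modulo_constraint} and then replace every coordinate by its residue modulo $4$; this preserves all the constraints and lands in $\{0,1,2,3\}^E$. Connectivity plays no role in this lemma, so Theorem~\ref{thm:T-join} is available on the whole (connected) graph.

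I would construct $\Vec{x}$ in two layers. First fix the parities: let $T=\{v\in V: f(v)\text{ is odd}\}$. Because $f$ satisfies \eqref{eq:fiseven} (which is also implied by \eqref{eq:bipartite_feasibility}), $|T|$ is even, so a $T$-join $J$ exists by Theorem~\ref{thm:T-join}; putting $x_e=1$ on $J$ gives $\sum_{e\in\delta(v)}x_e=d_J(v)\equiv f(v)\pmod 2$ for every $v$, since $d_J(v)$ is odd exactly on $T$. The residual $r(v):=\bigl(f(v)-d_J(v)\bigr)\bmod 4$ is then even at every vertex, hence $r(v)\in\{0,2\}$. Setting $T'=\{v:r(v)=2\}$, it remains to add the value $2$ along a $T'$-join $J'$: each edge of $J'$ contributes $2$ to its two endpoints, so vertex $v$'s sum increases by $2\,d_{J'}(v)\equiv 2\,[v\in T']\pmod 4$, which flips precisely the vertices of $T'$ from residue $2$ to $0$ while leaving all parities intact. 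Reducing modulo $4$ then yields a valid mod-$4$ $f$-factor.

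The one real obstacle is that a $T'$-join exists only when $|T'|$ is even, which is \emph{not} automatic from \eqref{eq:fiseven} alone, and this is where the two cases diverge. In the \emph{bipartite} case $G=(U,V;E)$ I would exploit the identity $\sum_{u\in U}d_J(u)=|J|=\sum_{v\in V}d_J(v)$, valid because each edge has exactly one endpoint on each side. Summing $r$ over $U$ and over $V$ separately gives $2|T'\cap U|\equiv\sum_{u\in U}f(u)-|J|$ and $2|T'\cap V|\equiv\sum_{v\in V}f(v)-|J|\pmod 4$; subtracting yields $2\bigl(|T'\cap U|-|T'\cap V|\bigr)\equiv\sum_{u\in U}f(u)-\sum_{v\in V}f(v)\pmod 4$, and hypothesis \eqref{eq:bipartite_feasibility} makes the right-hand side $\equiv 0$, forcing $|T'\cap U|\equiv|T'\cap V|\pmod 2$ and hence $|T'|$ even. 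In the \emph{non-bipartite} case, if $|T'|$ happens to be odd I would repair it with an odd cycle $C$: adding $1$ to every edge of $C$ raises each sum on $C$ by $2$ (so all parities are preserved) while flipping $r$ at exactly the $|V(C)|$ vertices of $C$; since $|V(C)|$ is odd this flips the parity of $|T'|$ to even, after which the $T'$-join step goes through.

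I would close by noting that the coordinatewise reduction modulo $4$ at the very end absorbs any overlaps among $J$, $C$, and $J'$ (an edge might temporarily receive as much as $1+1+2=4$), so the final vector genuinely lies in $\{0,1,2,3\}^E$. As an aside, a less constructive route identifies the mod-$4$ $f$-factors with the image of the $\mathbb{Z}/4$-linear map $\Vec{x}\mapsto(\sum_{e\in\delta(v)}x_e)_v$ and computes its cokernel: the functionals $\Vec{y}$ vanishing on the image are those with $y_u+y_w=0$ for every edge $\{u,w\}$, which forces $\Vec{y}\in\{\Vec{0},(2,\dots,2)\}$ in the non-bipartite case and $\Vec{y}$ constant-up-to-sign across the bipartition otherwise, recovering exactly \eqref{eq:fiseven} and \eqref{eq:bipartite_feasibility}; I prefer the $T$-join construction as it matches the paper's constructive style and reuses Theorem~\ref{thm:T-join}.
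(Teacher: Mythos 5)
Your proposal is correct and follows essentially the same route as the paper's proof: a $T$-join on the odd-$f$ vertices to fix parities, a second join with value $2$ on the residue-$2$ vertices, an odd cycle to repair the parity of that second set in the non-bipartite case, and the bipartition identity $\sum_{u\in U}d_J(u)=|J|=\sum_{v\in V}d_J(v)$ to show the repair is unnecessary in the bipartite case. Your explicit final reduction modulo $4$ matches the paper's writing of $x''_e+2 \pmod 4$, so there is nothing to fix.
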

 \begin{proof}
 We give a constructive proof. 
 Let $T := \{v \in V \mid f(v) \mbox{ is odd} \}$. 
 We remark that $|T|$ is even since $\sum_{v \in V}f(v)$ is even by \eqref{eq:fiseven}. 
 Then, 
  let $J \subseteq E$ be a $T$-join, and  
  let $\Vec{x} \in \{0,1,2,3\}^E$ be given by 
\begin{equation*} 
 x_e = 
 \left\{\begin{array}{ll} 
  1 & \mbox{if } e \in J \\ 
  0 & \mbox{otherwise. } 
  \end{array}\right.
\end{equation*}
 Let $f' \colon V \to \{0,1,2,3\}$ be 
\begin{equation} \label{eq:f_prime}
 f'(v) = \left(f(v) - \sum_{e \in \delta(v)} x_e\right)  \bmod{4}. 
\end{equation}
 Remark that $f'(v)$ is even for any $v \in V$, i.e., $f'(v) = 0 \mbox{ or } 2$ for any $v \in V$. 
 Let $S = \{v \in V \mid f' (v) = 2 \}$. 
 If $|S|$ is even, then 
  let $J'$ be a $S$-join and 
  let $\Vec{x}' \in \{0,1,2,3\}^E$ be defined by 
\begin{equation*} 
 x'_e = 
 \left\{\begin{array}{ll} 
  x_e+2 & \mbox{if } e \in J' \\ 
  x_e & \mbox{otherwise. } 
  \end{array}\right.
\end{equation*}
 It is not difficult to observe 
   that $\Vec{x}'$ satisfies \eqref{eq:modulo_constraint}, and $x'_e \leq 3$ holds for any $e \in E$. 
 Thus, we obtain the claim in the case. 
 Here we remark that if $G$ is bipartite then 
  $|S|$ is even, 
 since \eqref{eq:bipartite_feasibility} implies  
\begin{equation*}
 \begin{array}{ll}
 \displaystyle \sum_{v \in U \cup V} f^\prime (v) &= \displaystyle \sum_{v \in U} f^\prime (v) + \sum_{v \in V} f^\prime (v) \\
 &\equiv \displaystyle \sum_{v \in U} f^\prime (v) - \sum_{v \in V} f^\prime (v) \\
 &\equiv 0 \pmod{4}.
 \end{array}
 \end{equation*}

 If $|S|$ is odd, we need an extra process. 
 Notice that $G$ is non-bipartite in the case, since the above argument. 
 Let $C$ be an odd cycle of $G$ and let $\Vec{x}'' \in \{0,1,2,3\}^{E(G)}$ be 
\begin{equation*} 
 x''_e = 
 \left\{\begin{array}{ll} 
  x_e+1 & \mbox{if } e \in E(C) \\ 
  x_e & \mbox{otherwise. } 
  \end{array}\right.
\end{equation*}
 Let $f'' \colon V \to \{0,1,2,3\}$ be 
\begin{equation} \label{eq:f_dprime}
 f''(v) = \left(f(v) - \sum_{e \in \delta(v)} x''_e\right)  \bmod{4}. 
\end{equation}
 Let 
  $S' = \{v \in V(G) \mid f'' (v) = 2 \}$. 
 Then, $S' = S \oplus V(C)$ holds, 
  which implies $|S'|$ is even  
  since $|S|$ and $|V(C)|$ are respectively odd. 
 Let $J' \subseteq E(G)$ be an $S'$-join and 
 let $\Vec{x}''' \in \{0,1,2,3\}^{E(G)}$ be 
\begin{equation*} 
 x'''_e = 
 \left\{\begin{array}{ll} 
  x''_e+2 \pmod{4} & \mbox{if } e \in J' \\ 
  x''_e & \mbox{otherwise. } 
  \end{array}\right.
\end{equation*}
Then we obtain a mod-4 $f$-factor. 
 \end{proof}

 To obtain a {\em connected} mod-4 $f$-factor, 
  the notions of all-roundness and bipartite all-roundness play an important role. 
 The following lemma is an easy observation from the definition. 
 \begin{lem}[connecting lemma]\label{lem:univ_spanning_subgraph}
 Let $\Vec{x}$ be a mod-4 $f$-factor of $G$, and 
  let $H_1$ and $H_2$ be a distinct pair of connected components of $(G,\Vec{x})$. 
 Suppose that there is a connected subgraph $H$ of $G$ 
   such that 
   $V(H)$ intersects both $V(H_1)$ and $V(H_2)$, 
  and that 
   $H$ is all-round or bipartite all-round. 
 Then, $G$ has another mod-4 $f$-factor $\Vec{x}'$ 
  such that 
   $H_1$ and $H_2$ are connected in $(G,\Vec{x}')$ 
   where other connected components are respectively kept being connected.  
 \end{lem}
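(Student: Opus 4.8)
The plan is to localize the entire modification to the subgraph $H$: I will leave $\Vec{x}$ untouched on $E(G)\setminus E(H)$ and overwrite only its restriction to $E(H)$ by a suitable \emph{connected} mod-4 factor of $H$, whose existence is precisely what all-roundness (or bipartite all-roundness) guarantees. Since every edge of $E(H)$ joins two vertices of $V(H)$, rewiring inside $H$ can only affect how the pieces touching $V(H)$ hang together; the all-roundness of $H$ then forces all these pieces into one blob, merging $H_1$ and $H_2$, while leaving any component disjoint from $V(H)$ verbatim.

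Concretely, first I would record the boundary demand that $H$ must meet. For $v\in V(H)$ write $b(v)=\sum_{e\in\delta_G(v)\setminus E(H)}x_e$ for the contribution of the frozen outside edges, and set $g(v):=(f(v)-b(v))\bmod 4$. Then any $\Vec{z}\in\{0,1,2,3\}^{E(H)}$ with $\sum_{e\in\delta_H(v)}z_e\equiv g(v)\pmod 4$ for all $v\in V(H)$, extended by $\Vec{x}$ outside $E(H)$, automatically satisfies \eqref{eq:modulo_constraint} at every vertex of $G$: at a vertex of $V(H)$ the inside and outside contributions add back up to $f(v)$, and vertices outside $V(H)$ are untouched. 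The key point is that $g$ is a \emph{feasible} demand for $H$, because $\Vec{x}$ restricted to $E(H)$ is itself a mod-4 $g$-factor (indeed $\sum_{e\in\delta_H(v)}x_e\equiv f(v)-b(v)\equiv g(v)$). Hence by Proposition~\ref{prop:gen-handshake} the map $g$ satisfies \eqref{eq:fiseven}, and if $H$ is bipartite then Proposition~\ref{prop:bipartite_feasibility} yields \eqref{eq:bipartite_feasibility}. Invoking Definition~\ref{dfn:all-roundness} (resp.\ Definition~\ref{dfn:bipartite_all-roundness}) then produces a \emph{connected} mod-4 $g$-factor $\Vec{z}$ of $H$, and I take $\Vec{x}'$ to be $\Vec{z}$ on $E(H)$ and $\Vec{x}$ elsewhere, which is a mod-4 $f$-factor of $G$.

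It remains to control connectivity, which I expect to be the only delicate point. Components of $(G,\Vec{x})$ that avoid $V(H)$ have all incident edges outside $E(H)$, so they survive unchanged. For a component $K$ meeting $V(H)$ I would show that every $u\in V(K)$ still reaches $V(H)$ in $(G,\Vec{x}')$: take a path in $K$ from $u$ to a vertex of $V(K)\cap V(H)$ and let $w$ be its first vertex lying in $V(H)$; each edge of the $u$--$w$ segment has an endpoint outside $V(H)$, hence lies outside $E(H)$ and keeps its positive value in $\Vec{x}'$. Thus $u$ is still joined to $w\in V(H)$, and since $(H,\Vec{z})$ is connected, all of $V(H)$ lies in a single component of $(G,\Vec{x}')$. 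Consequently every component meeting $V(H)$—in particular $H_1$ and $H_2$—is absorbed into this one component, no original component is split apart, and $\Vec{x}'$ is the desired mod-4 $f$-factor with $H_1$ and $H_2$ connected.
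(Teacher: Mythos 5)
Your proof is correct and follows essentially the same route as the paper's: you restrict $\Vec{x}$ to $E(H)$, observe that this restriction is a mod-4 factor of $H$ for the induced demand (your $g$ coincides with the paper's $f_H$), invoke Propositions~\ref{prop:gen-handshake} and~\ref{prop:bipartite_feasibility} to justify applying (bipartite) all-roundness, and paste a connected mod-4 $g$-factor of $H$ back into $\Vec{x}$. Your explicit treatment of why components meeting $V(H)$ are not split apart is a welcome elaboration of a step the paper leaves as an observation, but it is the same argument.
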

 \begin{proof}
 As given $\Vec{x} \in \{0,1,2,3\}^{E(G)}$ and $H$ described in the hypothesis, 
   we define a map $f_H \colon V(H) \to \{0,1,2,3\}$ by 
\begin{equation*}
 f_H(v) = \sum_{e \in \delta_H(v)} x_e \bmod{4}.
\end{equation*}
 Let $\Vec{y} \in \{0,1,2,3\}^{E(H)}$ be defined by $y_e=x_e$ for each $e \in E(H)$, then 
  clearly $\Vec{y}$ is a mod-4 $f_H$-factor of $H$. 
 Proposition~\ref{prop:gen-handshake} implies that 
  $f_H$ satisfies \eqref{eq:fiseven}, as well as 
 Proposition~\ref{prop:bipartite_feasibility} 
  implies that $f_H$ satisfies \eqref{eq:bipartite_feasibility} if $H$ is bipartite. 
 The hypothesis that $H$ is all-round or bipartite all-round implies that 
  there is a {\em connected} mod-4 $f_H$-factor $\Vec{y}' \in \{0,1,2,3\}^{E(H)}$ of $H$. 
Let $\Vec{x}' \in \{0,1,2,3\}^{E(G)}$ be 
\begin{equation*}
 x'_e = \left\{
 \begin{array}{ll}
  y'_e & \mbox{if } e \in E(H), \\
  x_e & \mbox{otherwise,}
 \end{array}
\right.
\end{equation*}
 then, it is not difficult to observe that $\Vec{x}'$ is a desired mod-4 $f$-factor of $G$ 
 (See also Figure~\ref{fig:combining_strategy}).
\end{proof} 
 
 \begin{figure}[tbp]
	\begin{tabular}{cc}
		\begin{minipage}{0.45\textwidth}
			\begin{center}
				\includegraphics[width=40mm]{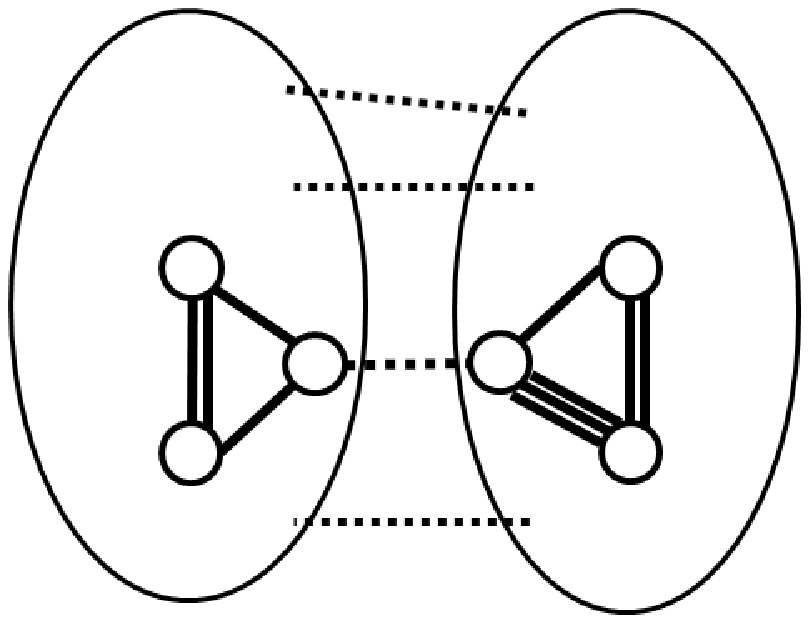}
			\end{center}
		\end{minipage}
		\begin{minipage}{0.45\textwidth}
			\begin{center}
				\includegraphics[width=40mm]{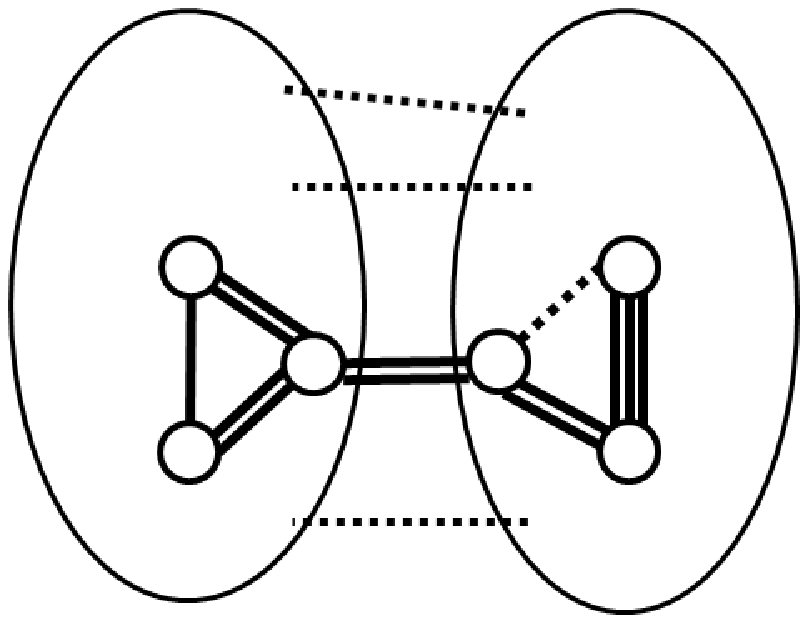}
			\end{center}
		\end{minipage}
	\end{tabular}
	\caption{Connecting lemma. 
    Left figure shows that $H_1$ and $H_2$ are disconnected in $(G,\Vec{x})$. 
    Right figure shows that a mod-4 $f$-factor in $H$ is replaced by another {\em connected} one. 
    The hypothesis that $H$ is all-round or bipartite all-round implies such a connected mod-4 $f$-factor in~$H$. 
\label{fig:combining_strategy}}
\end{figure}
 
 It is not difficult to see that 
  Lemmas~\ref{lem:mod-4_f-factor} and \ref{lem:univ_spanning_subgraph} 
  imply the following useful lemma. 
\begin{lem}  \label{crl:univ_all_edge}
 Let $G$ be a graph. 
 Suppose that 
  for any edge $e \in E(G)$ there exists a subgraph $H$ of $G$ such that $e \in E(H)$ and $H$ is 
  all-round or bipartite all-round. 
 Then, $G$ is all-round, or bipartite all-round. 
\end{lem}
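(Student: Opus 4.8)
The plan is to reduce everything to two results already proved: Lemma~\ref{lem:mod-4_f-factor}, which supplies a (possibly disconnected) mod-4 $f$-factor for every admissible map $f$, and the connecting lemma (Lemma~\ref{lem:univ_spanning_subgraph}), which merges two connected components of $(G,\Vec{x})$ as soon as a connected all-round or bipartite all-round subgraph straddles them. First I would fix the target conclusion according to the parity of $G$: a non-bipartite graph cannot be bipartite all-round, so I aim to show that $G$ is all-round when $G$ is non-bipartite and bipartite all-round when $G$ is bipartite. In either case I take an arbitrary map $f$ satisfying the appropriate admissibility condition (\eqref{eq:fiseven} if $G$ is non-bipartite, \eqref{eq:bipartite_feasibility} if $G$ is bipartite) and must produce a \emph{connected} mod-4 $f$-factor; throughout I use that $G$ is connected, which is in any case necessary for a connected mod-4 $f$-factor to exist.

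Next I would apply Lemma~\ref{lem:mod-4_f-factor} to obtain some mod-4 $f$-factor $\Vec{x}$, which need not be connected, and then induct on the number $k$ of connected components of $(G,\Vec{x})$. If $k=1$ we are done. If $k \geq 2$, then since $G$ is connected and the components partition $V$, there is an edge $e=\{u,v\} \in E(G)$ whose endpoints lie in two distinct components $H_1$ and $H_2$ of $(G,\Vec{x})$; note that $e$ itself may have $x_e = 0$, which is harmless because the hypothesis covers \emph{every} edge of $G$. By hypothesis $e$ lies in some subgraph $H$ that is all-round or bipartite all-round; such an $H$ is connected and contains both $u$ and $v$, so $V(H)$ meets both $V(H_1)$ and $V(H_2)$. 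The connecting lemma then produces a new mod-4 $f$-factor $\Vec{x}'$ in which $H_1$ and $H_2$ are merged while no previously connected component is broken, so the number of components strictly decreases. Iterating, after finitely many steps I reach a connected mod-4 $f$-factor of $G$.

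Since $f$ was an arbitrary admissible map, this yields all-roundness (resp.\ bipartite all-roundness) of $G$ by Definition~\ref{dfn:all-roundness} (resp.\ Definition~\ref{dfn:bipartite_all-roundness}). The argument is a straightforward gluing induction, so there is no single hard computation; the only points needing care are (i) that a crossing edge $e$ always exists while $(G,\Vec{x})$ is disconnected — this is exactly connectivity of $G$ applied to the vertex partition induced by the components — and (ii) that each use of the connecting lemma strictly lowers the component count so that the process terminates. The admissibility of the induced boundary map $f_H$ that is required before invoking $H$'s (bipartite) all-roundness has already been checked inside the proof of Lemma~\ref{lem:univ_spanning_subgraph} via Propositions~\ref{prop:gen-handshake} and~\ref{prop:bipartite_feasibility}, so I would not need to reprove it here.
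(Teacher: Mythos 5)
Your proof is correct and follows essentially the same route as the paper: obtain a possibly disconnected mod-4 $f$-factor via Lemma~\ref{lem:mod-4_f-factor}, then repeatedly merge components with the connecting lemma (Lemma~\ref{lem:univ_spanning_subgraph}). You merely spell out details the paper leaves implicit (the existence of a crossing edge by connectivity of $G$, the strict decrease of the component count, and the bipartite/non-bipartite dichotomy), which is a faithful elaboration rather than a different argument.
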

\begin{proof}
 For any appropriate $f$, 
  meaning that $f$ satisfies \eqref{eq:fiseven}, and \eqref{eq:bipartite_feasibility} if $G$ is bipartite, 
  Lemma~\ref{lem:mod-4_f-factor} implies a mod-4 $f$-factor $\Vec{x}$. 
 Since the hypothesis, 
  we can obtain a connected mod-4 $f$-factor by 
  iteratively applying the connecting lemma (Lemma~\ref{lem:univ_spanning_subgraph}) to $\Vec{x}$. 
\end{proof}
  In fact, the all-roundness of $C_{\geq 5}$-free (Theorem~\ref{thm:c5_univ}) is immediate from Lemma~\ref{crl:univ_all_edge}. 
\begin{proof}[Proof of Theorem~\ref{thm:c5_univ}]
 Let $G$ be a two-edge connected $C_{\geq 5}$-free graph. 
 Then, any edge $e$ of $G$ is contained in $C_3$ or $C_4$. 
 Since $C_3$ is all-round (Proposition~\ref{lem:k3_univ}), and 
  $C_4$ is bipartite all-round (Proposition~\ref{lem:c4_c6_univ}), 
 Lemma~\ref{crl:univ_all_edge} implies that $G$ is all-round or bipartite all-round. 
\end{proof}
 We will use Lemma~\ref{crl:univ_all_edge} again 
  in the proof of Theorem~\ref{thm:p6_univ},  
  with some extra arguments. 
 We also show another example in Section~\ref{sec:dense} 
  where we apply Lemma~\ref{sec:dense} to dense graphs. 

\subsection{All-roundness of $P_6$-free: Proof of Theorem~\ref{thm:p6_univ}}\label{sec:p6_univ}
 We cannot directly apply Lemma~\ref{crl:univ_all_edge} to a $P_6$-free graph $G$, 
  since $G$ may contain $C_5$ that is NOT all-round (recall Proposition~\ref{prop:C5_counterexample}). 
 To prove Theorem~\ref{thm:p6_univ}, 
  we investigate the all-roundness (or bipartite all-roundness) of two-edge connected graphs, 
  considering ear-decomposition in Section~\ref{sec:ear}.

\subsubsection{Ear-decomposition for mod-$4$ all-round graphs}\label{sec:ear}
 This section presents three lemmas, which claim a short ear preserves the all-roundness or bipartite all-roundness.  
\begin{lem}  \label{lem:less7_univ}
 Let $G$ and $G'$ be two-edge connected non-bipartite graphs, 
  where $G'$ is given by adding to $G$ an ear of length at most seven. 
 If $G$ is all-round, then $G'$ is all-round. 
\end{lem}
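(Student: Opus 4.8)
The plan is to reduce the all-roundness of $G'$ to that of $G$: I would choose the values on the ear edges so that every new (internal) vertex meets its parity constraint \eqref{eq:modulo_constraint}, and then arrange that the leftover demand at the two attachment vertices is absorbed by the all-roundness of $G$. Write the ear as $v_0 e_1 v_1 \cdots e_\ell v_\ell$ with $v_0,v_\ell \in V(G)$ and $v_1,\dots,v_{\ell-1}$ the new degree-two vertices of $G'$, and fix an arbitrary $f\colon V(G')\to\{0,1,2,3\}$ satisfying \eqref{eq:fiseven}; the goal is to produce a connected mod-4 $f$-factor of $G'$.

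First I would determine the ear edges. Since each internal $v_i$ is incident only to $e_i$ and $e_{i+1}$, the constraint \eqref{eq:modulo_constraint} at $v_i$ reads $x_{e_i}+x_{e_{i+1}}\equiv f(v_i)\pmod 4$ for $i=1,\dots,\ell-1$. These $\ell-1$ congruences pin down $x_{e_2},\dots,x_{e_\ell}$ in terms of the single free value $t:=x_{e_1}$, namely $x_{e_i}\equiv (-1)^{i-1}t + c_i \pmod 4$ for constants $c_i$ depending only on the $f(v_j)$. Next I would exploit $\ell\le 7$ to pick $t$ well. For each fixed $i$ the map $t\mapsto (-1)^{i-1}t+c_i$ is a bijection of $\mathbb{Z}/4\mathbb{Z}$, so $x_{e_i}$ vanishes for exactly one of the four choices $t\in\{0,1,2,3\}$. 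Summing over the $\ell\le 7$ ear edges, the total number of vanishing edges counted over all four values of $t$ equals $\ell\le 7<8$; hence some $t$ makes \emph{at most one} ear edge equal to $0$. I would fix such a $t$ and the resulting values $x_{e_1},\dots,x_{e_\ell}\in\{0,1,2,3\}$. This counting step is the heart of the argument and explains the bound $7$: with $\ell=8$ one could have exactly two zeros for every $t$, and the connectivity argument below would fail.

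Finally I would transfer the remaining demand to $G$. Define $f_G\colon V(G)\to\{0,1,2,3\}$ by $f_G(v)=f(v)$ for $v\notin\{v_0,v_\ell\}$, together with $f_G(v_0)\equiv f(v_0)-x_{e_1}$ and $f_G(v_\ell)\equiv f(v_\ell)-x_{e_\ell}\pmod 4$ (subtracting both $x_{e_1}$ and $x_{e_\ell}$ at $v_0$ in the closed-ear case $v_0=v_\ell$). Reducing the internal congruences mod $2$ and summing telescopes to $\sum_{i=1}^{\ell-1}f(v_i)\equiv x_{e_1}+x_{e_\ell}\pmod 2$, which combined with \eqref{eq:fiseven} for $f$ shows $\sum_{v\in V(G)}f_G(v)$ is even; thus $f_G$ satisfies \eqref{eq:fiseven}. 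Since $G$ is non-bipartite and all-round (Definition~\ref{dfn:all-roundness}), it has a connected mod-4 $f_G$-factor $\Vec{x}_G$.

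Gluing $\Vec{x}_G$ to the chosen ear values produces $\Vec{x}'\in\{0,1,2,3\}^{E(G')}$, and a direct check confirms \eqref{eq:modulo_constraint} at every vertex: internal vertices by construction, the endpoints because the subtraction cancels the ear contribution, and all remaining $G$-vertices because $\Vec{x}_G$ is a mod-4 $f_G$-factor. For the connectivity condition \eqref{eq:connectivity_constraint}, the $G$-part is connected by all-roundness, and since at most one ear edge is $0$ the ear splits into at most two subpaths, each containing an attachment vertex $v_0$ or $v_\ell$ of $V(G)$; hence every internal vertex is joined to the connected $G$-part and $(G',\Vec{x}')$ is connected, so $\Vec{x}'$ witnesses the all-roundness of $G'$. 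The main obstacle is the counting step guaranteeing a choice of $t$ with at most one zero ear edge; the rest is a routine parity and gluing verification.
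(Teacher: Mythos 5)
Your proposal is correct and follows essentially the same route as the paper: a one-parameter family (over $\mathbb{Z}/4\mathbb{Z}$) of ear assignments consistent with the internal parity constraints, a pigeonhole count using $\ell \leq 7 < 8$ to find a choice with at most one zero ear edge, and absorption of the residual demand at $v_0,v_\ell$ into $G$ via its all-roundness. The only cosmetic difference is that the paper generates the family by shifting a factor obtained from Lemma~\ref{lem:mod-4_f-factor} by $\pm a$, whereas you solve the chain of congruences directly with $t=x_{e_1}$ free and verify condition \eqref{eq:fiseven} for $f_G$ by the telescoping parity argument; these yield the same family and the same conclusion.
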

\begin{proof}
 The proof idea is similar to Lemma~\ref{crl:univ_all_edge}: 
 construct a mod-4 $f$-factor which may not be connected (Lemma~\ref{lem:mod-4_f-factor}), and 
 let it be connected using the connecting lemma (Lemma~\ref{lem:univ_spanning_subgraph}) 
  on assumption that $G$ is all-round. 
 A technical issue of the proof idea is that the ear is not bipartite all-round. 
 We show that a desired mod-4 $f$-factor always exist if the ear is short.  

 Let $p=v_0e_1v_1e_2\cdots e_{\ell}v_{\ell}$ denote the ear added to $G$, and 
 let $P=(\{v_0,v_1,\ldots,v_{\ell}\},\{e_1,\ldots,e_{\ell}\})$, where $\ell \leq 7$. 
 Remark that $v_0,v_{\ell} \in V(G)$. 
 As given an arbitrary map $f \colon V(G') \to \{0,1,2,3\}$ satisfying \eqref{eq:fiseven}, 
  let $\Vec{x} \in \{0,1,2,3\}^{E(G')}$ be a mod-4 $f$-factor implied by Lemma~\ref{lem:mod-4_f-factor}. 
 For convenience, 
  let $\Vec{x}^{(0)} = (x_1,\ldots,x_{\ell}) \in \{0,1,2,3\}^{E(P)}$ 
  where $x_i$ denotes $x_{e_i}$ for simplicity. 
 We also define $\Vec{x}^{(a)} \in \{0,1,2,3\}^{E(P)}$ for each $a\in \{1,2,3\}$ by 
\begin{eqnarray*}
 x^{(a)}_i &=& 
 \left\{ \begin{array}{ll} 
  (x_i + a) \bmod 4 & \mbox{if $i$ is odd,} \\
  (x_i - a) \bmod 4 & \mbox{if $i$ is even.} 
 \end{array}\right. 
\end{eqnarray*} 
 Notice that $x^{(a)}_i$ for each $a \in \{0,1,2,3\}$ is a mod-4 $f_P^{(a)}$-factor 
 for a map $f_P^{(a)} \colon V(P) \to \{0,1,2,3\}$ given by 
\begin{eqnarray*}
 f_P^{(a)}(v) &=& 
 \left\{ \begin{array}{ll} 
  (x_1+a) \bmod{4} & \mbox{if } v = v_0, \\
  (x_i +x_{i+1}) \bmod{4} = f(v) & \mbox{if } v \in \{v_1,\ldots,v_{\ell-1}\} , \\
  (x_{\ell}+(-1)^{\ell}a)  \bmod{4} & \mbox{if } v = v_{\ell}.  
 \end{array}\right. 
\end{eqnarray*}
 At the same time, let $f_G^{(a)} \colon V(G) \to \{0,1,2,3\}$ be defined for each $a \in \{0,1,2,3\}$ by 
\begin{eqnarray*}
 f_G^{(a)}(v) &=& 
 \left\{ \begin{array}{ll} 
  \left(f(v) - f_P^{(a)}(v)\right) \bmod{4} & \mbox{if } v \in \{v_0,v_{\ell}\}, \\
  f(v) & \mbox{otherwise}, 
 \end{array}\right. 
\end{eqnarray*}
 then $G$ has a {\em connected} mod-4 $f_G^{(a)}$-factor $\Vec{y}^{(a)} \in \{0,1,2,3\}^{E(G)}$ 
  since $G$ is all-round by the hypothesis. 
 If $(P,\Vec{x}^{(a)})$ consists of at most two connected component, 
  i.e., 
  $x^{(a)}_i = 0$ holds at most one $i \in \{1,2,\ldots,\ell\}$, 
  then $\Vec{x}^{(a)}$ and $\Vec{y}^{(a)}$ implies a connected mod-4 $f$-factor of $G'$.

 Then, we claim that there is $a \in \{0,1,2,3\}$ such that 
  $(P,\Vec{x}^{(a)})$ consists of at most two connected component.  
 Remark that $\{ x_i^{(0)}, x_i^{(1)}, x_i^{(2)}, x_i^{(3)} \} = \{0,1,2,3\}$ holds for each $i \in \{1,2,\ldots,\ell\}$. 
 By a version of the pigeon hole principle, 
  the hypothesis of $\ell \leq 7$ implies that  
  there is an index $a \in \{0,1,2,3\}$ such that $x^{(a)}_j = 0$ holds for at most one $j \in \{1,2,\ldots,\ell\}$; 
 otherwise the multiset $\{ x_j^{(a)} \mid a \in \{0,1,2,3\},\ j \in \{1,2,\ldots,\ell\} \}$ contains 8 or more 0's. 
 The $\Vec{x}^{(a)}$ is the desired mod-4 $f_P^{(a)}$-factor for $P$, and 
 we obtain the claim. 
\end{proof}
 
 The following lemma is a version of Lemma~\ref{lem:less7_univ} for bipartite graphs. 
 The proof is essentially the same, and we omit it. 
\begin{lem}  \label{lem:bip_less7_bip_all-round}
 Let $G$ and $G'$ be two-edge connected bipartite graphs, 
  where $G'$ is given by adding to $G$ an ear of length at most seven. 
 If $G$ is bipartite all-round, then $G'$ is bipartite all-round. 
\qed
\end{lem}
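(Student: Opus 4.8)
The plan is to follow the proof of Lemma~\ref{lem:less7_univ} almost verbatim, replacing \emph{all-round} by \emph{bipartite all-round} throughout, and to insert one verification special to the bipartite setting. Write the added ear as $p=v_0e_1v_1\cdots e_\ell v_\ell$ with $\ell\le 7$ and $v_0,v_\ell\in V(G)$, let $P$ be the corresponding path subgraph, and fix the two parts $A,B$ of the bipartite graph $G$ with $v_0\in A$; since the sides alternate along $p$, we have $v_\ell\in A$ when $\ell$ is even and $v_\ell\in B$ when $\ell$ is odd. Given any $f\colon V(G')\to\{0,1,2,3\}$ satisfying the bipartite condition \eqref{eq:bipartite_feasibility} for $G'$, first apply Lemma~\ref{lem:mod-4_f-factor} (bipartite case) to obtain a, possibly disconnected, mod-4 $f$-factor $\Vec{x}$. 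Then define the alternating ear-shifts $\Vec{x}^{(a)}$ for $a\in\{0,1,2,3\}$ exactly as in Lemma~\ref{lem:less7_univ}; as there, each $\Vec{x}^{(a)}$ preserves the degree-sum at every internal vertex $v_1,\dots,v_{\ell-1}$ and alters only the two endpoints $v_0,v_\ell$.

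The one genuinely new step is to check that the complementary maps $f_G^{(a)}\colon V(G)\to\{0,1,2,3\}$, which absorb the endpoint changes into $G$, still satisfy the \emph{bipartite} feasibility condition \eqref{eq:bipartite_feasibility} for $G$; only then may one invoke the bipartite all-roundness of $G$ (rather than plain all-roundness) to obtain a \emph{connected} mod-4 $f_G^{(a)}$-factor $\Vec{y}^{(a)}$. For $a=0$ this is immediate, since $\Vec{x}$ restricted to $E(G)$ is a mod-4 $f_G^{(0)}$-factor and Proposition~\ref{prop:bipartite_feasibility} then forces $\sum_{v\in A}f_G^{(0)}(v)\equiv\sum_{v\in B}f_G^{(0)}(v)\pmod 4$. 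Passing from $a=0$ to a general $a$ changes $f_G$ by $-a$ at $v_0$ and by $(-1)^\ell a$ at $v_\ell$. Because the two sides of $G$ alternate along the ear, $v_0$ and $v_\ell$ carry signs $+1$ and $(-1)^\ell$ in the balance $\sum_{A}f_G-\sum_{B}f_G$, so the two perturbations contribute $-a$ and $(-1)^\ell a\cdot(-1)^\ell=a$, which cancel. Thus $f_G^{(a)}$ stays balanced for every $a\in\{0,1,2,3\}$, and bipartite all-roundness of $G$ supplies each connected $\Vec{y}^{(a)}$.

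The remainder is identical to Lemma~\ref{lem:less7_univ}. Since $\{x_i^{(0)},x_i^{(1)},x_i^{(2)},x_i^{(3)}\}=\{0,1,2,3\}$ for every edge index $i$, exactly $\ell\le 7$ of the $4\ell$ values $x_i^{(a)}$ vanish; by the pigeonhole principle some $a\in\{0,1,2,3\}$ satisfies $x_i^{(a)}=0$ for at most one $i$, so for that $a$ the multigraph $(P,\Vec{x}^{(a)})$ has at most two connected components. Gluing $\Vec{x}^{(a)}$ on the ear to the connected $\Vec{y}^{(a)}$ on $G$, via the connecting lemma (Lemma~\ref{lem:univ_spanning_subgraph}), then produces a connected mod-4 $f$-factor of $G'$, proving that $G'$ is bipartite all-round.

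I expect the endpoint-cancellation of the second paragraph to be the only real obstacle, since it is exactly where \eqref{eq:bipartite_feasibility} is strictly stronger than \eqref{eq:fiseven} and where the parity of the ear length enters; this is why plain all-roundness of $G$ no longer suffices and one must certify the stronger bipartite balance of each $f_G^{(a)}$. Every other ingredient transfers word for word from the non-bipartite argument, which is presumably why the paper omits the details.
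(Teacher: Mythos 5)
Your proof is correct and takes essentially the same approach as the paper: the paper omits this proof entirely, stating only that it is ``essentially the same'' as Lemma~\ref{lem:less7_univ}, and your reconstruction follows that template exactly. The endpoint-cancellation argument showing that each $f_G^{(a)}$ still satisfies \eqref{eq:bipartite_feasibility} is precisely the detail the paper leaves implicit, and your verification of it is correct.
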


 Finally, we show the following lemma, which 
  claims a connection between a bipartite all-round graph and an all-round graph. 
\begin{lem}  \label{lem:bip_less3_all-round}
 Let $G$ be a two-edge connected bipartite graph, and 
 let $G'$ be a two-edge connected non-bipartite graph given by adding to $G$ an ear of length at most three. 
 If $G$ is bipartite all-round, then $G'$ is all-round. 
\end{lem}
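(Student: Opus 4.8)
The plan is to mirror the proof of Lemma~\ref{lem:less7_univ}, the only new ingredient being that $G$ is merely \emph{bipartite} all-round, so the auxiliary map fed into $G$ must satisfy the bipartite feasibility condition~\eqref{eq:bipartite_feasibility}, not just~\eqref{eq:fiseven}. Write the added ear as $p=v_0e_1v_1\cdots e_\ell v_\ell$ with $P=(\{v_0,\ldots,v_\ell\},\{e_1,\ldots,e_\ell\})$ and $\ell\le 3$, where $v_0,v_\ell\in V(G)$ and $v_1,\ldots,v_{\ell-1}$ are new degree-two vertices. Given any $f\colon V(G')\to\{0,1,2,3\}$ satisfying~\eqref{eq:fiseven}, I would use Lemma~\ref{lem:mod-4_f-factor} (applicable since $G'$ is connected and non-bipartite) to obtain a mod-4 $f$-factor $\Vec{x}$ of $G'$, and define the ear-perturbations $\Vec{x}^{(a)}$ for $a\in\{0,1,2,3\}$ exactly as in Lemma~\ref{lem:less7_univ}, together with the induced boundary maps $f_G^{(a)}$ on $V(G)$ that agree with $f$ off $\{v_0,v_\ell\}$. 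As there, a \emph{connected} mod-4 $f_G^{(a)}$-factor of $G$ can be pasted onto $\Vec{x}^{(a)}$ to yield a connected mod-4 $f$-factor of $G'$ whenever $(P,\Vec{x}^{(a)})$ has at most one zero-valued edge.

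The first step is to identify which perturbations are admissible for a \emph{bipartite} all-round $G$. I claim $f_G^{(a)}$ satisfies~\eqref{eq:bipartite_feasibility} precisely when $a$ is even. For $a=0$ this is automatic: $\Vec{x}$ restricted to $E(G)$ is a mod-4 $f_G^{(0)}$-factor of the bipartite graph $G$, so Proposition~\ref{prop:bipartite_feasibility} forces~\eqref{eq:bipartite_feasibility} on $f_G^{(0)}$. For $a=2$, note $-2\equiv 2\pmod 4$, so $\Vec{x}^{(2)}$ simply adds $2$ (mod $4$) to every ear edge; this leaves every internal ear vertex unchanged and shifts the demand by $2$ at each of $v_0,v_\ell$. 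Writing $\sigma(g)=\sum_{v\in U}g(v)-\sum_{v\in W}g(v)$ for the bipartition $(U,W)$ of $G$, condition~\eqref{eq:bipartite_feasibility} reads $\sigma(\cdot)\equiv 0\pmod 4$, and one computes $\sigma(f_G^{(2)})-\sigma(f_G^{(0)})\equiv -2(s_0+s_\ell)\pmod 4$, where $s_0,s_\ell\in\{+1,-1\}$ record the sides of $v_0,v_\ell$; since $s_0+s_\ell\in\{-2,0,2\}$ this is $\equiv 0\pmod 4$, so $f_G^{(2)}$ is feasible as well.

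The second step is the connectivity count, where the restriction to the two values $a\in\{0,2\}$ is exactly what forces $\ell\le 3$. For each ear edge $e_i$ there is a unique $a\in\{0,1,2,3\}$ with $x^{(a)}_{e_i}=0$, so the zero-valued ear edges arising over the admissible pair $a\in\{0,2\}$ number at most $\ell\le 3$ in total and split between the two buckets $a=0$ and $a=2$; were both buckets to contain at least two such edges, the total would be at least $4>3$. Hence some $a^\ast\in\{0,2\}$ leaves at most one zero-valued ear edge, so $(P,\Vec{x}^{(a^\ast)})$ breaks into at most two subpaths, each meeting $G$ at $v_0$ or $v_\ell$, and no internal ear vertex is isolated.

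Finally, because $a^\ast$ is even, $f_G^{(a^\ast)}$ satisfies~\eqref{eq:bipartite_feasibility}, so bipartite all-roundness of $G$ supplies a connected mod-4 $f_G^{(a^\ast)}$-factor $\Vec{y}$; pasting $\Vec{y}$ on $E(G)$ with $\Vec{x}^{(a^\ast)}$ on $E(P)$ produces a mod-4 $f$-factor of $G'$ that is connected, exactly as in Lemma~\ref{lem:less7_univ}, which proves $G'$ all-round. The main obstacle, and the reason the length bound drops from seven to three, is precisely that bipartite feasibility halves the number of usable perturbations from four to two; the extra care needed is checking that the even perturbations stay feasible (the $a=2$ computation above) and that two choices still suffice when $\ell\le 3$. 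The degenerate case $v_0=v_\ell$ is handled identically, the two boundary corrections now both landing on a single vertex and still contributing a multiple of $4$ to $\sigma$.
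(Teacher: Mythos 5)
Your proposal is correct and follows essentially the same route as the paper's own proof: perturb the ear values by $a\in\{0,2\}$, observe that both induced boundary maps $f_G^{(0)}$ and $f_G^{(2)}$ satisfy \eqref{eq:bipartite_feasibility}, use the pigeonhole count with $\ell\le 3$ to find an $a^{\ast}$ leaving at most one zero ear edge, and paste a connected factor supplied by bipartite all-roundness. The only difference is that you spell out the steps the paper labels ``clearly'' (the $\sigma$-computation showing the even shifts preserve bipartite feasibility, and the explicit zero-counting), which is a faithful elaboration rather than a new approach.
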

\begin{proof}
 The proof is similar to Lemma~\ref{lem:less7_univ}. 
 Let $p=v_0e_1v_1e_2\cdots e_{\ell}v_{\ell}$ denote the ear added to $G$ where $\ell \leq 3$,  and 
 let $P=(\{v_0,v_1,\ldots,v_{\ell}\},\{e_1,\ldots,e_{\ell}\})$. 
 Remark that $v_0,v_{\ell} \in V(G)$. 
 As given an arbitrary map $f \colon V(G') \to \{0,1,2,3\}$ satisfying \eqref{eq:fiseven}, 
  let $\Vec{x} \in \{0,1,2,3\}^{E(G')}$ be a mod-4 $f$-factor implied by Lemma~\ref{lem:mod-4_f-factor}. 
 For convenience, 
  let $\Vec{x}^{(0)} = (x_1,\ldots,x_{\ell}) \in \{0,1,2,3\}^{E(P)}$ 
  where $x_i$ denotes $x_{e_i}$ for simplicity. 
 We also define $\Vec{x}^{(2)} \in \{0,1,2,3\}^{E(P)}$ by 
\begin{eqnarray*}
 x^{(2)}_i = (x_i + 2) \bmod 4. 
\end{eqnarray*} 
 Notice that $x^{(2)}_i$ is also a mod-4 $f_P^{(2)}$-factor 
 for a map $f_P^{(2)} \colon V(P) \to \{0,1,2,3\}$ given by 
\begin{eqnarray*}
 f_P^{(2)}(v) &=& 
 \left\{ \begin{array}{ll} 
  (x_1+2) \bmod{4} & \mbox{if } v = v_0, \\
  (x_i +x_{i+1}) \bmod{4} = f(v) & \mbox{if } v \in \{v_1,\ldots,v_{\ell-1}\} , \\
  (x_{\ell}+2)  \bmod{4} & \mbox{if } v = v_{\ell}.  
 \end{array}\right. 
\end{eqnarray*}
 At the same time, let $f_G^{(a)} \colon V(G) \to \{0,1,2,3\}$ be defined for each $a \in \{0,2\}$ by 
\begin{eqnarray*}
 f_G^{(a)}(v) &=& 
 \left\{ \begin{array}{ll} 
  \left(f(v) - f_P^{(a)}(v)\right) \bmod{4} & \mbox{if } v \in \{v_0,v_{\ell}\}, \\
  f(v) & \mbox{otherwise}, 
 \end{array}\right. 
\end{eqnarray*}
 then $f_G^{(0)}$ clearly satisfies \eqref{eq:bipartite_feasibility}, and $f_G^{(2)}$ as well. 
 Thus, $G$ has a {\em connected} bipartite mod-4 $f_G^{(a)}$-factor $\Vec{y}^{(a)} \in \{0,1,2,3\}^{E(G)}$ for each $a \in \{0,2\}$ 
  since $G$ is bipartite all-round by the hypothesis. 
 If $(P,\Vec{x}^{(a)})$ consists of at most two connected component, 
  i.e., 
  $x^{(a)}_i = 0$ holds at most one $i \in \{1,2,\ldots,\ell\}$, 
  then $\Vec{x}^{(a)}$ and $\Vec{y}^{(a)}$ implies a connected mod-4 $f$-factor of $G'$.

 Then, we claim that 
  $(P,\Vec{x}^{(0)})$ or $(P,\Vec{x}^{(2)})$ consists of at most two connected component. 
 Remark that $\{ x_i^{(0)}, x_i^{(2)}\} = \{0,2\} \mbox{ or }\{1,3\}$ holds for each $i \in \{1,2,\ldots,\ell\}$. 
 Since $\ell \leq 3$, it is not difficult to observe the claim. 
\end{proof}

\subsubsection{Proof of Theorem~\ref{thm:p6_univ}}\label{sec:p6_univ_proof}
 Now, we show  Theorem~\ref{thm:p6_univ}. 
 In fact, 
   we prove the following lemma, 
   which with Lemma~\ref{crl:univ_all_edge} immediately implies Theorem~\ref{thm:p6_univ}. 
 \begin{lem} \label{prop:thm:p6_allround}
 Let $G=(V,E)$ be a two-edge connected $P_6$-free graph such that $G \neq C_5$. 
 For any edge $e \in E$, there is a subgraph $H$ of $G$
  such that $H$ contains $e$ and it is all-round or bipartite all-round. 
 \end{lem}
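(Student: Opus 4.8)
The plan is to verify the hypothesis of Lemma~\ref{crl:univ_all_edge}: for each edge $e$ I must exhibit a subgraph $H \ni e$ that is all-round or bipartite all-round, and the natural candidates for $H$ are short cycles through $e$. I would first record the key consequence of $P_6$-freeness: a \emph{shortest} cycle through $e$ is chordless (a chord would short-cut it to a shorter cycle still using $e$), hence induced, and since $P_6$-free implies $C_{\geq 7}$-free, its length lies in $\{3,4,5,6\}$. Thus, whenever $e$ lies on some $3$-, $4$-, or $6$-cycle (as a subgraph, chords irrelevant), I simply take $H$ to be that cycle: $C_3$ is all-round by Proposition~\ref{lem:k3_univ}, and $C_4,C_6$ are bipartite all-round by Proposition~\ref{lem:c4_c6_univ}. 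This reduces everything to the single hard case in which the shortest cycle through $e$ is a $C_5$ and $e$ lies on no $3$-, $4$-, or $6$-cycle.

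In that case, fix the pentagon $C=c_1c_2c_3c_4c_5$ with $e=c_1c_2$. I would first argue that $C$ is chordless, since each of the five possible chords produces a triangle or a $C_4$ through $e$, contradicting the hard-case hypothesis. Because $G\neq C_5$ and $G$ is two-edge connected, $C$ admits an ear $P$ with endpoints $c_i,c_j\in V(C)$ and interior outside $C$. The crucial parity remark is that the two cycles $\mathrm{arc}_1+P$ and $\mathrm{arc}_2+P$ of $C+P$ have opposite parities, because $|C|=5$ is odd; hence $C+P$ \emph{always} contains an even cycle. I would then use $P_6$-freeness twice: to bound $|P|$ (a long ear, together with a boundary segment of $C$, forces an induced $P_6$, so the even cycle has length at most $6$), and to prune attachment patterns (any configuration admitting none of the constructions below would either contain an induced $P_6$ or yield a shorter cycle through $e$, contradicting the hard case).

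The construction then proceeds by connecting $e$ to a small all-round or bipartite all-round core using the three ear-lemmas. When the even cycle is a $C_4$ or $C_6$ (bipartite all-round) and the complementary odd arc, read as a non-bipartite ear, has length at most three, Lemma~\ref{lem:bip_less3_all-round} makes $H=C+P$ all-round and $e\in C\subseteq H$. In the residual configurations (the complementary arc is too long, or $e$ sits on the long arc), I would instead locate a triangle, $C_4$, or $C_6$ inside $C+P$ and attach $e$ to it by an ear of length at most seven — the arcs of a $C_5$ have length at most four and $P$ is short by the $P_6$-free bound — invoking Lemma~\ref{lem:less7_univ} (non-bipartite core) or Lemma~\ref{lem:bip_less7_bip_all-round} (bipartite core) to keep the enlarged subgraph all-round or bipartite all-round while absorbing $e$.

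The main obstacle is precisely the $C_5$ case: since $C_5$ itself is \emph{not} all-round (Proposition~\ref{prop:C5_counterexample}), it cannot serve as a base, so I am forced to certify a \emph{nearby} even cycle and a \emph{short} odd ear and to thread $e$ into the core within the length budgets ($\leq 3$ for Lemma~\ref{lem:bip_less3_all-round}, $\leq 7$ for Lemmas~\ref{lem:less7_univ} and~\ref{lem:bip_less7_bip_all-round}). The bookkeeping — enumerating the attachment positions of $P$, tracking which of the two arc-cycles is even and whether $e$ lies on the arc that remains in $H$, and verifying in each branch that $P_6$-freeness either supplies a usable small cycle or excludes the configuration outright — is where the real work lies, and I expect the finite but delicate interaction between these constraints to be the crux of the proof.
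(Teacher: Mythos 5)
Your opening reduction is correct and is actually cleaner than the paper's: a shortest cycle through $e$ is chordless (a chord would shortcut it), hence induced, and since $P_6$-free implies $C_{\geq 7}$-free its length lies in $\{3,4,5,6\}$. This short-circuits the paper's top-level induction over $\ell=7,8,9,\geq 10$ and correctly isolates the pentagon case. The genuine gap is in that hard case, and it is concrete: your ``crucial parity remark'' --- that $C+P$ always contains an even cycle --- is false, because two-edge connectivity does \emph{not} give you an ear with two \emph{distinct} endpoints $c_i\neq c_j$ on $C$; the path leaving $C$ may return to the very same vertex. Take $G$ to be a pentagon $C$ plus a triangle sharing exactly one vertex $c_1$ with $C$. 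This graph is two-edge connected, $P_6$-free, and different from $C_5$, and for $e\in E(C)$ the only cycle through $e$ is $C$ itself, so you are squarely in your hard case. Here every ear attached to $C$ is closed (both endpoints equal $c_1$), the two-arc decomposition you rely on does not exist, and in fact $G$ contains \emph{no even cycle at all}: its only cycles are a $C_5$ and a $C_3$. Consequently no bipartite all-round core ($C_4$ or $C_6$) can ever be located, and Lemmas~\ref{lem:bip_less3_all-round} and~\ref{lem:bip_less7_bip_all-round} are unusable; the only route is the non-bipartite one, namely exhibiting a triangle (which $P_6$-freeness forces here) and attaching the pentagon to it as a \emph{closed} ear of length $5\leq 7$ via Lemma~\ref{lem:less7_univ}. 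Your plan mentions triangle cores only inside the vague ``residual configurations'' branch, but the organizing assumptions of the plan (distinct attachment points, guaranteed even cycle) break precisely on this example.

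The rest of the hard case, which you explicitly defer as ``bookkeeping,'' is in fact the bulk of the proof, and the paper organizes it differently: it splits on whether $C$ is the unique cycle through $e$. If it is, the paper runs the triangle-plus-closed-ear argument above. If not, every other cycle through $e$ may be assumed to be another pentagon $C''$, and the paper counts the edges $k$ shared by $C$ and $C''$: for $k\in\{2,3\}$ the symmetric difference of the two pentagons is an even cycle ($C_6$ or $C_4$) and the shared path is an ear of length at most three, so Lemma~\ref{lem:bip_less3_all-round} applies --- this is the rigorous incarnation of your even-cycle idea, with the second pentagon through $e$ playing the role you wanted an arbitrary ear to play; for $k=1$ the union $C+C''-e$ is an $8$-cycle whose chords, supplied by $P_6$-freeness, reduce the situation to the paper's $\ell=8$ analysis. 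So your proposal contributes a nice simplification of the easy half, but its treatment of the pentagon case rests on a false structural claim and leaves the actual case analysis unperformed; as it stands it does not constitute a proof.
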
 

\begin{proof}
 Suppose that an edge $e$ is contained in a cycle $C$ of length $\ell$. 
 If $\ell = 3,4,6$, 
  then $C$ itself is all-round or bipartite all-round, and we obtain the claim. 
 Thus, we will show the cases of $\ell=5$ and $\ell \geq 7$. 

 First, we show the case $\ell \geq 7$. 
 To begin with, we remark that $P_6$-free implies that 
  the cycle $C$ has at least two chords, say $f$ and $h$, which do not share their end vertices;
  otherwise, suppose they share a vertex $v$ then the induced subgraph removing $v$ contains $P_6$ (or longer). 
 For convenience, 
  we say a chord {\em separates $C$ into $(a, \ell - a)$} ($a \in \{ 2,3,\ldots,  \lfloor \ell/2 \rfloor \}$) 
  if the chord connects vertices in the distance $a$ along $C$. 
 The proof idea is an induction on $\ell$, 
  i.e., we show a shorter cycle containing~$e$. 
 Recall that 
  $C_3$ is all-round (Proposition~\ref{lem:k3_univ}), and 
  $C_4$ and $C_6$ are bipartite all-round (Proposition~\ref{lem:c4_c6_univ}), 
 while 
  $C_5$ is not all-round (Lemma~\ref{prop:C5_counterexample}).


 In the case of $\ell = 7$. 
 If a chord separates $C$ into $(2,5)$, 
  then $e$ is contained in $C_3$ or $C_6$, and we obtain the claim. 
 If both chords $f$ and $g$ separates $C$ into $(3,4)$, 
  then we claim that $H=C+f+g$ is all-round. 
 Note that $H$ is isomorphic to one of two graphs in Figure~\ref{fig:P6free-1}. 
 In the left graph, 
  we can find an ear decomposition consisting of 
  $C_4$ (1-5-6-7-1, in Figure~\ref{fig:P6free-1}) and ears of length 3 (1-2-3-7) and 2 (3-4-5). 
 Let $H'$ denote the graph consisting of $C_4$ (1-5-6-7-1) and an ear of length 3 (1-2-3-7), 
  then Lemma~\ref{lem:bip_less7_bip_all-round} implies that 
  $H'$ is bipartite all-round. 
 Since $H$ is given by $H'$ and an ear of length two, 
 Lemma~\ref{lem:bip_less3_all-round} implies that $H$ is all-round. 
 In the right graph, 
  we can find an ear decomposition consisting of 
  $C_4$ (1-5-6-7-1), an ear of length 2 (5,4,7), and an ear of length 3 (1,2,3,4). 
 Then, $H$ is also all-round by a similar argument. 

 In the case of $\ell=8$. 
 Then, a chord possibly separates $C$ into $(2,6)$, $(3,5)$ or $(4,4)$. 
 The cases of $(2,6)$ and $(3,5)$ are easy; 
 In the case of $(2,6)$, 
  $e$ is contained in $C_3$ or $C_7$, 
  where the latter case is reduced to the above case of $\ell=7$. 
 In the case of $(3,5)$, $e$ is contained in $C_4$ or $C_6$. 
 Finally, in the case that both $f$ and $g$ separates $C$ into $(4,4)$, 
  then $H=C+f+g$ is isomorphic one of two graphs in Figure~\ref{fig:P6free-2}. 
 One graph (upper one) consists of $C_4$ (1-5-4-8-1) and two ears of length 3 (1-2-3-4 and 5-6-7-8), and 
  it is  all-round by Lemma~\ref{lem:bip_less3_all-round}. 
 The other consists of $C_6$ (1-5-4-3-7-8-1) and ears of length 2 (1-2-3 and 5-6-7), and 
  it is bipartite all-round, 
  by Lemma~\ref{lem:bip_less7_bip_all-round}. 
 Thus we obtain the claim in the case.

 In the case of $\ell=9$.
 Then, a chord possibly separates $C$ into $(2,7)$, $(3,6)$ or $(4,5)$. 
 In the case of $(2,6)$ and $(3,5)$ are easy, 
   since $e$ is contained in $C_3$, $C_7$, $C_4$ or $C_6$. 
 If both $f$ and $g$ separates $C$ into $(4,5)$, 
  then $H=C+f+g$ is isomorphic one of three graphs in Figure~\ref{fig:P6free-3}. 
 We can observe that 
  each graph has an ear decomposition consisting of $C_6$ (1-2-3-4-5-6-1) with ears of length two and three. 
 Then, each graph is all-round 
  by Lemmas~\ref{lem:less7_univ}, \ref{lem:bip_less7_bip_all-round} and \ref{lem:bip_less3_all-round}.

 In the case of $\ell \geq 10$.
 Suppose that a chord separates $C$ into $(a,\ell - a)$ where $a \leq \ell-a$, then 
  $e$ is in $C_{a+1}$ or $C_{\ell-a+1}$. 
 Unless $e \in C_5$, i.e., $a=4$ and $e \in C_{a+1}$, the case is reduced to a shorter cycle. 
 Suppose $e \in C_{a+1}$ where $a=4$. 
 Then, $\ell-a+1 \geq 7$ implies that $C_{\ell-a+1}$ contains another chord. 
 Using the chord, we can reduce the case to a shorter cycle. 

 Next, we show the remaining case of $\ell=5$. 
 Suppose that $e$ is contained in a cycle $C$ of length five. 
 We consider two cases:
  $C$ is the unique cycle which contains $e$,
   or there is another cycle containing $e$.  
 
 First, we consider the former case.
 Since there is no cycle containing $e$ other than $C$, 
  $C$ has no chord. 
 Furthermore, since $G \neq C_5$, 
  there exists a vertex $v \in V(C)$ that is contained in another cycle,
  which implies there exists an edge $f \in \delta(v) \setminus E(C)$. 
 Since $G$ is a two-edge connected $P_6$-free graph,
  $f$ is contained in a $C_3$:
 Otherwise $G$ has a $P_6$ as an induced subgraph.
 Then, 
 Lemma~\ref{lem:less7_univ} implies that $C+C_3$ is all-round, and 
  we obtain the claim in the case. 
 
 Next, we consider the second case. 
 If there exists another cycle containing $e$ and its length is not five,
  the argument for $\ell \neq 5$ establishes Lemma~\ref{prop:thm:p6_allround} for $e$.
 Suppose every cycle containing $e$ has length five.
 Let $C''$ be one of the cycles containing $e$ other than $C$.
 Then $C$ and $C''$ has common edges.
 Let $k$ be the number of the common edges.
 If $k=2$ or $3$,
  Lemma~\ref{lem:bip_less3_all-round} implies that $C+C''$ is all-round. 
 If $k=1$,
  $C+C'' - e$ is a cycle of length eight.
 Then $C+C'' - e$ has a chord $f \neq e$
  because $G$ is $P_6$-free. 
 By the argument of the case $\ell=8$,
  $C+C''+f$ is all-round. 
 \end{proof}

 \begin{figure}[tbp]
	\begin{center}
		\includegraphics[width=130mm]{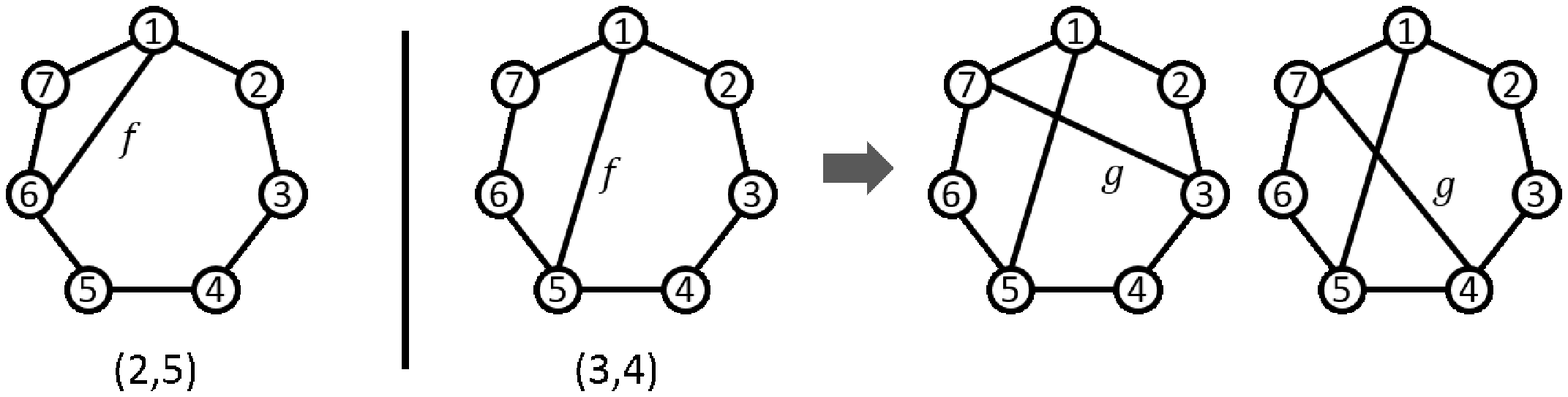}
	\end{center}
	\caption{$\ell = 7.$ \label{fig:P6free-1}}
\end{figure}
 \begin{figure}[tbp]
	\begin{center}
		\includegraphics[width=110mm]{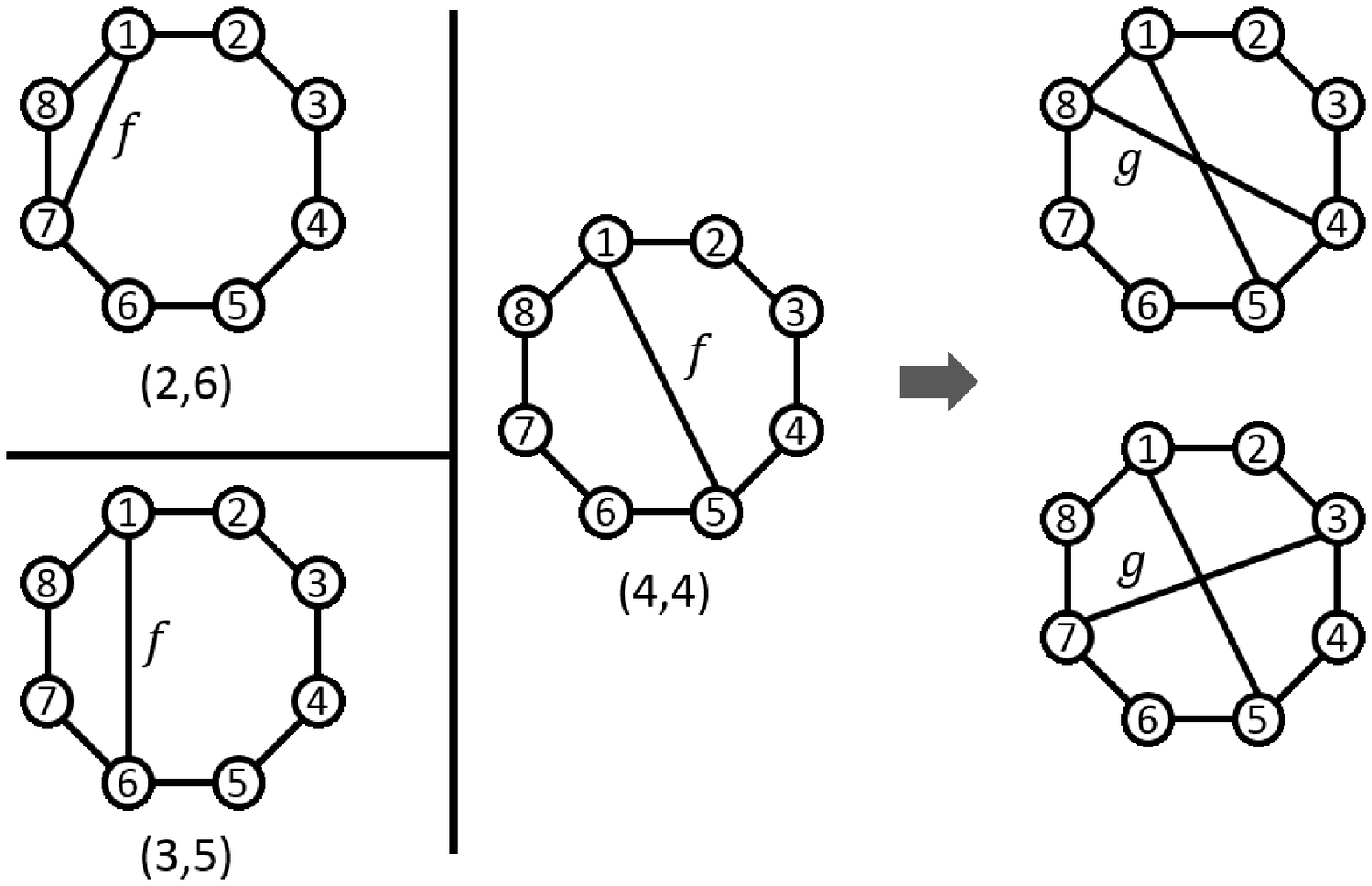}
	\end{center}
	\caption{$\ell = 8.$ \label{fig:P6free-2}}
\end{figure}
  \begin{figure}[tbp]
	\begin{center}
		\includegraphics[width=140mm]{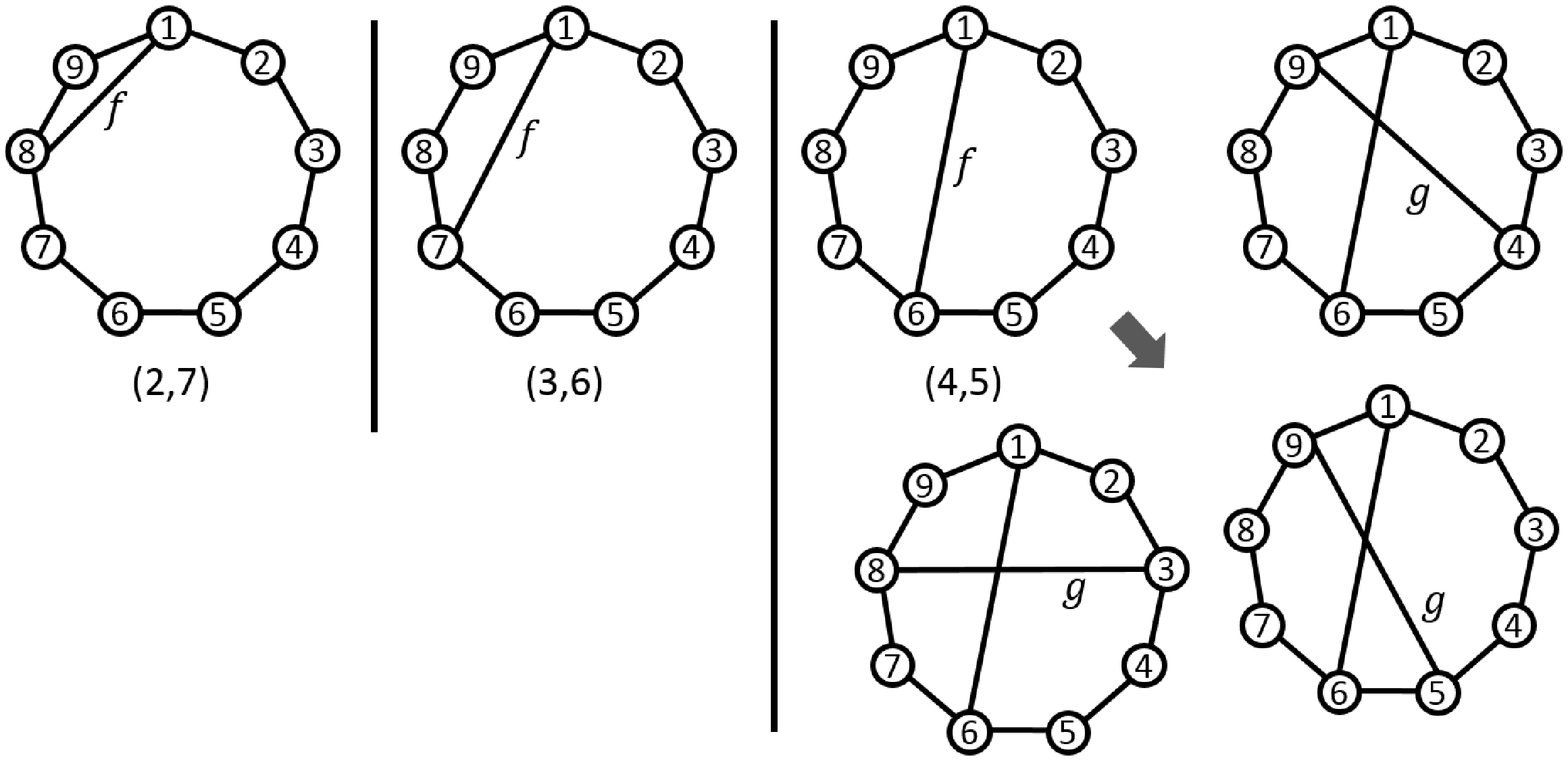}
	\end{center}
	\caption{$\ell = 9.$ \label{fig:P6free-3}}
\end{figure}

\section{Miscellaneous Discussions}\label{sec:discussion}
 This section remarks three related topics. 
 Section~\ref{sec:brdge} extends the arguments in Section~\ref{sec:PHC3} 
 to graphs with bridges, and 
 presents Theorem~\ref{thm:C5P6-connected}, which extends 
  Theorem~\ref{thm:PHC3-forC5P6free} 
  about two-edge connected graphs to any connected. 
 Section~\ref{sec:dense} remarks an all-roundness of dense graphs applying  Lemma~\ref{crl:univ_all_edge}. 
 Section~\ref{sec:connection} briefly explains a connection 
  between the PHC problem and other problems such as the HC problem. 

\subsection{All-roundness of graphs with bridges}\label{sec:brdge}
 This subsection is concerned with graphs with bridges. 
 \begin{thm}  \label{lem:suf_all-round_bridge}
 For a connected graph $G = (V,E)$, 
  let $B \subseteq E$ denote the set of bridges, and 
  let $S \subseteq V$ denote the set of isolated vertices in $G - B$.
 If $S = \emptyset$ and every two-edge connected component in $G - B$ is all-round 
  then $G$ is all-round. 
 \end{thm}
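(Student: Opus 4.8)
The plan is to build a connected mod-4 $f$-factor of $G$ by choosing values on the bridges first, then solving an independent (residual) problem inside each two-edge connected component and gluing the pieces together along the bridges. Let $G_1,\dots,G_k$ denote the two-edge connected components of $G-B$; since $S=\emptyset$, every vertex of $G$ lies in exactly one $G_i$, and contracting each $G_i$ to a single node turns $G$ into a tree $T$ whose edge set is precisely $B$ (the standard bridge-block picture, using that $G$ is connected and that no bridge can have both endpoints in one $G_i$). The key structural observation is that every $b\in B$ is a cut edge of $G$, so \emph{any} connected mod-4 $f$-factor is forced to assign $x_b>0$ to each bridge; conversely, once every bridge is positive, connectivity of the whole reduces to connectivity inside each component.

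Given an arbitrary $f\colon V\to\{0,1,2,3\}$ satisfying \eqref{eq:fiseven}, I would first fix the bridge values. For each component set $s_i := \bigl(\sum_{w\in V(G_i)} f(w)\bigr)\bmod 2$ and let $T':=\{\,i : s_i=1\,\}$. Summing over all components gives $\sum_i s_i\equiv \sum_{v\in V} f(v)\equiv 0\pmod 2$ by \eqref{eq:fiseven}, so $|T'|$ is even; viewing $T'$ as a vertex subset of the tree $T$, Theorem~\ref{thm:T-join} yields a $T'$-join $J\subseteq B$ in $T$. I would then set $x_b=1$ for $b\in J$ and $x_b=2$ for $b\in B\setminus J$, so that every bridge receives a positive value in $\{1,2,3\}$.

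Next I would push the remaining demand into the components. For each $i$ define $f_i\colon V(G_i)\to\{0,1,2,3\}$ by $f_i(w)=\bigl(f(w)-\sum_{b\in B,\,b\ni w} x_b\bigr)\bmod 4$. Writing $B_i$ for the set of bridges meeting $G_i$, a short parity computation gives $\sum_{w\in V(G_i)} f_i(w)\equiv s_i-\sum_{b\in B_i} x_b \pmod 2$; since $x_b$ is odd exactly on $J$, the $T'$-join property makes $\sum_{b\in B_i} x_b\equiv s_i$, so the total is $\equiv 0$ and $f_i$ satisfies \eqref{eq:fiseven} on $G_i$. As each $G_i$ is all-round by hypothesis, it has a \emph{connected} mod-4 $f_i$-factor $\Vec{y}^{(i)}\in\{0,1,2,3\}^{E(G_i)}$, which in particular spans $V(G_i)$. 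Assembling $\Vec{x}\in\{0,1,2,3\}^{E}$ from the bridge values together with all the $\Vec{y}^{(i)}$, one checks \eqref{eq:modulo_constraint} at every vertex (the component part contributes $f_i(w)$ and the bridges restore $f(w)$), and connectivity holds because each $\Vec{y}^{(i)}$ connects $G_i$ while every bridge is positive, so the positive-support subgraph links the components exactly as $T$ does.

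The main obstacle is the parity bookkeeping in the third step: the all-roundness of $G_i$ can be invoked only when the residual demand $f_i$ is feasible, i.e.\ has even total, and the $k$ such parities must be arranged \emph{simultaneously}. This is precisely what forces the $T'$-join construction on the component tree, and it is exactly here that the global hypothesis $\sum_{v\in V} f(v)$ even is used, through the evenness of $|T'|$ that guarantees the join exists.
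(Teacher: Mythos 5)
Your proposal is correct and follows essentially the same route as the paper's proof: contract the two-edge connected components to obtain the bridge tree, take a $T$-join there (with the odd-sum components as $T$), assign value $1$ on join bridges and $2$ on the rest, push the residual demand into each component, and invoke all-roundness to get connected factors that glue along the positive bridges. If anything, your write-up is more careful than the paper's, which leaves implicit both the evenness of the join set (your $|T'|$ even check) and the feasibility of each residual demand $f_i$.
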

 \begin{proof}
 Let $C_1, \ldots, C_l$ be the two-edge connected components of $G - B$, and 
  let $H$ be a graph obtained by contracting every $C_i$ of $G$. 
 Notice that $E(H) = B$. 
 Let $u_1, \ldots, u_l$ be the vertices of $H$ 
  where each $u_i$ corresponds to $C_i$.
 For each $C_i$,
  let $T = \{u_i \mid \sum_{v \in C_i} f(v) \equiv 1 \pmod{2} \}$ and
  let $J$ be a $T$-join of $H$. 
 Then, let $\Vec{x} \in \{0,1,2,3\}^{E(H)}$ be defined by
 \begin{equation*}
 x_e = \left\{
 \begin{array}{ll}
	1 & \mbox{ if } e \in J, \\
	2 & \mbox{ if } e \notin J.
 \end{array}
 \right.
 \end{equation*}
 Let a map $f' \colon V \to \{0,1,2,3\}$ be defined by 
 \begin{equation*}
 f' (v) = \left( f(v) - \sum_{e \in \delta_H(v)} x_e \right ) \bmod 4
 \end{equation*}
  for each $v \in V(G)$. 
 Notice that $f' (v)=0$ for $v \in S$, 
  since $\Vec{x}$ is a mod-4 $f_H$-factor. 
 Since each $C_i$ is all-round, 
  $G$ has a mod-$4$ $f'$-factor $\Vec{y}$ which are connected in each two-edge connected component $C_i$. 
 Let $\Vec{x}' \in \{0,1,2,3\}^{E(G)}$ be defined by
 \begin{equation*}
 x'_e = \left\{
 \begin{array}{ll}
	x_e & \mbox{ if } e \in B, \\
	y_e & \mbox{ otherwise, }
 \end{array}
 \right.
 \end{equation*}
 then we 
  we obtain a connected mod-4 $f$-factor. 
 \end{proof}

 Now, we are concerned with the $\PHC_3$ problems again.  
 The following two propositions are easy observations from the fact that 
  any closed walk must pass each edge of $\delta(v)$ for any $v \in S$ an even number of times 
  since they are bridges. 
 \begin{prop}  \label{prop:nec_PHC_bridge}
 For a connected graph $G = (V,E)$, 
  let $B \subseteq E$ denote the set of bridges and 
  let $S \subseteq V$ denote the set of isolated vertices in $G - B$.
 If $G$ contains a ${\rm PHC}_3$, then $d_G(v)$ is odd for any $v \in S$.
\qed
 \end{prop}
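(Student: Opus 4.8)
The plan is to combine the parity condition \eqref{eq:PHC} with two elementary facts about bridges, namely that a closed walk traverses a bridge an even number of times and that each bridge is the unique access point to one side of the cut it defines. Fix $v \in S$ and let $\Vec{x} \in \{0,1,2,3\}^E$ be the edge count vector of a ${\rm PHC}_3$ of $G$. Since $v$ is isolated in $G-B$, every edge $e \in \delta(v)$ is a bridge. First I would record that any closed walk uses a bridge an even number of times: removing $e$ splits $G$ into two sides, $e$ is the only edge of that cut, and a closed walk crosses any cut an even number of times. Because a ${\rm PHC}_3$ uses each edge at most three times, this forces $x_e \in \{0,2\}$ for every $e \in \delta(v)$.

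The key step is to upgrade this to $x_e = 2$ for \emph{every} $e \in \delta(v)$, using the connectivity requirement. For $e=\{v,u\} \in \delta(v)$, let $A_e$ denote the connected component of $G-e$ containing $u$; it is nonempty, and $e$ is the unique edge joining $A_e$ to the rest of $G$. By Corollary~\ref{cor:PHCcondition}, the support $F=\{e \in E \mid x_e>0\}$ induces a connected spanning subgraph $(V,F)$, so there is an $F$-path between $u \in A_e$ and $v \notin A_e$; any such path must cross the cut and hence use $e$, giving $x_e \geq 1$. Combined with $x_e \in \{0,2\}$ this yields $x_e = 2$. Applying this to all edges incident to $v$ gives $\sum_{e \in \delta(v)} x_e = 2\,d_G(v)$.

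Finally I would plug this into the parity condition at $v$: $2\,d_G(v) \equiv 2 \pmod 4$, whence $d_G(v)$ is odd, as claimed. I expect the only nonroutine point to be the second step, that the far side $A_e$ of each incident bridge must be reached and can be entered only through $e$; but this is exactly the spanning-connectivity built into the definition of $(G,\Vec{x})$, and once $x_e=2$ is established for all $e \in \delta(v)$ the remaining parity arithmetic is immediate.
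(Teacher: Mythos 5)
Your proof is correct and follows essentially the same route as the paper, which disposes of this proposition as an ``easy observation'' from precisely your first fact: every closed walk traverses a bridge an even number of times, so the cap of three forces $x_e \in \{0,2\}$ on all edges incident to $v \in S$. Your explicit connectivity argument (forcing $x_e = 2$ rather than $0$ via Corollary~\ref{cor:PHCcondition}) and the final computation $2\,d_G(v) \equiv 2 \pmod 4$ are exactly the details the paper leaves implicit.
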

 
 \begin{prop} \label{prop:suf_PHC_bridge}
 For a connected graph $G = (V,E)$, 
  let $B \subseteq E$ denote the set of bridges and 
  let $S \subseteq V$ denote the set of isolated vertices in $G - B$.
 Suppose that $d_G(v)$ is odd for any $v \in S$ and 
  that every connected component of $G - S$ is all-round, 
 then $G$ has a ${\rm PHC}_3$.
\qed
 \end{prop}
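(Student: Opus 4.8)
The plan is to build a $\PHC_3$ directly as a connected mod-$4$ $f$-factor with $f\equiv 2$, which is exactly the object characterized by Corollary~\ref{cor:PHCcondition} and Problem~\ref{prob:gen}. First I would record the structure forced by the hypotheses. Since $S$ consists of the vertices isolated in $G-B$, every edge incident to a vertex of $S$ is a bridge, while every edge with both endpoints outside $S$ lies inside a single component $C_i$ of $G-S$ (because $G-S$ is an induced subgraph). Thus the components $C_1,\dots,C_m$ meet one another only through vertices of $S$, and contracting each $C_i$ to a point yields a tree on the super-vertices and the vertices of $S$ whose edges are precisely the edges incident to $S$; any cycle there would place a bridge on a cycle of $G$. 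I will use this only to organize the final connectivity check.

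Next I would fix the values on the edges incident to $S$ by setting $x_e = 2$ on every such edge. The motivation is the rigidity of bridges: a closed walk crosses any bridge an even number of times, so in a $\PHC_3$ a bridge carries value $0$ or $2$, and to visit both of its sides it must carry $2$. With this choice the parity at every $s \in S$ is automatic, for all $d_G(s)$ edges of $\delta(s)$ are set to $2$, whence $\sum_{e\in\delta(s)} x_e = 2\,d_G(s) \equiv 2 \pmod 4$ precisely because $d_G(s)$ is odd. This is exactly where the degree hypothesis enters, matching the necessity recorded in Proposition~\ref{prop:nec_PHC_bridge}.

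It then remains to choose $\Vec{x}$ inside each component. For $u \in V(C_i)$ let $b(u)$ count the edges joining $u$ to $S$ and set $f_i(u) = (2 - 2\,b(u)) \bmod 4$; this residual target is what forces $\sum_{e \in \delta_G(u)} x_e \equiv 2 \pmod 4$ once the value-$2$ edges to $S$ are added back. Since $\sum_{u \in V(C_i)} f_i(u) \equiv 2|V(C_i)| - 2\sum_{u} b(u)$ is even, $f_i$ satisfies \eqref{eq:fiseven}, so the all-roundness of $C_i$ (not merely bipartite all-roundness, so that only \eqref{eq:fiseven} is required) supplies a connected mod-$4$ $f_i$-factor $\Vec{y}^{(i)} \in \{0,1,2,3\}^{E(C_i)}$. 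Gluing the $\Vec{y}^{(i)}$ with the value-$2$ edges incident to $S$ produces $\Vec{x} \in \{0,1,2,3\}^E$ meeting the parity condition at every vertex; it is connected because each $(C_i,\Vec{y}^{(i)})$ is connected and all edges to $S$ are present, so any $G$-path can be rerouted through these supports. Hence $\Vec{x}$ admits a $\PHC_3$ by Corollary~\ref{cor:PHCcondition}. I expect the only real obstacle to be the conceptual step, namely recognizing that bridges force even multiplicity so that the odd-degree condition makes $S$ satisfied for free and that the targets $f_i$ must subtract the fixed bridge contributions; once this is seen, the parity bookkeeping and the verification of \eqref{eq:fiseven} are routine.
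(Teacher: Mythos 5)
Your proof is correct and matches the argument the paper intends: the paper states Proposition~\ref{prop:suf_PHC_bridge} without proof as an ``easy observation,'' and your construction---assigning $2$ to every (bridge) edge incident to $S$ so that the odd-degree hypothesis settles the parity at $S$, then invoking all-roundness of each component of $G-S$ with the residual targets $f_i(u)=(2-2b(u))\bmod 4$---is precisely the natural filling-in of that observation, in the same spirit as the paper's proof of Theorem~\ref{lem:suf_all-round_bridge}. The parity bookkeeping, the verification of~\eqref{eq:fiseven}, and the connectivity argument are all sound, and the conclusion follows from Corollary~\ref{cor:PHCcondition} with $x_e\leq 3$.
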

 
 By Theorem~\ref{lem:suf_all-round_bridge} and 
   Propositions~\ref{prop:nec_PHC_bridge} and \ref{prop:suf_PHC_bridge}, 
  we obtain the following theorem. 
\begin{thm}\label{thm:C5P6-connected} 
 Suppose that $G=(V,E)$ is a connected $P_6$-free, or $C_{\geq 5}$-free graph.
 Let $S \subseteq V$ denote the set of isolated vertex in $G - B$.
 Note that $S = \emptyset$ if $G$ is two-edge connected, while the reverse is not true. 
 Then, $G$ has a ${\rm PHC}_3$ if and only if 
  $d_G(v)$ is odd for every $v \in S$. 
\qed
\end{thm}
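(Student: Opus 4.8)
The plan is to prove the two implications separately, combining the bridge-decomposition results---Theorem~\ref{lem:suf_all-round_bridge} and Propositions~\ref{prop:nec_PHC_bridge} and~\ref{prop:suf_PHC_bridge}---with the all-roundness of two-edge connected $P_6$-free (resp.\ $C_{\geq 5}$-free) graphs established in Theorem~\ref{thm:PHC3-forC5P6free}, i.e.\ Theorems~\ref{thm:c5_univ} and~\ref{thm:p6_univ}. The `only if' direction needs nothing new: since every edge incident to a vertex of $S$ is a bridge, any closed walk uses it an even number of times, and Proposition~\ref{prop:nec_PHC_bridge} then forces $d_G(v)$ to be odd for each $v\in S$.

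For the `if' direction, assume $d_G(v)$ is odd for every $v\in S$. I would invoke Proposition~\ref{prop:suf_PHC_bridge}, which reduces the goal to showing that every connected component $C$ of $G-S$ is all-round. Each such $C$ is connected and, being an induced subgraph, is again $P_6$-free (resp.\ $C_{\geq 5}$-free). I would apply Theorem~\ref{lem:suf_all-round_bridge} to $C$ after verifying its two hypotheses: first, that deleting the bridges of $C$ leaves no isolated vertex, which holds because every vertex outside $S$ already lies in a nontrivial two-edge connected component of $G$ and deleting $S$ only deletes bridges; and second, that each two-edge connected component of $C$ is all-round, which is exactly what Theorems~\ref{thm:c5_univ} and~\ref{thm:p6_univ} provide.

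The step I expect to be the main obstacle is the second hypothesis above. Theorems~\ref{thm:c5_univ} and~\ref{thm:p6_univ} only guarantee that a two-edge connected block is all-round \emph{or bipartite all-round} (and $C_5$ is neither), while Theorem~\ref{lem:suf_all-round_bridge} and Proposition~\ref{prop:suf_PHC_bridge} are stated for all-round blocks; a bipartite block such as $C_4$ is genuinely not all-round, so these results cannot be applied verbatim. To bridge this gap I would target only the specific $\PHC_3$ map $f\equiv 2$ rather than full all-roundness, and retrace the construction of Theorem~\ref{lem:suf_all-round_bridge} through its engine, the connecting lemma (Lemma~\ref{lem:univ_spanning_subgraph}), which already admits bipartite all-round subgraphs. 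The key point to check is that the map induced on each block meets the relevant feasibility condition, namely \eqref{eq:fiseven} for all-round blocks and \eqref{eq:bipartite_feasibility} for bipartite ones, which follows from Propositions~\ref{prop:gen-handshake} and~\ref{prop:bipartite_feasibility}; when a bipartite block's induced map violates \eqref{eq:bipartite_feasibility}, I would repair the parity exactly as in Lemma~\ref{lem:mod-4_f-factor}, using an odd cycle of $G$ that is available whenever $G$ is non-bipartite. The lone exceptional block $C_5$ would be treated directly, since it always carries a $\PHC_3$, absorbing it together with an incident bridge and a neighbouring block into a larger all-round subgraph when necessary. Making these corrections simultaneously consistent across all blocks, while respecting the bipartite parity inherited from Theorem~\ref{thm:suf_and_nec_PHC}, is where the real work of the proof lies.
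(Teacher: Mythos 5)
Your first two paragraphs reproduce exactly the paper's own derivation: the paper proves this theorem by nothing more than citing Proposition~\ref{prop:nec_PHC_bridge} for the `only if' part, and Proposition~\ref{prop:suf_PHC_bridge} together with Theorem~\ref{lem:suf_all-round_bridge} and Theorems~\ref{thm:c5_univ} and~\ref{thm:p6_univ} for the `if' part. The obstacle you then single out --- that the block-level theorems only yield \emph{all-round or bipartite all-round} (and nothing at all for $C_5$), while Theorem~\ref{lem:suf_all-round_bridge} and Proposition~\ref{prop:suf_PHC_bridge} demand genuine all-roundness, which no bipartite graph can have --- is a real gap, and it is a gap in the paper's own argument, not only in yours.

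However, the repair you defer to ``the real work'' cannot be carried out, because the statement as printed is false precisely at that point. Take $G=K_{2,3}$: it is two-edge connected, $P_6$-free and $C_{\geq 5}$-free, and $S=\emptyset$, so the theorem asserts it has a $\PHC_3$; but $K_{2,3}$ is bipartite of odd order, so by Theorem~\ref{thm:suf_and_nec_PHC} it has no PHC at all (equivalently, the present statement contradicts Theorem~\ref{thm:PHC3-forC5P6free} on this input). Your proposed fix --- repairing a violation of \eqref{eq:bipartite_feasibility} on a bipartite block by an odd cycle --- requires $G$ to be non-bipartite, so it is unavailable here; and even for non-bipartite $G$ it fails when the offending bipartite block is separated from every odd cycle by bridges. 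Concretely, let $G$ be a $C_4$ and a triangle joined by a single bridge: this graph is $P_6$-free, $C_{\geq 5}$-free, non-bipartite, and has $S=\emptyset$, yet any $\PHC_3$ must use the bridge an even and positive number of times, hence exactly twice, which induces on the $C_4$-block the map taking value $0$ at the attachment vertex and $2$ at the other three vertices; this violates \eqref{eq:bipartite_feasibility} (the two sides sum to $2$ and $0$ modulo $4$), so by Proposition~\ref{prop:bipartite_feasibility} no $\PHC_3$ exists. An odd cycle in the other block cannot help, since crossing the bridge an odd number of times is impossible for a closed walk. So the honest conclusion of your analysis is that the theorem must be restated --- the parity of each bipartite block relative to its bridge attachments, in the sense of \eqref{eq:bipartite_feasibility}, has to enter the characterization --- rather than that a finer gluing argument will close the gap.
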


\subsection{All-roundness of dense graphs}\label{sec:dense}
 This section shows another application of Lemma~\ref{crl:univ_all_edge}. 
\begin{prop}  \label{thm:half_all-round}
 Let $G=(V,E)$ be a connected graph 
  where $|V| \geq 3$ and 
  the minimum degree of $G$ is at least $|V|/2$. 
 Then, $G$ is all-round, or bipartite all-round if $G$ is bipartite. 
\end{prop}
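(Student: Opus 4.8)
The lemma reduces all-roundness to a local condition: it suffices to exhibit, for every edge $e \in E$, some subgraph $H$ containing $e$ that is itself all-round or bipartite all-round. The natural candidate for $H$ is a small cycle through $e$, since we already know from Propositions~\ref{lem:k3_univ} and~\ref{lem:c4_c6_univ} that $C_3$ is all-round and $C_4, C_6$ are bipartite all-round. So the real content is a degree-based argument showing that the minimum degree condition $\delta(G) \ge |V|/2$ forces every edge to lie on a short cycle of the right length.

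**First I would handle the non-bipartite case.** Fix an edge $e = \{u,w\}$. Because $d(u), d(w) \ge |V|/2$, the neighborhoods $N(u)$ and $N(w)$ are large; a standard inclusion-exclusion count gives $|N(u) \cap N(w)| \ge |N(u)| + |N(w)| - |V| \ge 0$, and in fact when the degrees are strictly above $|V|/2$ the common neighborhood is nonempty, yielding a common neighbor $t$ and hence a triangle $C_3 = u\,w\,t$ containing $e$. Since $C_3$ is all-round, Lemma~\ref{crl:univ_all_edge} then immediately gives that $G$ is all-round. The delicate point is the boundary case: when $|V|$ is even and both degrees equal exactly $|V|/2$, the common neighborhood may be empty, so a triangle need not exist through $e$. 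In that situation I would instead produce a $C_4$ through $e$: pick $t \in N(u) \setminus (N(w) \cup \{w\})$ and $s \in N(w) \setminus (N(u) \cup \{u\})$, and argue by another degree count that $t$ and $s$ can be chosen adjacent, giving the $4$-cycle $u\,t\,s\,w$. A $C_4$ is bipartite all-round, which is still acceptable input to Lemma~\ref{crl:univ_all_edge}.

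**The bipartite case is where I expect the main obstacle.** If $G = (A, B; E)$ is bipartite there are no triangles at all, so every witnessing subgraph must be an even cycle, and Lemma~\ref{crl:univ_all_edge} can only certify bipartite all-roundness. For an edge $e = \{u, w\}$ with $u \in A$, $w \in B$, I would show that the minimum degree condition forces a common neighbor structure producing a $C_4$ or $C_6$ through $e$: since $d(u) \ge |V|/2 = (|A| + |B|)/2$ and $u$'s neighbors all lie in $B$, $u$ is adjacent to more than half of $B$ unless $|A|$ is small, and symmetrically for $w$; a two-step expansion argument then locates a short even cycle. The obstacle is that in a bipartite graph the two sides may be very unbalanced, which strains the counting, so I would first record that $\delta(G) \ge |V|/2$ forces $\big||A| - |B|\big|$ to be small (each vertex on the larger side needs $\ge |V|/2$ neighbors on the smaller side, bounding the imbalance). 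With the sides nearly balanced, the neighborhood-intersection counts again force a common neighbor of $u$ and $w$'s respective partners, yielding a $C_4$ (or, if the exact boundary count fails, a $C_6$). Once every edge is shown to lie on a $C_4$ or $C_6$, Lemma~\ref{crl:univ_all_edge} concludes that $G$ is bipartite all-round, completing both cases.
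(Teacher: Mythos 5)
Your proposal is correct and takes essentially the same route as the paper: invoke Lemma~\ref{crl:univ_all_edge} after using the degree hypothesis and an inclusion--exclusion/pigeonhole count to place every edge on a $C_3$ (when a common neighbor exists) or a $C_4$ (when $N(u)$ and $N(w)$ are disjoint, picking $t \in N(u)$ and a neighbor $s$ of $t$ in $N(w)$). The only difference is cosmetic: the paper does not treat bipartite graphs separately, since the disjoint-neighborhood $C_4$ argument applies verbatim there (indeed $\delta(G)\geq |V|/2$ forces a bipartite $G$ to be balanced complete bipartite), so your separate bipartite case and the fallback to $C_6$ are unnecessary.
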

\begin{proof}
 We show that every edge $e = uv \in E$ is contained in a cycle of length three or four, 
  then Lemma~\ref{crl:univ_all_edge} implies that $G$ is all-round. 
 If $|V|$ is odd, 
  the degree of each $u$ and $v$ is strictly greater than $|V|/2$ by the hypothesis. 
 This implies that $u$ and $v$ has a common neighbor $w$ by the pigeon hole principle. 
 Thus any edge is contained in a cycle of length three. 
  
 Suppose $|V|$ is even. 
 If $u$ and $v$ has a common neighbor, then we obtain the claim. 
 If it is not the case, 
  we can observe that $|N(u) \setminus \{v\}|=|N(v) \setminus \{u\}|=|V|/2-1$ holds. 
 Let $w \in N(u) \setminus \{v\}$,
  then $d(w) \geq |V|/2$ implies that $w$ is connected to a vertex in $N(v) \setminus \{u\}$
  by the pigeon hole principle. 
 Thus we obtain a cycle of the length four in the case. 
\end{proof}

 In fact, we can show the following stronger theorem 
  with some complicated arguments (see Appendix~\ref{apdx:third_all-round} for the proof).  
\begin{thm}  \label{thm:third_all-round}
 Let $G=(V,E)$ be a connected graph 
  where $|V| \geq 4$ and 
  the minimum degree of $G$ is at least $|V|/3$. 
 Then, $G$ is all-round, or bipartite all-round if $G$ is bipartite. 
\end{thm}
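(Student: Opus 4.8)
The plan is to invoke the edge-covering criterion of Lemma~\ref{crl:univ_all_edge}: writing $n=|V|$, it suffices to exhibit, for every edge $e\in E$, a subgraph $H\ni e$ that is all-round or bipartite all-round. This is exactly the strategy that already proved Proposition~\ref{thm:half_all-round}, where the stronger bound (minimum degree $\ge n/2$) put every edge in a $C_3$ or $C_4$, and that drove Theorem~\ref{thm:c5_univ} and Lemma~\ref{prop:thm:p6_allround}. Under the weaker hypothesis of minimum degree $\ge n/3$ an edge need no longer lie on a triangle or on a $C_4$, so the certificate $H$ will sometimes be a longer chorded cycle assembled through the ear lemmas. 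Graphs with bridges are dealt with first via Theorem~\ref{lem:suf_all-round_bridge}: one checks that minimum degree $\ge n/3$ rules out a vertex all of whose incident edges are bridges (such a vertex would create at least $n/3$ pairwise disjoint branches of at least $n/3$ vertices each, impossible once $n\ge 10$, with the small orders checked directly), so $G-B$ has no isolated vertex, and it remains to see that each two-edge-connected block is all-round. This reduces the whole statement to the two-edge-connected case.

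So fix a two-edge-connected $G$ and an edge $e=uv$, and let $C=c_0c_1\cdots c_{\ell-1}$ with $c_0c_1=e$ be a \emph{shortest} cycle through $e$, of length $\ell$. The combinatorial core is to bound $\ell$ by a constant. First, $C$ is chordless: a chord would split $C$ into two arcs, and the arc carrying $e$ together with the chord would be a strictly shorter cycle through $e$. Second, if two cycle vertices $c_i,c_j$ lie at distance at least $3$ along the arc of $C$ avoiding $e$ and shared a common neighbour $w$, then routing through $w$ and the $e$-arc would again give a shorter cycle through $e$; hence any two such vertices have disjoint neighbourhoods. Choosing cycle vertices spaced three apart therefore yields $\lfloor \ell/3\rfloor$ vertices with pairwise disjoint neighbourhoods, each of size at least $n/3$, and $\lfloor \ell/3\rfloor\cdot(n/3)\le n$ forces $\ell\le 10$. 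Thus every edge lies on a cycle of bounded length.

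It then remains to certify all-roundness for the finitely many lengths $\ell\le 10$. For $\ell\in\{3,4,6\}$ the cycle $C$ is itself all-round or bipartite all-round by Propositions~\ref{lem:k3_univ} and~\ref{lem:c4_c6_univ}. For the other lengths I would reuse the case analysis of Lemma~\ref{prop:thm:p6_allround} almost verbatim: the degree surplus around $C$ produces additional edges (playing the role that $P_6$-freeness played there in forcing chords), and feeding the resulting short cycles and short ears into the ear-decomposition lemmas (Lemmas~\ref{lem:less7_univ}, \ref{lem:bip_less7_bip_all-round} and~\ref{lem:bip_less3_all-round}, with the configurations of Figures~\ref{fig:P6free-1}--\ref{fig:P6free-3}) produces an all-round or bipartite all-round $H\ni e$. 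In the bipartite case the same construction is run using only the even bases $C_4,C_6$ and even ears, so that bipartite all-roundness, and hence condition~\eqref{eq:bipartite_feasibility}, is respected.

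The main obstacle is the genuinely $C_5$-flavoured configuration: when the shortest cycle through $e$ is an induced $C_5$ it is \emph{not} all-round (Proposition~\ref{prop:C5_counterexample}) and, being chordless, offers no chord to exploit. As in the $\ell=5$ subcase of Lemma~\ref{prop:thm:p6_allround}, one must instead spend the degree surplus to locate an auxiliary vertex or edge that attaches a short ear (an incident $C_3$, or a second $C_5$ sharing one or more edges), and then certify the enlarged subgraph by Lemmas~\ref{lem:less7_univ} and~\ref{lem:bip_less3_all-round}. Establishing that such auxiliary structure always exists under minimum degree $\ge n/3$, verifying that the two-edge-connected blocks inherit enough density to be all-round in the bridge reduction, and carrying the bipartite versus non-bipartite bookkeeping through every chord pattern are where the argument becomes technical; this is the content deferred to the appendix.
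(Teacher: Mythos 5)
Your reduction to the two-edge-connected case and your girth-type bound are sound in spirit (modulo one small repair: two cycle vertices \emph{can} share a common neighbour lying on $C$ itself, namely an endpoint of $e$; choosing the spaced vertices in the interior of the arc avoiding $e$ fixes this). The genuine gap is in the step you defer: producing an all-round or bipartite all-round $H\ni e$ when the shortest cycle $C$ through $e$ has length $5$ or $7$--$11$. By your own construction $C$ is \emph{chordless}, so the case analysis of Lemma~\ref{prop:thm:p6_allround} cannot be reused ``almost verbatim'': every configuration in Figures~\ref{fig:P6free-1}--\ref{fig:P6free-3} is driven by chords, which $P_6$-freeness forces there and which are absent here precisely because $C$ is shortest. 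Hence \emph{all} of these residual cases, not only the ``$C_5$-flavoured'' one, need auxiliary structure outside $C$, and its existence under minimum degree $\geq |V|/3$ is exactly what you never establish --- this is the mathematical core of the theorem. Worse, for $\ell\geq 8$ a chordless cycle cannot even be absorbed as a single ear on a small all-round base, since Lemma~\ref{lem:less7_univ} caps ears at length seven and that cap is sharp: on $C_8$ one can exhibit $f$ satisfying \eqref{eq:bipartite_feasibility} (e.g.\ $f$ alternating $0,3,0,3,\dots$) for which every mod-4 $f$-factor has at least two zero entries, hence none is connected. So those cases require breaking $C$ by external paths, an argument you do not sketch. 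Finally, note that for $n=5$ your hypothesis admits $G=C_5$ itself, which is \emph{not} all-round (Proposition~\ref{prop:C5_counterexample}), so the deferred step is not merely technical but impossible there; any correct write-up must isolate this case (and, in fairness, the paper's own claim that brute force settles $n\leq 7$ also overlooks it).

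For contrast, the paper's proof avoids long chordless cycles altogether by switching from covering \emph{edges} to covering \emph{vertices}: for $n\geq 8$ every vertex lies on a triangle or a square (Lemma~\ref{lem:all_v_triangle_or_square}), and these all-round or bipartite all-round pieces are then merged pairwise, using two-edge connectivity and Lemma~\ref{lem:exists_pair_with_2_bridges} to find \emph{two} edges joining some pair of pieces, and the connecting lemmas (Lemmas~\ref{lem:univ_connect_graphs_ver1}--\ref{lem:univ_connect_graphs_ver3}) to certify that each merged subgraph is again all-round or bipartite all-round; bridges are handled separately, much as in your first paragraph. The crucial advantage is that merging two pieces through two vertex-disjoint connecting edges never requires the pieces to intersect or to lie on a common short cycle, which is exactly where your edge-covering plan gets stuck.
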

 
\subsection{Connection of Parity Hamiltonian, Hamiltonian, Eulerian}\label{sec:connection}
We remark the connection between the PHC problem and the HC problem, or other related topics\footnote{
 The argument of this section might be appropriate appearing in the section of Concluding Remarks. 
 However, it is too long to put there, and we discuss just before the concluding remark. 
}. 
In fact, we are concerned with the following generalized version of Problem~\ref{prob:gen}. 
\begin{problem}[connected mod-$d$ $f$-factor with edge capacity constraints]\label{prob:gen2}
 Given a graph $G=(V,E)$, a map $z \colon E \to \mathbb{Z}_{\geq 0}$,  
 a positive integer $d$, and a map $f \colon V \to \mathbb{Z}_{\geq 0}$, 
  find $\Vec{x} \in \mathbb{Z}_{\geq 0}^E$ satisfying the conditions that 
\begin{align} 
 & \sum_{e \in \delta(v)} x_e \equiv f(v) \pmod{d} && \mbox{ for any $v \in V$, } \label{eq:modulo_constraint2} \\
 & \mbox{$(G, \Vec{x})$ is connected. } \label{eq:connectivity_constraint2} \\
 & x_e \leq z(e) && \mbox{ for any $e \in E$. }  \label{eq:capacity_constraint2} 
\end{align}
\end{problem}
 The PHC problem is given by setting $d=4$ and $f(v)=2$ for any $v \in V$ with an appropriate capacity constraint. 
 Problem~\ref{prob:gen} is given by setting $d=4$ and $z(e)=3$ for any $e \in E$. 
 The Hamiltonian cycle problem is represented by setting 
  $d=n$, $f(v)=2$ for any $v \in V$, and $z(e)=1$ for any $e \in E$. 
 We remark that 
  the Hamiltonian cycle problem is also given (in its original form) by setting 
  $d= \infty$, $f(v)=2$ for any $v \in V$, and 
  removing \eqref{eq:capacity_constraint2} (or setting $z(e)= \infty$ for any $e \in E$). 
 The Eulerian cycle problem is given by 
  setting $d=2$, $f(v)=0$ for any $v \in V$, and 
  replacing \eqref{eq:capacity_constraint2} with $x_e=1$ for any $e \in E$. 

 A two-factor plays a key role in the arguments of the HC problem in cubic graphs, 
  where the connectivity constraint is relaxed~\cite{Hartvigsen,HL11,BSvS,BIT13}. 
 Motivated by a {\em connected} ``factor,'' 
  this paper has investigated connected mod-$4$ factors. 
 A mod-$d$ factor for prime $d$ is an interesting future work.

\section{Concluding Remarks}
 In this paper, we have introduced the parity Hamiltonian cycle problem. 
 We have shown that the problem is in \PP\/ when $\z \geq 4$, 
  while {\NP}-complete when $\z \leq 3$. 
 Then, we are involved in the case $\z=3$, and 
   showed some graph classes for which the problem is in \PP. 
 It is open if the $\PHC_3$ problem is in \PP\ for three-edge connected graphs. 

 More sophisticated arguments on the connection between the PHC problem and related topics, 
  such as HC, $T$-join, even-factors, extended complexities, jump systems, etc., 
  are significant future works. 

\section*{Acknowledgements}
This work is partly supported by JSPS KAKENHI
Grant Number 15K15938, 25700002, 15H02666, and
Grant-in-Aid for Scientific Research on Innovative Areas
MEXT Japan Exploring the Limits of Computation (ELC).

\appendix 
\section{Finding A $T$-Join in Linear Time} \label{alg-Tjoin}
 In this section we describe an linear-time algorithm to find a $T$-join of a graph.
 The algorithm is described in Algorithm~\ref{alg:makeTjoin}.
 
\begin{algorithm}[tbhp]
 \caption{Linear Time Algorithm to Find a $T$-join}
 \label{alg:makeTjoin}
 \begin{algorithmic}
 \STATE MAKE-$T$-JOIN $(G,T)$
 \FORALL{$v \in V(G)$} \STATE $v.color \leftarrow {\rm WHITE}$ \ENDFOR
 \STATE $exchange\_flg \leftarrow {\rm FALSE}$
 \STATE $J \leftarrow \emptyset$
 \FORALL{$v \in V(G)$}
	\IF{$v.color = {\rm WHITE}$}
		\STATE MAKE-$T$-JOIN-REC $(G,T,v)$
		\IF{$exchange\_flg = {\rm TRUE}$} \STATE {\sf return} FALSE \ENDIF
	\ENDIF
 \ENDFOR
 \STATE {\sf return} $J$
 \end{algorithmic}
 \end{algorithm}
 
 \begin{algorithm}[thbp]
 \caption{Body of the Algorithm}
 \label{alg:makeTjoinRec}
 \begin{algorithmic}
 \STATE MAKE-$T$-JOIN-REC $(G,T,v)$
 \STATE $v.color \leftarrow {\rm BLACK}$
 \IF{$v \in T$} \STATE $exchange\_flg \leftarrow \overline{exchange\_flg}$ \ENDIF
 \FORALL{$u \in N(v)$}
	\IF{$u.color = {\rm WHITE}$}
		\IF{$exchange\_flg = {\rm TRUE}$} \STATE $J \leftarrow J \bigtriangleup \{u, v\}$ \ENDIF
		\STATE MAKE-$T$-JOIN-REC $(G,T,u)$
		\IF{$exchange\_flg = {\rm TRUE}$} \STATE $J \leftarrow J \bigtriangleup \{u, v\}$ \ENDIF
	\ENDIF
 \ENDFOR
 \end{algorithmic}
 \end{algorithm}
 
 The input of Algorithm~\ref{alg:makeTjoin} is
  a pair of an undirected graph $G$\footnote{ Here we does not assume that $G$ is connected.} 
  and a vertex set $T \subseteq V$.
 Algorithm~\ref{alg:makeTjoin} outputs a $T$-join $J$ if it exists,
  or {\rm FALSE} otherwise.

 Algorithm~\ref{alg:makeTjoin} essentially runs the depth first search,
  determining an edge should be picked for or removed from $J$
  while tracing edges.
 Algorithm~\ref{alg:makeTjoin} uses two variables, $color$ and $exchange\_flg$.
 $color$ is an attribute of each vertex $v$
  and it takes value of WHITE or BLACK.
 $v.color =$ WHITE indicates that $v$ is not visited by the algorithm,
 otherwise $v$ is already visited.
 $exchange\_flg$ is a boolean variable
  which determines an edge the algorithm is looking now should be picked for $J$ or not.
 If $exchange\_flg$ is TRUE the algorithm does the following;
  if $e \notin J$ then picks $e$ for $J$,
  otherwise removes $e$ from $J$.
 If $exchange\_flg$ is FALSE the algorithm does nothing for $e$.
 
 Since Algorithm~\ref{alg:makeTjoin} runs the depth first search with constant overheads,
  it clearly terminates in linear time.
 We show that the algorithm works correctly.
 
 \begin{prop} \label{prop:makeTjoin-correctness}
 Algorithm~\ref{alg:makeTjoin} returns a $T$-join of $G$ if it exists,
 or FALSE otherwise.
 \end{prop}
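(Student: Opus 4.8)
The plan is to prove that the recursion implements the classical spanning-forest description of a $T$-join: orient the depth-first forest produced by the search, and put a tree edge into $J$ precisely when the subtree hanging below it contains an odd number of $T$-vertices. First I would fix notation: the calls to \textsc{Make-$T$-join-Rec} build a DFS forest $F$ on $G$, and each tree edge $\{v,u\}$ with $u$ a child of $v$ is examined exactly once, namely inside the loop of the call on $v$ at the moment $u$ is still {\rm WHITE}. For a vertex $v$ let $\tau(v) \in \{0,1\}$ denote the parity of $|T \cap V_v|$, where $V_v$ is the vertex set of the subtree of $F$ rooted at $v$.

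The key step is an invariant for the global flag. I would prove by induction on the recursion that one top-level call \textsc{Make-$T$-join-Rec}$(G,T,v)$ flips $exchange\_flg$ a net total of $\tau(v)$ times: on entry it toggles once iff $v \in T$, and the {\rm WHITE} neighbours it then recurses on are exactly the children of $v$ in $F$, so the inductive hypothesis gives net change $[v \in T] \oplus \bigoplus_{c} \tau(c) = \tau(v)$. Writing $b_0$ and $b_1$ for the value of $exchange\_flg$ immediately before and immediately after the recursive call on a child $u$, the code applies $J \leftarrow J \bigtriangleup \{v,u\}$ once if $b_0$ is {\rm TRUE} and once if $b_1$ is {\rm TRUE}; since $\{v,u\}$ is touched nowhere else, its net membership is $b_0 \oplus b_1 = \tau(u)$. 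Hence $\{v,u\} \in J$ iff $\tau(u)$ is odd, exactly as intended.

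With the edge rule in hand the degree parities are a short computation. For a non-root vertex $w$ with children $c_1,\dots,c_k$, exactly the incident tree edges contribute to $d_J(w)$, giving $d_J(w) = \tau(w) + \sum_i \tau(c_i)$; reducing mod $2$ and using $\tau(w) = [w \in T] \oplus \bigoplus_i \tau(c_i)$ yields $d_J(w) \equiv [w \in T] \pmod 2$, as required. For a root $r$ the parent edge is absent, so $d_J(r) \equiv \bigoplus_i \tau(c_i) \equiv \tau(r) \oplus [r \in T] \pmod 2$, and the join condition at $r$ holds iff $\tau(r)=0$, i.e.\ iff the component rooted at $r$ contains an even number of $T$-vertices. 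Finally I would check the rejection logic: $exchange\_flg$ is {\rm FALSE} before each top-level call, so after the call on a component's root it equals that component's $T$-parity, and the guard returns {\rm FALSE} exactly when some component has an odd number of $T$-vertices. Since a $T$-join can exist only when every component has an even $T$-count (a handshake argument on $\sum_{v} d_J(v)$ restricted to each component), the algorithm rejects precisely the infeasible instances and otherwise outputs a valid $T$-join.

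The part I expect to require the most care is the flag invariant together with the double symmetric difference around each recursive call: one must argue that the {\rm WHITE} neighbours enumerated in the body of \textsc{Make-$T$-join-Rec}$(G,T,v)$ are exactly the DFS children of $v$, so that back edges are silently ignored and every tree edge is toggled from its parent side alone, and that the single persistent global flag, never reset between components, is nonetheless correct because each feasible component returns it to {\rm FALSE}. Everything after that is the routine mod-$2$ degree bookkeeping above.
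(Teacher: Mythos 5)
Your proof is correct, but it follows a genuinely different route from the paper's. You argue directly on the DFS forest: the flag invariant (a recursive call on $v$ changes $exchange\_flg$ by the parity $\tau(v)$ of $T$-vertices in the subtree below $v$), the resulting explicit edge rule ($\{v,u\}\in J$ iff the subtree at the child $u$ has odd $T$-count, back edges never entering $J$), and then mod-$2$ degree bookkeeping at non-roots and roots separately. The paper instead doubles every edge of $G$ to form a multigraph $H$, views the depth-first search as a closed walk of $H$ using each copy once, lets $J'\subseteq E(H)$ be the copies traversed while the flag is TRUE, and observes that $e\in J$ iff exactly one of its two copies lies in $J'$; since the flag inverts exactly at first visits to $T$-vertices, $|\delta_H(v)\cap J'|$ is odd precisely for $v\in T$, and the congruence $|\delta_H(v)\cap J'|\equiv|\delta_G(v)\cap J|\pmod 2$ finishes the argument. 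Your induction buys an explicit combinatorial description of the output (it is the classical $T$-join consisting of tree edges whose lower subtree has odd $T$-parity) and a more careful treatment of two points the paper leaves implicit: that the single global flag is valid across components because the immediate-return guard guarantees it is FALSE at each top-level call, and that rejection happens exactly on the per-component infeasibility condition (the paper's phrase ``$G$ does not have a $T$-join, i.e., $|T|$ is odd'' is accurate only after its reduction to connected graphs). The paper's walk-based argument, in exchange, is shorter and needs no subtree induction, fitting the Euler-tour viewpoint used throughout the rest of the paper.
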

 \begin{proof}
 It suffices to show that
  Algorithm~\ref{alg:makeTjoin} works correctly on connected graphs;
  if $G$ has more than one connected components
  we can apply the correctness proof for each connected component.
 Suppose $G$ does not have a $T$-join, i.e., $|T|$ is odd.
 Since the number of inversion of $exchange\_flg$ is equal to $|T|$,
  the value of $exchange\_flg$ is inverted an odd number of times.
 Then Algorithm~\ref{alg:makeTjoinRec} ends the recursion with $exchange\_flg$ being TRUE,
  Algorithm~\ref{alg:makeTjoin} returns FALSE.
 
 Suppose $G$ has a $T$-join, i.e., $|T|$ is even.
 Let $H$ be a graph obtained by doubling every edge in $G$.
 We can regard the depth first search on $G$
  as a connected closed walk of $H$ which traces each edge exactly once.
 Let $e_1, e_2 \in E(H)$ be the copies of $e \in E(G)$,
  and let $J^\prime \subseteq E(H)$ be a set of edges
  which are passed when $exchange\_flg$ is TRUE.
 Then $e$ is a member of the output of Algorithm~\ref{alg:makeTjoin} $J$
  if and only if exactly one of $e_1$ and $e_2$ is a member of $J^\prime$, that is,
 \begin{equation} \label{eq:J_and_Jprime}
 J = \{ e \in E(G) \mid (e_1 \in J^\prime) \oplus (e_2 \in J^\prime) \}.
 \end{equation}
 The inversion of $exchange\_flg$ occurs if and only if the algorithm visits $v \in T$ first time,
  which causes $|\delta_H(v) \cap J^\prime|$ to be odd
  for each $v \in T$.
 On the other hand,
  it holds that $|\delta_H(v) \cap J^\prime|$ is even for every $v \notin T$,
  since $exchange\_flg$ is not inverted when the algorithm visits $v$.
 By \eqref{eq:J_and_Jprime}, 
  $|\delta_H(v) \cap J^\prime| \equiv |\delta_G(v) \cap J| \pmod{2}$ for each $v \in V(G)$;
  $|\delta_G(v) \cap J| \equiv 1 \pmod{2}$ for $v \in T$ and
  $|\delta_G(v) \cap J| \equiv 0 \pmod{2}$ for $v \notin T$.
 Thus $J$ is a $T$-join of $G$.
 \end{proof}

\section{Proof of Theorem~\ref{thm:third_all-round}} \label{apdx:third_all-round}
 To prove Theorem~\ref{thm:third_all-round},
  we need some more lemmas in the following. 
 
\begin{lem} \label{lem:univ_connect_graphs_ver1}
 Let $I$ be a graph given by connecting $G$ and $H$ by an edge,
  where $G$ and $H$ are all-round. 
 Then, $I$ is all-round.
\end{lem}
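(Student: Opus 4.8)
The plan is to treat the single connecting edge $b=\{g,h\}$ (with $g\in V(G)$ and $h\in V(H)$) as a bridge of $I$ whose weight I fix first, and then to reduce the construction of a connected mod-4 $f$-factor of $I$ to two independent instances on $G$ and $H$, invoking their all-roundness. Since $b$ is the only edge joining $V(G)$ and $V(H)$, any \emph{connected} factor of $I$ must assign $x_b>0$; this positivity requirement is the crux of the whole argument.

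Concretely, for a target map $f\colon V(I)\to\{0,1,2,3\}$ satisfying \eqref{eq:fiseven}, I would pick a value $t\in\{1,2,3\}$ for $x_b$ and define reduced maps $f_G$ on $G$ and $f_H$ on $H$ by leaving $f$ unchanged except at the two endpoints, where I set $f_G(g)=(f(g)-t)\bmod 4$ and $f_H(h)=(f(h)-t)\bmod 4$. The point of this subtraction is that, after the bridge contributes $t$ back at $g$ and at $h$, the parity condition \eqref{eq:modulo_constraint} at those two vertices recovers $f(g)$ and $f(h)$ exactly.

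The key bookkeeping step is to confirm that $f_G$ and $f_H$ are themselves feasible, i.e.\ satisfy \eqref{eq:fiseven}. Writing $S_G=\sum_{v\in V(G)}f(v)$ and $S_H=\sum_{v\in V(H)}f(v)$, I would check that $\sum_{v\in V(G)}f_G(v)\equiv S_G+t$ and $\sum_{v\in V(H)}f_H(v)\equiv S_H+t \pmod 2$, so both reductions are feasible precisely when $t\equiv S_G\equiv S_H\pmod 2$. Here the global hypothesis \eqref{eq:fiseven} on $f$ gives that $S_G+S_H$ is even, hence $S_G\equiv S_H\pmod 2$, so a single parity of $t$ serves both sides. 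I would then take $t=2$ when this common parity is even and $t=1$ when it is odd; in either case $t\in\{1,2,3\}$ is positive, which is exactly what connectivity across the bridge demands.

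With $t$ fixed, all-roundness of $G$ and $H$ (Definition~\ref{dfn:all-roundness}) yields connected mod-4 factors $\Vec{x}^G$ and $\Vec{x}^H$ realizing $f_G$ and $f_H$. Assembling $\Vec{x}$ on $I$ as $\Vec{x}^G$ on $E(G)$, $\Vec{x}^H$ on $E(H)$, and $x_b=t$, the parity condition holds at every vertex (trivially in the interior of each side, and by the designed subtraction at $g$ and $h$), while connectivity follows because each side is connected and $x_b>0$ joins them. I expect the main obstacle to be precisely the tension between needing $x_b>0$ for connectivity and needing the induced endpoint parities to keep both $f_G$ and $f_H$ feasible; the observation that $S_G\equiv S_H\pmod 2$ is what resolves it, guaranteeing that an admissible positive value of $t$ always exists.
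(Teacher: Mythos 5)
Your proposal is correct and follows essentially the same route as the paper's proof: the paper also fixes the bridge value to $1$ or $2$ according to the common parity of $\sum_{v\in V(G)}f(v)$ and $\sum_{v\in V(H)}f(v)$ (which agree by \eqref{eq:fiseven}), subtracts that value at the two endpoints via \eqref{eq:f_prime}, and invokes all-roundness of $G$ and $H$ to glue two connected mod-4 factors across the positive bridge. Your write-up merely makes explicit the feasibility bookkeeping that the paper leaves terse.
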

 \begin{proof}
 Let $e$ be the edges connecting $G$ and $H$.
 Notice that 
  $\sum_{v \in V(G)} f(v) \equiv \sum_{v \in V(H)} f(v) \pmod{2}$ since $f$ satisfies \eqref{eq:fiseven}. 
 Let $x_e = 1$ if $\sum_{v \in V(G)} f(v) \equiv 1 \pmod{2}$, 
  otherwise let $x_e = 2$. 
 Let $f^\prime$ be defined by \eqref{eq:f_prime}, 
  then $f'$ satisfies \eqref{eq:fiseven}. 
 Since $G$ and $H$ are all-round, $I$ has a connected mod-4 $f$-factor. 
\end{proof}

\begin{lem} \label{lem:univ_connect_graphs_ver2}
 Let $I$ be a graph given by connecting $G$ and $H$ by two edges,
  where $G$ is all-round and $H$ is bipartite all-round. 
 Then, $I$ is all-round.
\end{lem}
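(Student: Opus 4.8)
The plan is to mimic the proof of Lemma~\ref{lem:univ_connect_graphs_ver1}: assign edge-count values to the two connecting edges so that the residual demands restricted to $G$ and to $H$ satisfy exactly the hypotheses needed to invoke all-roundness of $G$ and bipartite all-roundness of $H$, and then glue the two resulting connected factors along a positively-used connecting edge. First I would fix notation, writing the connecting edges as $e_1 = \{g_1,h_1\}$ and $e_2 = \{g_2,h_2\}$ with $g_1,g_2 \in V(G)$ and $h_1,h_2 \in V(H)$; simplicity of $I$ forbids $g_1=g_2$ and $h_1=h_2$ holding simultaneously, so at most one side coincides. Given an arbitrary $f\colon V(I)\to\{0,1,2,3\}$ satisfying \eqref{eq:fiseven}, I introduce residual maps $f_G$ and $f_H$ that agree with $f$ except at the boundary, where $f_G(g_i)=(f(g_i)-x_{e_i})\bmod 4$ and $f_H(h_i)=(f(h_i)-x_{e_i})\bmod 4$. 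Any connected mod-4 $f_G$-factor on $G$, any connected mod-4 $f_H$-factor on $H$, and the chosen values $x_{e_1},x_{e_2}$ then combine into a mod-4 $f$-factor $\Vec{x}$ of $I$, since the amount subtracted at each $g_i$ and $h_i$ is restored by the incident connecting edge.

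The heart of the argument is choosing $x_{e_1},x_{e_2}\in\{0,1,2,3\}$ so that (i) $f_G$ satisfies \eqref{eq:fiseven}, (ii) $f_H$ satisfies \eqref{eq:bipartite_feasibility}, and (iii) at least one of $x_{e_1},x_{e_2}$ is nonzero. Writing $s=\sum_{v\in V(G)}f(v)$ and $D=\sum_{v\in U}f(v)-\sum_{v\in W}f(v)\bmod 4$ for the bipartition $H=(U,W)$, condition (i) reduces to $x_{e_1}+x_{e_2}\equiv s\pmod 2$, while condition (ii) becomes a single congruence $\epsilon_1 x_{e_1}+\epsilon_2 x_{e_2}\equiv D\pmod 4$, where $\epsilon_i=+1$ if $h_i\in U$ and $\epsilon_i=-1$ if $h_i\in W$.

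The key compatibility observation, which I expect to be the crux, is that the global hypothesis \eqref{eq:fiseven} on $I$ forces $s\equiv D\pmod 2$: indeed $\sum_{v\in V(H)}f(v)\equiv D\pmod 2$, and $s+\sum_{v\in V(H)}f(v)=\sum_{v\in V(I)}f(v)$ is even. Since each $\epsilon_i$ is odd, reducing (ii) modulo $2$ yields precisely (i), so condition (i) is automatic once (ii) holds, and I only need to realize the single mod-4 congruence (ii) by a pair with a nonzero coordinate. This is always possible: for any nonzero target every solution already has a nonzero coordinate, and for target $0$ one replaces the solution $(0,0)$ by $(2,2)$, which satisfies (ii) regardless of the signs $\epsilon_i$.

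With such $x_{e_1},x_{e_2}$ in hand, all-roundness of $G$ supplies a connected mod-4 $f_G$-factor and bipartite all-roundness of $H$ supplies a connected mod-4 $f_H$-factor; each support spans its vertex set and is connected, and the positively-used connecting edge joins the two supports, so $(I,\Vec{x})$ is connected and spans $V(I)$. The remaining work is routine bookkeeping: tracking the signs $\epsilon_i$ in (ii) and handling the degenerate cases $g_1=g_2$ or $h_1=h_2$, which merge two contributions at a single vertex but leave the parity computation $s\equiv D\pmod 2$ and the realizability of (ii) unchanged. This completes the construction of the desired connected mod-4 $f$-factor of $I$.
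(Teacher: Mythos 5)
Your proposal is correct and follows essentially the same route as the paper's proof: choose $x_{e_1},x_{e_2}$ so that the residual demand on $G$ satisfies \eqref{eq:fiseven} and the residual demand on $H$ satisfies \eqref{eq:bipartite_feasibility}, then invoke all-roundness of $G$ and bipartite all-roundness of $H$ and glue along a positively-used connecting edge. In fact your writeup is more careful than the paper's, which leaves implicit both the solvability of the two congruences (your observation that \eqref{eq:fiseven} on $I$ forces $s\equiv D\pmod 2$, so the mod-$2$ condition follows from the mod-$4$ one) and the requirement that at least one of $x_{e_1},x_{e_2}$ be nonzero for connectivity.
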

 \begin{proof}
 Let $e_1, e_2$ be the edges connecting $G$ and $H$.
 Set the value of $x_{e_1}$ and $x_{e_2}$ to satisfy the following two conditions:
 \begin{equation*}
 \sum_{v \in V(G)} f^\prime(v) \equiv 0 \pmod{2}
 \end{equation*}
 and
 \begin{equation*}
 \sum_{v \in U(H)} f^\prime(v) \equiv \sum_{v \in V(H)} f^\prime(v) \pmod{4}
 \end{equation*}
 where $f^\prime$ is defined by \eqref{eq:f_prime} and
  $U(H)$ and $V(H)$ is the color classes of $H$.
 Since $G$ is all-round and $H$ is bipartite all-round,
  $G \cup H$ has a connected mod-$4$ $f^\prime$-factor $x^*$.
 By a combination of $x^*$ and $x_{e_1}, x_{e_2}$, we obtain a connected mod-$4$ $f$-factor of $G$. 
\end{proof}

\begin{lem} \label{lem:univ_connect_graphs_ver3}
 Let $I$ be a graph given by connecting $G$ and $H$ by two edges,
  where $G$ and $H$ are bipartite all-round. 
 Then, $I$ is all-round or bipartite all-round.
\end{lem}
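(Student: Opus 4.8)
The plan is to follow the template of Lemmas~\ref{lem:univ_connect_graphs_ver1} and~\ref{lem:univ_connect_graphs_ver2}: assign values to the two bridge edges, push the parity correction into $G$ and $H$ via \eqref{eq:f_prime}, and then invoke the bipartite all-roundness of each part. Write the two connecting edges as $e_1=\{a_1,b_1\}$ and $e_2=\{a_2,b_2\}$ with $a_1,a_2\in V(G)$ and $b_1,b_2\in V(H)$ (the two $G$-endpoints, or the two $H$-endpoints, may coincide). Fix bipartitions of $G$ and $H$ with colour classes $A_G,B_G$ and $A_H,B_H$, and set $\sigma(v)=+1$ if $v$ lies in the first class of its part and $\sigma(v)=-1$ otherwise. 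Writing $D_G=\sum_{v\in A_G}f(v)-\sum_{v\in B_G}f(v)$ and $D_H=\sum_{v\in A_H}f(v)-\sum_{v\in B_H}f(v)$, the bipartite feasibility \eqref{eq:bipartite_feasibility} of $G$ (resp.\ $H$) reads $D_G\equiv 0$ (resp.\ $D_H\equiv 0$) modulo $4$.

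Given a target map $f$, I would pick $x_{e_1},x_{e_2}\in\{0,1,2,3\}$, define $f'$ by \eqref{eq:f_prime} so that $f'$ differs from $f$ only at the bridge endpoints, and require that $f'$ restricted to $G$ and to $H$ each satisfy the bipartite feasibility of that part. Since every bridge has exactly one endpoint in $G$ and one in $H$, these two requirements become the linear congruences
\begin{equation*}
\sigma(a_1)x_{e_1}+\sigma(a_2)x_{e_2}\equiv D_G \pmod 4,\qquad \sigma(b_1)x_{e_1}+\sigma(b_2)x_{e_2}\equiv D_H \pmod 4,
\end{equation*}
with coefficients $\pm 1$ (coinciding endpoints cause no change, as only the $G$- resp.\ $H$-endpoint of each bridge enters). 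If I can solve this with $(x_{e_1},x_{e_2})\neq(0,0)$, then the bipartite all-roundness of $G$ and of $H$ yields \emph{connected} mod-$4$ $f'$-factors on the two parts, and gluing them along a bridge carrying positive value produces a connected mod-$4$ $f$-factor of $I$.

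The $2\times 2$ coefficient matrix has entries $\pm 1$, so its determinant $\sigma(a_1)\sigma(b_2)-\sigma(a_2)\sigma(b_1)$ lies in $\{-2,0,2\}$; it vanishes exactly when the two bridges are compatible with a common $2$-colouring, i.e.\ when $I$ is bipartite, and equals $\pm 2$ otherwise. In the bipartite case the two congruences become proportional (their $x_{e_2}$-coefficients agree after normalising by $\sigma(a_1)$ and $\sigma(b_1)$), and the consistency of the resulting single congruence is precisely the hypothesis \eqref{eq:bipartite_feasibility} on $f$ for $I$; one congruence in two $\pm1$-coefficient unknowns mod $4$ has four solutions, of which at most one is $(0,0)$, so a nonzero solution exists and $I$ is bipartite all-round.

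The non-bipartite case is the main obstacle, since the determinant $\pm 2$ is a zero-divisor mod $4$: the system is not invertible and need not be solvable for arbitrary right-hand sides. Eliminating $x_{e_1}$ collapses the system to a single congruence of the form $2x_{e_2}\equiv r\pmod 4$, where $r\equiv \sigma(a_1)D_G-\sigma(b_1)D_H\equiv D_G+D_H\equiv \sum_{v\in V(I)}f(v)\pmod 2$. Here the all-round hypothesis \eqref{eq:fiseven} forces $\sum_{v\in V(I)}f(v)$ to be even, hence $r$ is even and the congruence has exactly two solutions (differing by $2$ in $x_{e_2}$); the remaining variable $x_{e_1}$ is then determined, and at most one of the two resulting pairs is $(0,0)$, so a nonzero solution again exists and $I$ is all-round. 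The crux is exactly this parity bookkeeping, which rescues the non-invertible system precisely because $I$ non-bipartite goes together with the weaker ``$\sum f$ even'' hypothesis.
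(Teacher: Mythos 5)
Your proof is correct and follows essentially the same route as the paper's: the paper likewise assigns values $x_{e_1},x_{e_2}$ to the two connecting edges so that the induced $f'$ of \eqref{eq:f_prime} satisfies bipartite feasibility on each side (its three cases of which colour classes the edges join correspond exactly to your congruence system), and then invokes the bipartite all-roundness of $G$ and $H$ to glue connected factors along a positive bridge. The only difference is presentational: your determinant/elimination analysis treats the paper's Cases 1--3 uniformly and, in the non-bipartite case, explicitly verifies via \eqref{eq:fiseven} that the singular system $2x_{e_2}\equiv r \pmod 4$ is solvable with a solution other than $(0,0)$, a point the paper's Case 3 asserts without proof.
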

\begin{proof}
 Let $e_1, e_2$ be the edges connecting $G$ and $H$,
  and let $G=(A_1, B_1, E(G))$, $H=(A_2, B_2, E(H))$.
 There are three cases of which components $e_1$ and $e_2$ connect
 (see Figures~\ref{ver3_case1}, \ref{ver3_case2} and~\ref{ver3_case3}).

\medskip
\noindent{\bf Case 1.} 
If both $e_1$ and $e_2$ connect $A_1$ and $B_2$,
 set the value of $x_{e_1}, x_{e_2}$ to satisfy
\begin{equation}
\sum_{v \in A_1} f(v) - \sum_{v \in B_1} f(v) \equiv x_{e_1} + x_{e_2} \pmod{4}
\end{equation}
and at least one of $x_{e_1}$ or $x_{e_2}$ is not equal to zero.

\medskip
\noindent{\bf Case 2.} 
If $e_1$ connects $A_1$ and $B_2$,
 and $e_2$ connects $A_2$ and $B_1$,
 set the value of $x_{e_1}, x_{e_2}$ to satisfy
\begin{equation}
\sum_{v \in A_1} f(v) - \sum_{v \in B_1} f(v) \equiv x_{e_1} - x_{e_2} \pmod{4}
\end{equation}
and at least one of $x_{e_1}$ or $x_{e_2}$ is not equal to zero.

\medskip
\noindent{\bf Case 3.} 
If $e_1$ connects $A_1$ and $B_1$,
 and $e_2$ connects $A_1$ and $B_2$,
 set the value of $x_{e_1}, x_{e_2}$ to satisfy both
\begin{equation}
\sum_{v \in A_1} f(v) - \sum_{v \in B_1} f(v) \equiv x_{e_1} + x_{e_2} \pmod{4}
\end{equation}
and
\begin{equation}
\sum_{v \in A_2} f(v) - \sum_{v \in B_2} f(v) \equiv x_{e_1} - x_{e_2} \pmod{4},
\end{equation}
and at least one of $x_{e_1}$ or $x_{e_2}$ is not equal to zero.

Let $f^\prime$ be defined by \eqref{eq:f_prime},
 $G \cup H$ has a connected mod-$4$ $f^\prime$-factor $x^*$.
By combining $x^*$ and $x_{e_1}, x_{e_2}$, 
 we obtain a connected mod-$4$ $f$-factor of $G$.
\end{proof}

\begin{figure}[htbp]
	\begin{tabular}{cc}
		\begin{minipage}{0.3\textwidth}
			\begin{center}
				\includegraphics[width=40mm]{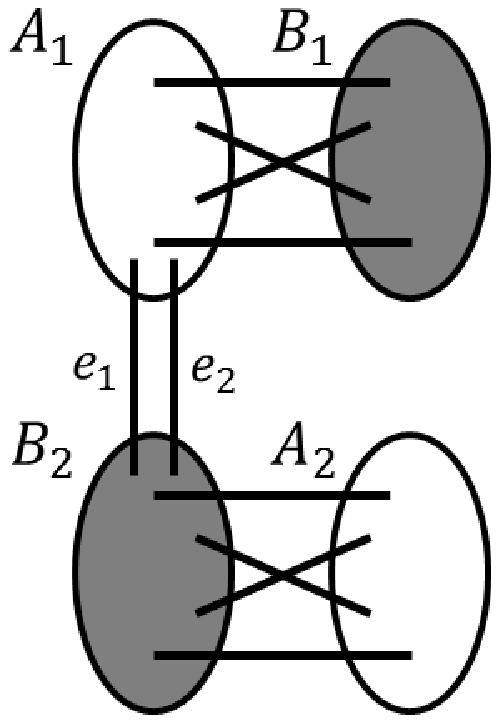}
				\caption{Case 1. \label{ver3_case1}}
			\end{center}
		\end{minipage}
		\begin{minipage}{0.3\textwidth}
			\begin{center}
				\includegraphics[width=40mm]{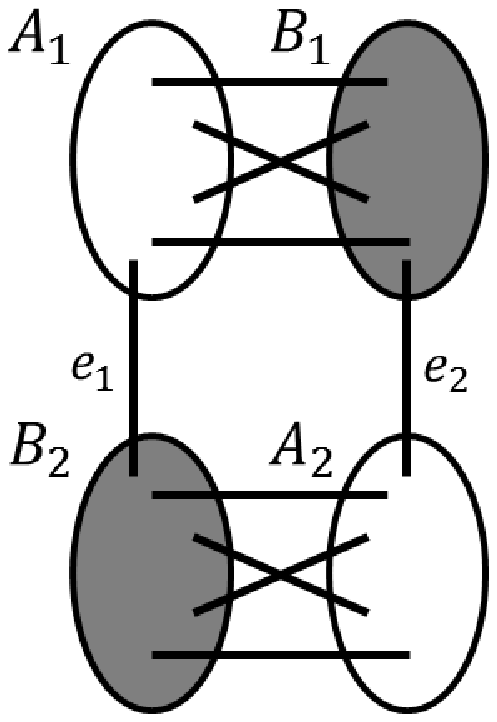}
				\caption{Case 2. \label{ver3_case2}}
			\end{center}
		\end{minipage}
		\begin{minipage}{0.3\textwidth}
			\begin{center}
				\includegraphics[width=40mm]{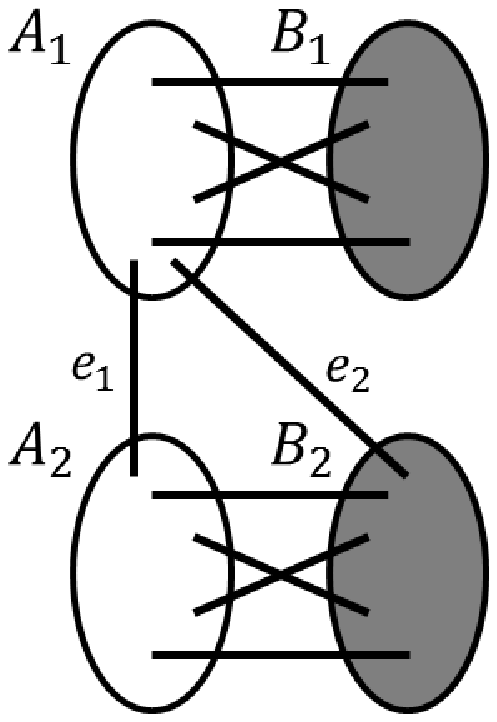}
				\caption{Case 3. \label{ver3_case3}}
			\end{center}
		\end{minipage}
	\end{tabular}
\end{figure}

 We also use the following two lemmas (we omit the proofs).
 
 \begin{lem} \label{lem:all_v_triangle_or_square}
 Let $G=(V,E)$ be a connected graph where $|V| \geq 8$ and 
  the minimum degree of $G$ is at least $|V|/3$.
 Then, every {\em vertex} in $G$ belongs to a triangle or a square.
 \end{lem}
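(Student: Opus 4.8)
The plan is to argue by contradiction: fix an arbitrary vertex $v$, suppose it lies in neither a triangle nor a square, and derive a degree bound that is incompatible with the minimum-degree hypothesis once $n := |V| \geq 8$. Write $d = d(v)$, and recall that $d(u) \geq n/3$ for every $u \in V$.

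First I would translate the two forbidden configurations into structural statements about $N(v)$. If $v$ is in no triangle, then no two neighbors of $v$ are adjacent, i.e.\ $N(v)$ is an independent set. If $v$ is in no square, then no two distinct neighbors $a, a' \in N(v)$ have a common neighbor other than $v$; for if some $x \neq v$ were adjacent to both $a$ and $a'$, then $v\,a\,x\,a'\,v$ would be a $4$-cycle through $v$.

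Next I would count the edges leaving $N(v)$. Set $R = V \setminus (\{v\} \cup N(v))$, so $|R| = n - 1 - d$. Since $N(v)$ is independent, every neighbor of a vertex $a \in N(v)$ other than $v$ lies in $R$, whence $|N(a) \cap R| = d(a) - 1 \geq n/3 - 1$. Since $v$ is in no square, the sets $\{N(a) \cap R : a \in N(v)\}$ are pairwise disjoint subsets of $R$. Summing over the $d$ neighbors of $v$ gives
\[
 d\left(\frac{n}{3} - 1\right) \leq \sum_{a \in N(v)} |N(a) \cap R| \leq |R| = n - 1 - d,
\]
which simplifies to $dn/3 \leq n - 1$, i.e.\ $d \leq 3 - 3/n < 3$, forcing $d \leq 2$. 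This contradicts $d \geq n/3 \geq 8/3 > 2$ (so $d \geq 3$) when $n \geq 8$. Hence every vertex belongs to a triangle or a square.

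The main thing to get right will be the bookkeeping in the counting step: one must check that the square built from a shared external neighbor genuinely uses four distinct vertices (which holds because $x \in R$ forces $x \notin \{v, a, a'\}$ and $a \neq a'$, $a,a' \neq v$), and that the disjointness really follows from the no-square assumption rather than merely from independence. The threshold is also worth confirming: the extremal example $C_5$ is triangle- and square-free with minimum degree $2 \geq 5/3$, so some lower bound on $n$ is genuinely necessary, and the displayed inequality is exactly what fails for $n \geq 8$.
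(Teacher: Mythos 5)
Your proof is correct. There is nothing in the paper to compare it against: the paper states this lemma in the appendix with the remark ``we omit the proofs,'' so your argument actually fills a gap the authors left open. The steps all check out: absence of a triangle through $v$ makes $N(v)$ independent, so every neighbor of $a \in N(v)$ other than $v$ lies in $R = V \setminus (\{v\} \cup N(v))$, giving $|N(a) \cap R| \geq n/3 - 1$; absence of a square through $v$ makes the sets $N(a) \cap R$ pairwise disjoint (a common neighbor $x \in R$ of $a \neq a'$ yields the $4$-cycle $v\,a\,x\,a'\,v$ on four distinct vertices, since $x \in R$ forces $x \notin \{v,a,a'\}$); summing the disjoint sets inside $R$ gives $d(n/3 - 1) \leq n - 1 - d$, hence $d \leq 3 - 3/n < 3$, contradicting $d \geq n/3 \geq 8/3$, which forces the integer $d \geq 3$. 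Two minor remarks: your tightness witness $C_5$ does show some lower bound on $n$ is needed, but $C_6$ (girth $6$, minimum degree exactly $6/3 = 2$) is the sharper example showing the lemma genuinely fails at $n = 6$; and your own inequality already yields the conclusion for $n = 7$ (there $d \geq 3$ while the count forces $d \leq 3 - 3/7 < 3$), so the paper's hypothesis $|V| \geq 8$ is slightly stronger than your argument requires.
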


 \begin{lem} \label{lem:exists_pair_with_2_bridges}
 Let $G=(V,E)$ be a connected graph where $|V| \geq 4$ and 
  the minimum degree of $G$ is at least $|V|/3$. 
 Let $C_1, \ldots, C_k$ ($k \geq 3$) be induced subgraphs of $G$
  such that $\bigcup_{i=1}^k V(C_i) = V$ and $V(C_i) \cap V(C_j) = \emptyset$ for any $i \neq j$. 
 Then there exists a pair $C_i, C_j (i \neq j)$ such that $|\delta(V(C_i)) \cap \delta(V(C_j))| \geq 2$. 
 \end{lem}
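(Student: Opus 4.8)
The plan is to argue by contradiction through a double count of the edges leaving each part. Write $n=|V|$, $S_i=V(C_i)$ and $s_i=|S_i|$, so that $\sum_{i=1}^{k}s_i=n$ with every $s_i\ge 1$. Since the $S_i$ are pairwise disjoint, for $i\ne j$ the set $\delta(V(C_i))\cap\delta(V(C_j))$ is precisely the set of edges of $G$ with one endpoint in $S_i$ and the other in $S_j$, so the statement is equivalent to: some pair $S_i,S_j$ is joined by at least two edges. Suppose instead that every pair is joined by at most one edge. Fixing a part $S_i$, each $v\in S_i$ has $d_G(v)\ge n/3$ and at most $s_i-1$ neighbours inside $S_i$, hence at least $n/3-s_i+1$ edges incident to $v$ leave $S_i$; summing over $v\in S_i$ shows that at least $s_i(n/3-s_i+1)$ edges leave $S_i$. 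On the other hand, by the assumption at most one edge joins $S_i$ to each of the remaining $k-1$ parts, so at most $k-1$ edges leave $S_i$. This yields the key inequality
\[
 s_i\Bigl(\tfrac{n}{3}-s_i+1\Bigr)\le k-1\qquad(1\le i\le k).
\]

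Next I would exploit this inequality structurally. The quadratic $h(s)=s(n/3-s+1)$ is a downward parabola with $h(1)=h(n/3)=n/3$, so $h(s)\ge n/3$ for every integer $s$ with $1\le s\le n/3$. Because $\sum_i s_i=n$ and $k\ge3$, at most two of the parts can have size larger than $n/3$; thus at least $k-2$ parts are \emph{small} (of size at most $n/3$), and the key inequality applied to any small part already forces $k-1\ge n/3$. Summing the key inequality over all $i$ and using $\sum_i s_i=n$ gives $\tfrac{n^2}{3}+n-\sum_i s_i^2\le k(k-1)$, while $h(s_i)\le k-1\le n-1$ forces each $s_i$ into one of two regimes---bounded by a small constant, or at least about $n/3$---with at most two parts in the large regime. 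Combining the size dichotomy with the Cauchy--Schwarz bound $\sum_i s_i^2\ge n^2/k$ confines $k$ and the multiset of sizes to a narrow window, and the contradiction would be read off from this window.

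The hard part is precisely closing this final window, because the key inequality is met with equality by genuine threshold configurations. The sharpest example is $C_6$ partitioned into three adjacent pairs: there $n/3=2$, $k=3$, every part has $h(s_i)=2=k-1$, and every pair of parts is joined by exactly one edge, so no contradiction is available from the counting alone. Consequently a complete proof must either strengthen the density slightly (the counting in the key inequality is strict unless every vertex of $S_i$ has exactly $n/3-s_i+1$ external neighbours and the edges leaving $S_i$ hit distinct parts) or invoke the extra structure present in the intended application, where the parts arise from the triangle/square decomposition of Lemma~\ref{lem:all_v_triangle_or_square} and hence cannot be arbitrary. I would therefore isolate the finitely many boundary-tight cases---small $n$, or size multisets making every $h(s_i)=k-1$---and dispatch them by hand, leaving the generic case to the size-window analysis above.
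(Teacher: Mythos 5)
Your double count is set up correctly and the key inequality $s_i(n/3-s_i+1)\le k-1$ is right, but the reason you could not close the argument is decisive: the lemma as stated is \emph{false}, and the configuration you politely call a ``threshold configuration'' is in fact a counterexample, not a boundary case awaiting a finer argument. The graph $C_6$ with parts $\{1,2\},\{3,4\},\{5,6\}$ satisfies every hypothesis ($|V|=6\ge 4$, connected, minimum degree $2=|V|/3$, $k=3$ pairwise disjoint induced subgraphs covering $V$), yet every pair of parts is joined by exactly one edge. An even simpler counterexample, immune to any strengthening of the degree bound, is $K_4$ with its four vertices taken as four singleton parts: in a simple graph two singletons can never be joined by two edges. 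So the plan in your last sentence---isolate the boundary-tight cases and ``dispatch them by hand''---cannot be carried out; those cases are not hard, they are false, and you should have stated this as a refutation rather than a difficulty. (For comparison purposes: the paper offers no proof at all here; this is one of the two lemmas whose proofs the authors explicitly omit, and the omission conceals exactly the flaw your analysis exposes.)

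The repair is the one you gesture at in your penultimate sentence, and your own counting completes it once the hypothesis is strengthened. In the only place Lemma~\ref{lem:exists_pair_with_2_bridges} is used (Case 2 of the proof of Theorem~\ref{thm:third_all-round}), the parts come from Lemma~\ref{lem:all_v_triangle_or_square} and are triangles or squares, so one may add the hypothesis $|V(C_i)|\ge 3$ for all $i$. Then $k\ge 3$ forces $n\ge 9$ and $k\le n/3$; since at most two parts can have size exceeding $n/3$, some part has size $s$ with $3\le s\le n/3$, and concavity of $h(s)=s(n/3-s+1)$ gives $h(s)\ge\min\bigl(h(3),h(n/3)\bigr)=\min(n-6,\,n/3)=n/3$ for $n\ge 9$. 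If every pair of parts were joined by at most one edge, your key inequality would give $h(s)\le k-1\le n/3-1<n/3$, a contradiction. So under the added size hypothesis your double count is already a complete proof; without it, both your proposal and the paper's unproved statement are unsalvageable.
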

 
 Now we prove Theorem~\ref{thm:third_all-round}.
 
 
 \begin{proof} [Proof of Theorem~\ref{thm:third_all-round}]
 Let $n = |V|$.
 We can check 
  Theorem~\ref{thm:third_all-round} is true for any graph with $n \leq 7$ 
  by the brute force search. 
 Suppose $n \geq 8$.
 We consider two cases whether $G$ has a bridge or not\footnote{
  $G$ has at most one bridge since $\delta(G) \geq n/3$.}.
 
 \medskip
 \noindent{\bf Case 1.}
 $G$ has a bridge.

 Let $e$ be the bridge and
  $C_1, C_2$ be the two components of $G-e$
 (Note that neither $C_1$ nor $C_2$ is bipartite since $\delta(G) \geq n/3$).
 Then $|V(C_i)| \geq n/3 + 1$ for each $i = 1$ and $2$,
  therefore $n/3 + 1 \leq |V(C_i)| \leq (2/3)n - 1$ $(i=1,2)$.

 If $|V(C_i)| \leq (2/3)n - 2$ for both $i = 1,2$,
 \begin{equation*}
 \delta(C_i) \geq \frac{n}{3} - 1 = \frac{1}{2}\left( \frac{2}{3}n - 2\right) \geq \frac{1}{2}|V(C_i)|
 \end{equation*}
 holds for each $i = 1,2$,
  Theorem~\ref{thm:half_all-round} implies the all-roundness of $C_1$ and $C_2$.
 Thus, by Lemma~\ref{lem:univ_connect_graphs_ver1},
  $G$ is all-round.

 Suppose $|V(C_1)| = (2/3)n - 1$.
 Then $|V(C_2)| = (1/3)n+1 \leq (2/3)n - 2$, therefore $C_2$ is all-round as same as the previous case.
 Let $v \in V(C_1)$ be an end vertex of $e$.
 Then we have
 \begin{equation*}
 \delta(C_1 - v) \geq \frac{n}{3} - 1 = \frac{1}{2}\left( \frac{2}{3}n - 2\right) \geq \frac{1}{2}|V(C_i - v)|,
 \end{equation*}
  hence $C_1 - v$ is all-round.
 Since $C_1$ is obtained by adding ears of length at most two to $C_1 - v$,
  $C_1$ is all-round by Lemma~\ref{lem:less7_univ}.
 Thus $C_1$ and $C_2$ are both all-round,
  by Lemma~\ref{lem:univ_connect_graphs_ver1} $G$ is all-round.

 \medskip
 \noindent{\bf Case 2.}
 $G$ has no bridges.

 If $G$ has no bridges, $G$ is two-edge connected. 
 By Lemma~\ref{lem:all_v_triangle_or_square},
  every vertex of $G$ belongs to a triangle or a square.
 Let $C_1 ,\ldots, C_k$ be the triangles and squares,
  all-round (or bipartite-all-round) subgraphs of $G$.
 If $k = 2$,
  There are at least two edges connecting $C_1, C_2$ since $G$ is two-edge connected,
  thus $G$ is all-round by Lemmas~\ref{lem:univ_connect_graphs_ver1}, \ref{lem:univ_connect_graphs_ver2} and \ref{lem:univ_connect_graphs_ver3}.
 If $k \geq 3$,
  we can find a pair $C_i, C_j$ which has at least two connecting edges between them by Lemma~\ref{lem:exists_pair_with_2_bridges}.
 By combining $C_i$ and $C_j$ together with the connecting edges,
  we obtain a larger subgraph $C_i^\prime$ which is all-round or bipartite all-round,
  by Lemmas~\ref{lem:univ_connect_graphs_ver1}, \ref{lem:univ_connect_graphs_ver2} and \ref{lem:univ_connect_graphs_ver3}.
 This operation reduces the number of the subgraphs $k$ by one,
 by repeatedly taking this operation
 then we can reduce the case to $k = 2$.
 \end{proof}
\end{document}